\RequirePackage{snapshot}
\documentclass[11pt,twoside,english,british]{article}
\usepackage[T1]{fontenc}
\usepackage[latin9]{inputenc}
\usepackage[a4paper]{geometry}
\usepackage{fancyhdr}
\usepackage{color}
\usepackage{babel}
\usepackage{verbatim}
\usepackage{refstyle}
\usepackage{booktabs}
\usepackage{mathrsfs}
\usepackage{mathtools}
\usepackage{enumitem}
\usepackage{amsmath}
\usepackage{amsthm}
\usepackage{amssymb}
\usepackage{graphicx}
\usepackage{setspace}
\usepackage[authoryear,longnamesfirst]{natbib}
\usepackage{xargs}[2008/03/08]
\usepackage[pdfusetitle,
 bookmarks=true,bookmarksnumbered=false,bookmarksopen=false,
 breaklinks=false,pdfborder={0 0 0},pdfborderstyle={},backref=false,colorlinks=false]
 {hyperref}

\makeatletter

\AtBeginDocument{\providecommand\thmref[1]{\ref{thm:#1}}}
\AtBeginDocument{\providecommand\eqref[1]{\ref{eq:#1}}}
\AtBeginDocument{\providecommand\secref[1]{\ref{sec:#1}}}
\AtBeginDocument{\providecommand\subsecref[1]{\ref{subsec:#1}}}
\AtBeginDocument{\providecommand\remref[1]{\ref{rem:#1}}}
\AtBeginDocument{\providecommand\appref[1]{\ref{app:#1}}}
\AtBeginDocument{\providecommand\assref[1]{\ref{ass:#1}}}
\AtBeginDocument{\providecommand\exaref[1]{\ref{exa:#1}}}
\AtBeginDocument{\providecommand\propref[1]{\ref{prop:#1}}}
\AtBeginDocument{\providecommand\figref[1]{\ref{fig:#1}}}
\AtBeginDocument{\providecommand\tabref[1]{\ref{tab:#1}}}
\AtBeginDocument{\providecommand\lemref[1]{\ref{lem:#1}}}
\providecommand{\tabularnewline}{\\}
\RS@ifundefined{subsecref}
  {\newref{subsec}{name = \RSsectxt}}
  {}
\RS@ifundefined{thmref}
  {\def\RSthmtxt{theorem~}\newref{thm}{name = \RSthmtxt}}
  {}
\RS@ifundefined{lemref}
  {\def\RSlemtxt{lemma~}\newref{lem}{name = \RSlemtxt}}
  {}

\numberwithin{equation}{section}
\numberwithin{figure}{section}
\numberwithin{table}{section}
\theoremstyle{remark}
\newtheorem*{notation*}{\protect\notationname}
\theoremstyle{plain}
\newtheorem{thm}{\protect\theoremname}[section]
\theoremstyle{remark}
\newtheorem{rem}{\protect\remarkname}[section]
\theoremstyle{definition}
\newtheorem{example}{\protect\examplename}[section]
\theoremstyle{plain}
\newtheorem{assumption}{\protect\assumptionname}
\theoremstyle{plain}
\newtheorem{prop}{\protect\propositionname}[section]
\theoremstyle{plain}
\newtheorem{lem}{\protect\lemmaname}[section]

\setlist[enumerate,1]{label=\upshape{(\roman*)}, ref=(\roman*)}
\setlist[enumerate,2]{label=\upshape{(\alph*)}, ref=(\alph*)}
\setlist[enumerate,3]{label=\upshape{\roman*.}, ref=\roman*}

\makeatletter
  \@ifpackageloaded{refstyle}{}{\usepackage{refstyle}}
\makeatother

\newref{thm}{name = Theorem~}
\newref{prop}{name = Proposition~}
\newref{lem}{name = Lemma~}
\newref{cor}{name = Corollary~}
\newref{ass}{name = }
\newref{exa}{name = Example~}
\newref{rem}{name = Remark~}
\newref{def}{name = Definition~}

\newref{eq}{refcmd = (\ref{#1})}

\newref{sec}{name = Section~}
\newref{sub}{name = Section~}
\newref{subsec}{name = Section~}
\newref{app}{name = Appendix~}
\newref{supp}{name = Online Appendix~}

\newref{fig}{name = Figure~}
\newref{tbl}{name = Table~}
\newref{tab}{name = Table~}

\makeatletter
  \@ifpackageloaded{mathrsfs}{}{\usepackage{mathrsfs}}
\makeatother
\usepackage{amsbsy}

\date{}

\usepackage{nextpage}

\allowdisplaybreaks[1]

\usepackage{scalefnt}
\usepackage[group-separator={,}]{siunitx}
\newcommand\smaller[2][0.85]{{\scalefont{#1}#2}}
\newcommand{\ass}[1]{{\upshape{\smaller[0.76]{#1}}}}

\newcommand{\assumpname}[1]{%
  \renewcommand{\theassumption}{\ass{#1}}%
}

\makeatletter
\newsavebox{\@brx}
\newcommand{\dbllangle}[1][]{\savebox{\@brx}{\(\m@th{#1\langle}\)}%
  \mathopen{\copy\@brx\kern-0.5\wd\@brx\usebox{\@brx}}}
\newcommand{\dblrangle}[1][]{\savebox{\@brx}{\(\m@th{#1\rangle}\)}%
  \mathclose{\copy\@brx\kern-0.5\wd\@brx\usebox{\@brx}}}
\makeatother

\usepackage{changepage}

\theoremstyle{definition}
\newtheorem{examplex}{Example}
\renewenvironment{example}
  {\pushQED{\qed}\examplex}
  {\popQED\endexamplex}

\numberwithin{examplex}{section}

\newcommand{\exname}[1]{%
  \renewcommand{\theexamplex}{{#1}}%
}

\newcounter{subremark}[rem]
\renewcommand{\thesubremark}{(\roman{subremark})}

\newcommand{\subremark}{%
  \refstepcounter{subremark}%
  \thesubremark{}.%
}

\newcounter{savedexnumber}

\newcommand{\saveexamplex}{%
  \setcounter{savedexnumber}{\value{examplex}}
}

\newcommand{\restoreexamplex}{%
  \setcounter{examplex}{\value{savedexnumber}}
  \numberwithin{examplex}{section}
}

\usepackage{mathdots}

\usepackage[usestackEOL]{stackengine}
\setstackgap{S}{5pt}
\newcommand{\authaffil}[2]{\Shortunderstack{#1\\\small{#2}}}

\usepackage{needspace}

\newref{lr}{name = \ass{LR.}}
\newref{lrp}{name = \ass{LR.}}

\makeatother

\addto\captionsbritish{\renewcommand{\assumptionname}{Assumption}}
\addto\captionsbritish{\renewcommand{\examplename}{Example}}
\addto\captionsbritish{\renewcommand{\lemmaname}{Lemma}}
\addto\captionsbritish{\renewcommand{\notationname}{Notation}}
\addto\captionsbritish{\renewcommand{\propositionname}{Proposition}}
\addto\captionsbritish{\renewcommand{\remarkname}{Remark}}
\addto\captionsbritish{\renewcommand{\theoremname}{Theorem}}
\addto\captionsenglish{\renewcommand{\assumptionname}{Assumption}}
\addto\captionsenglish{\renewcommand{\examplename}{Example}}
\addto\captionsenglish{\renewcommand{\lemmaname}{Lemma}}
\addto\captionsenglish{\renewcommand{\notationname}{Notation}}
\addto\captionsenglish{\renewcommand{\propositionname}{Proposition}}
\addto\captionsenglish{\renewcommand{\remarkname}{Remark}}
\addto\captionsenglish{\renewcommand{\theoremname}{Theorem}}
\providecommand{\assumptionname}{Assumption}
\providecommand{\examplename}{Example}
\providecommand{\lemmaname}{Lemma}
\providecommand{\notationname}{Notation}
\providecommand{\propositionname}{Proposition}
\providecommand{\remarkname}{Remark}
\providecommand{\theoremname}{Theorem}

\begin{document}


\global\long\def\uwrite#1#2{\underset{#2}{\underbrace{#1}} }%

\global\long\def\blw#1{\ensuremath{\underline{#1}}}%

\global\long\def\abv#1{\ensuremath{\overline{#1}}}%

\global\long\def\vect#1{\mathbf{#1}}%


\global\long\def\smlseq#1{\{#1\} }%

\global\long\def\seq#1{\left\{  #1\right\}  }%

\global\long\def\smlsetof#1#2{\{#1\mid#2\} }%

\global\long\def\setof#1#2{\left\{  #1\mid#2\right\}  }%


\global\long\def\goesto{\ensuremath{\rightarrow}}%

\global\long\def\ngoesto{\ensuremath{\nrightarrow}}%

\global\long\def\uto{\ensuremath{\uparrow}}%

\global\long\def\dto{\ensuremath{\downarrow}}%

\global\long\def\uuto{\ensuremath{\upuparrows}}%

\global\long\def\ddto{\ensuremath{\downdownarrows}}%

\global\long\def\ulrto{\ensuremath{\nearrow}}%

\global\long\def\dlrto{\ensuremath{\searrow}}%


\global\long\def\setmap{\ensuremath{\rightarrow}}%

\global\long\def\elmap{\ensuremath{\mapsto}}%

\global\long\def\compose{\ensuremath{\circ}}%

\global\long\def\cont{C}%

\global\long\def\cadlag{D}%

\global\long\def\Ellp#1{\ensuremath{\mathcal{L}^{#1}}}%


\global\long\def\naturals{\ensuremath{\mathbb{N}}}%

\global\long\def\reals{\mathbb{R}}%

\global\long\def\complex{\mathbb{C}}%

\global\long\def\rationals{\mathbb{Q}}%

\global\long\def\integers{\mathbb{Z}}%


\global\long\def\abs#1{\ensuremath{\left|#1\right|}}%

\global\long\def\smlabs#1{\ensuremath{\lvert#1\rvert}}%
 
\global\long\def\bigabs#1{\ensuremath{\bigl|#1\bigr|}}%
 
\global\long\def\Bigabs#1{\ensuremath{\Bigl|#1\Bigr|}}%
 
\global\long\def\biggabs#1{\ensuremath{\biggl|#1\biggr|}}%

\global\long\def\norm#1{\ensuremath{\left\Vert #1\right\Vert }}%

\global\long\def\smlnorm#1{\ensuremath{\lVert#1\rVert}}%
 
\global\long\def\bignorm#1{\ensuremath{\bigl\|#1\bigr\|}}%
 
\global\long\def\Bignorm#1{\ensuremath{\Bigl\|#1\Bigr\|}}%
 
\global\long\def\biggnorm#1{\ensuremath{\biggl\|#1\biggr\|}}%

\global\long\def\floor#1{\left\lfloor #1\right\rfloor }%
\global\long\def\smlfloor#1{\lfloor#1\rfloor}%

\global\long\def\ceil#1{\left\lceil #1\right\rceil }%
\global\long\def\smlceil#1{\lceil#1\rceil}%


\global\long\def\Union{\ensuremath{\bigcup}}%

\global\long\def\Intsect{\ensuremath{\bigcap}}%

\global\long\def\union{\ensuremath{\cup}}%

\global\long\def\intsect{\ensuremath{\cap}}%

\global\long\def\pset{\ensuremath{\mathcal{P}}}%

\global\long\def\clsr#1{\ensuremath{\overline{#1}}}%

\global\long\def\symd{\ensuremath{\Delta}}%

\global\long\def\intr{\operatorname{int}}%

\global\long\def\cprod{\otimes}%

\global\long\def\Cprod{\bigotimes}%


\global\long\def\smlinprd#1#2{\ensuremath{\langle#1,#2\rangle}}%

\global\long\def\inprd#1#2{\ensuremath{\left\langle #1,#2\right\rangle }}%

\global\long\def\orthog{\ensuremath{\perp}}%

\global\long\def\dirsum{\ensuremath{\oplus}}%


\global\long\def\spn{\operatorname{sp}}%

\global\long\def\rank{\operatorname{rk}}%

\global\long\def\proj{\operatorname{proj}}%

\global\long\def\tr{\operatorname{tr}}%

\global\long\def\vek{\operatorname{vec}}%

\global\long\def\diag{\operatorname{diag}}%

\global\long\def\col{\operatorname{col}}%


\global\long\def\smpl{\ensuremath{\Omega}}%

\global\long\def\elsmp{\ensuremath{\omega}}%

\global\long\def\sigf#1{\mathcal{#1}}%

\global\long\def\sigfield{\ensuremath{\mathcal{F}}}%
\global\long\def\sigfieldg{\ensuremath{\mathcal{G}}}%

\global\long\def\flt#1{\mathcal{#1}}%

\global\long\def\filt{\mathcal{F}}%
\global\long\def\filtg{\mathcal{G}}%

\global\long\def\Borel{\ensuremath{\mathcal{B}}}%

\global\long\def\cyl{\ensuremath{\mathcal{C}}}%

\global\long\def\nulls{\ensuremath{\mathcal{N}}}%

\global\long\def\gauss{\mathfrak{g}}%

\global\long\def\leb{\mathfrak{m}}%


\global\long\def\prob{\ensuremath{\mathbb{P}}}%

\global\long\def\Prob{\ensuremath{\mathbb{P}}}%

\global\long\def\Probs{\mathcal{P}}%

\global\long\def\PROBS{\mathcal{M}}%

\global\long\def\expect{\ensuremath{\mathbb{E}}}%

\global\long\def\Expect{\ensuremath{\mathbb{E}}}%

\global\long\def\probspc{\ensuremath{(\smpl,\filt,\Prob)}}%


\global\long\def\iid{\ensuremath{\textnormal{i.i.d.}}}%

\global\long\def\as{\ensuremath{\textnormal{a.s.}}}%

\global\long\def\asp{\ensuremath{\textnormal{a.s.p.}}}%

\global\long\def\io{\ensuremath{\ensuremath{\textnormal{i.o.}}}}%

\newcommand\independent{\protect\mathpalette{\protect\independenT}{\perp}}
\def\independenT#1#2{\mathrel{\rlap{$#1#2$}\mkern2mu{#1#2}}}

\global\long\def\indep{\independent}%

\global\long\def\distrib{\ensuremath{\sim}}%

\global\long\def\distiid{\ensuremath{\sim_{\iid}}}%

\global\long\def\asydist{\ensuremath{\overset{a}{\distrib}}}%

\global\long\def\inprob{\ensuremath{\overset{p}{\goesto}}}%

\global\long\def\inprobu#1{\ensuremath{\overset{#1}{\goesto}}}%

\global\long\def\inas{\ensuremath{\overset{\as}{\goesto}}}%

\global\long\def\eqas{=_{\as}}%

\global\long\def\inLp#1{\ensuremath{\overset{\Ellp{#1}}{\goesto}}}%

\global\long\def\indist{\ensuremath{\overset{d}{\goesto}}}%

\global\long\def\eqdist{=_{d}}%

\global\long\def\wkc{\ensuremath{\rightsquigarrow}}%

\global\long\def\wkcu#1{\overset{#1}{\ensuremath{\rightsquigarrow}}}%

\global\long\def\plim{\operatorname*{plim}}%


\global\long\def\var{\operatorname{var}}%

\global\long\def\lrvar{\operatorname{lrvar}}%

\global\long\def\cov{\operatorname{cov}}%

\global\long\def\corr{\operatorname{corr}}%

\global\long\def\bias{\operatorname{bias}}%

\global\long\def\MSE{\operatorname{MSE}}%

\global\long\def\med{\operatorname{med}}%

\global\long\def\avar{\operatorname{avar}}%

\global\long\def\se{\operatorname{se}}%

\global\long\def\sd{\operatorname{sd}}%


\global\long\def\nullhyp{H_{0}}%

\global\long\def\althyp{H_{1}}%

\global\long\def\ci{\mathcal{C}}%


\global\long\def\simple{\mathcal{R}}%

\global\long\def\sring{\mathcal{A}}%

\global\long\def\sproc{\mathcal{H}}%

\global\long\def\Wiener{\ensuremath{\mathbb{W}}}%

\global\long\def\sint{\bullet}%

\global\long\def\cv#1{\left\langle #1\right\rangle }%

\global\long\def\smlcv#1{\langle#1\rangle}%

\global\long\def\qv#1{\left[#1\right]}%

\global\long\def\smlqv#1{[#1]}%


\global\long\def\trans{\mathsf{T}}%

\global\long\def\indic{\ensuremath{\mathbf{1}}}%

\global\long\def\Lagr{\mathcal{L}}%

\global\long\def\grad{\nabla}%

\global\long\def\pmin{\ensuremath{\wedge}}%
\global\long\def\Pmin{\ensuremath{\bigwedge}}%

\global\long\def\pmax{\ensuremath{\vee}}%
\global\long\def\Pmax{\ensuremath{\bigvee}}%

\global\long\def\sgn{\operatorname{sgn}}%

\global\long\def\argmin{\operatorname*{argmin}}%

\global\long\def\argmax{\operatorname*{argmax}}%

\global\long\def\Rp{\operatorname{Re}}%

\global\long\def\Ip{\operatorname{Im}}%

\global\long\def\deriv{\ensuremath{\mathrm{d}}}%

\global\long\def\diffnspc{\ensuremath{\deriv}}%

\global\long\def\diff{\ensuremath{\,\deriv}}%

\global\long\def\i{\ensuremath{\mathrm{i}}}%

\global\long\def\e{\mathrm{e}}%

\global\long\def\sep{,\ }%

\global\long\def\defeq{\coloneqq}%

\global\long\def\eqdef{\eqqcolon}%

\global\long\def\err{\varepsilon}%

\global\long\def\mset#1{\mathcal{#1}}%

\global\long\def\largedec#1{\mathbf{#1}}%

\global\long\def\z{\largedec z}%

\newcommandx\Ican[1][usedefault, addprefix=\global, 1=]{I_{#1}^{\ast}}%

\global\long\def\jsr{{\scriptstyle \mathrm{JSR}}}%

\global\long\def\cjsr{{\scriptstyle \mathrm{CJSR}}}%

\global\long\def\rsr{{\scriptstyle \mathrm{RJSR}}}%

\global\long\def\ctspc{\mathscr{M}}%

\global\long\def\b#1{\boldsymbol{#1}}%

\global\long\def\pseudy{\text{\ensuremath{\abv y}}}%

\global\long\def\regcoef{\kappa}%

\global\long\def\adj{\operatorname{adj}}%

\global\long\def\llangle{\dbllangle}%

\global\long\def\rrangle{\dblrangle}%

\global\long\def\smldblangle#1{\ensuremath{\llangle#1\rrangle}}%

\newcommand{\casecens}{{\upshape{(i)}}}

\newcommand{\caseclas}{{\upshape{(ii)}}}

\newcommand{\casestat}{{\upshape{(iii)}}}

\global\long\def\delcens{\mathrm{(i)}}%

\global\long\def\delclas{\mathrm{(ii)}}%

\global\long\def\smlf{f}%

\global\long\def\bigf{\b f}%

\global\long\def\fe{f}%

\global\long\def\z{\boldsymbol{z}}%

\global\long\def\ga{g}%

\global\long\def\co{\chi}%

\global\long\def\CO{\mathrm{X}}%

\global\long\def\EQ{\Theta}%

\global\long\def\eq{\theta}%

\global\long\def\ct{\psi}%

\global\long\def\cn{\mu}%

\global\long\def\cvar{{\cal M}}%

\global\long\def\codf{\chi}%

\global\long\def\eqdf{\chi_{2}}%

\global\long\def\ctdf{\chi_{1}}%

\global\long\def\set#1{\mathscr{#1}}%

\global\long\def\srp{\mathrm{H}}%

\global\long\def\srfn{h}%

\global\long\def\ch{\operatorname{co}}%

\global\long\def\im{\operatorname{im}}%

\global\long\def\lin{\mathrm{lin}}%

\global\long\def\state{\mathfrak{z}}%

\global\long\def\fespc{\mathscr{F}}%

\global\long\def\bigfspc{\pmb{\mathscr{F}}}%

\global\long\def\denspc{\mathscr{R}}%

\global\long\def\parspc{\mathscr{P}}%

\global\long\def\den{\varrho}%

\global\long\def\cden{\rho}%

\global\long\def\orths{\mathbb{O}}%

\global\long\def\sorths{\mathbb{O}^{+}}%

\global\long\def\vol{s}%

\global\long\def\trans{\top}%

\title{Identification in (Endogenously) Nonlinear SVARs\\ Is Easier Than
You Think}
\author{\authaffil{James A.\ Duffy\footnotemark[1]{}}{University of Oxford}\hspace{3cm}
\authaffil{Sophocles Mavroeidis\footnotemark[2]{}}{University
of Oxford}}
\date{\vspace*{0.3cm}April 2026}

\maketitle
\renewcommand*{\thefootnote}{\fnsymbol{footnote}}

\footnotetext[1]{Department\ of Economics and Corpus Christi College;
\texttt{james.duffy@economics.ox.ac.uk}}

\footnotetext[2]{Department of Economics and University College;
\texttt{sophocles.mavroeidis@economics.ox.ac.uk}}

\renewcommand*{\thefootnote}{\arabic{footnote}}

\setcounter{footnote}{0}
\begin{abstract}
\noindent We study identification in structural vector autoregressions
(SVARs) in which the endogenous variables enter nonlinearly on the
left-hand side of the model, a feature we term \emph{endogenous} \emph{nonlinearity},
to distinguish it from the more familiar case in which nonlinearity
arises only through exogenous or predetermined variables. This class
of models accommodates asymmetric impact multipliers, endogenous regime
switching, and occasionally binding constraints. We show that, under
weak regularity conditions, the model parameters and structural shocks
are (nonparametrically) identified up to an orthogonal transformation,
exactly as in a linear SVAR. Our results have the powerful implication
that most existing identification schemes for linear SVARs extend
directly to our nonlinear setting, with the number of restrictions
required to achieve exact identification remaining unchanged. We specialise
our results to piecewise affine SVARs, which provide a convenient
framework for the modelling of endogenous regime switching, and their
smooth transition counterparts. We illustrate our methodology with
an application to the nonlinear Phillips curve, providing a test for
the presence of nonlinearity that is robust to the choice of identifying
assumptions, and finding significant evidence for state-dependent
inflation dynamics.
\end{abstract}
\vfill

\noindent{}\thmref{shockid} supersedes a result that first appeared,
with rather stronger assumptions, as Theorem~A.1 in \citet[v2]{DM24}.

\thispagestyle{plain}

\pagenumbering{roman}

\newpage{}

\thispagestyle{plain}

\setcounter{tocdepth}{2}

{\singlespacing

\tableofcontents{}

}

\newpage{}

\newpage{}

\pagestyle{fancy}

\pagenumbering{arabic}

\section{Introduction}

For more than four decades, following the seminal contribution of
\citet{Sims80}, the linear structural vector autoregression (SVAR)
\begin{align}
\Phi_{0}z_{t} & =c+\sum_{i=1}^{k}\Phi_{i}z_{t-i}+\err_{t}, & \err_{t} & \distiid[0,I_{p}],\label{eq:svar-intro}
\end{align}
has played a central role in empirical macroeconomics. This is a
dynamic linear simultaneous equations model (SEM), in which the $p$
endogenous variables $z_{t}$ are jointly determined as a function
of their past values and the $p$ (mutually uncorrelated) structural
shocks $\err_{t}$. The latter are regarded as the exogenous inputs
to the system, so that causality is understood to run from these shocks
to current and future values of $z_{t}$, and a key object of interest
is the mapping between $\err_{t}$ and $z_{t+h}$ for $h\geq0$: the
(structural) impulse response function.

In this context, a fundamental result that characterises the extent
to which the data is informative about the model parameters, and thus
also about those impulse responses, may be phrased heuristically as
follows:
\begin{quote}
\begin{enumerate}[label=(ID),ref=(ID)]
\item \label{enu:id} Data on $\{z_{t}\}$ is sufficient to identify the
linear SVAR parameters $(c,\{\Phi_{i}\}_{i=0}^{p})$, and the structural
shocks $\err_{t}$, up to, and only up to, an orthogonal matrix $Q$.
\end{enumerate}
\end{quote}
In light of this, what might be termed the `SVAR identification problem'
becomes one of finding sufficient additional restrictions on that
matrix $Q$, so as to pin down, wholly or partially, the model parameters.
The literature since has developed a variety of ways of using macroeconomic
theory to generate such restrictions, based e.g.\ on the relative
timing of shocks, the signs of their effects on impact, their medium-
and long-run effects, and their correlation with external instruments
(for a textbook discussion of which, see \citealp{KL17book}).

The linearity of \eqref{svar-intro} is convenient, but inherently
limiting as to the nature of the dynamics that can be modelled. In
particular, it has the rather undesirable implication that the response
of the economy to shocks must be the same \emph{irrespective} of the
phase of the business cycle: so that e.g.\ an aggregate demand shock
has exactly the same effect on unemployment and inflation in the depths
of a recession, when there is considerable slack in the labour market,
as it would during periods of expansion. The substantial literature
on nonlinear (S)VAR models has arisen partly to address these limitations
(see e.g.\ \citealp{Chan09}; \citealp{TTG10}; \citealp{HT13},
for surveys). These allow the parameters of the SVAR at time $t$
to depend on an exogenous (or if not wholly exogenous, at least \emph{predetermined})
regime-switching process $s_{t-1}$, as e.g.\ in\footnote{In many treatments of these models, the regime indicator in \eqref{msvar}
is denoted as $s_{t}$, rather than $s_{t-1}$. However, a feature
of these models is that the regime is always determined prior to the
realisation of $\err_{t}$, and may thus be regarded as measurable
with respect to time-$(t-1)$ information; we have written $s_{t-1}$
to make this clearer.}
\begin{equation}
\Phi_{0}(s_{t-1})z_{t}=c(s_{t-1})+\sum_{i=1}^{k}\Phi_{i}(s_{t-1})z_{t-i}+\err_{t},\label{eq:msvar}
\end{equation}
where often $s_{t}\in\{1,\ldots,L\}$ takes finitely many values,
and each $\Phi_{i}(s)=\sum_{\ell=1}^{L}\pi_{\ell}(s)\Phi_{i}^{(\ell)}$
switches, or smoothly transitions, between the parameters of the $L$
`regimes'; here each $\pi_{\ell}(s)\in[0,1]$, with $\sum_{\ell=1}^{L}\pi_{\ell}(s)=1$.
The evolution of $\{s_{t}\}$ may be modelled as an exogenous Markov
chain (as in a Markov switching model), possibly with state-dependent
transition probabilities, or as a function of certain predetermined
variables (such as an element of $z_{t-i}$ for some $i\geq1$, as
in a typical `threshold autoregressive' model); but in any case,
$s_{t-1}$ must be determined prior to the realisation of $\err_{t}$.
We therefore refer to these henceforth as \emph{exogenous} regime-switching
SVARs. (This characterisation applies to time-varying parameter VARs,
in which $\{s_{t}\}$ is also some exogenous but possibly nonstationary
process, such as a random walk.)

While models of the form \eqref{msvar} enjoy greatly enriched dynamics
relative to \eqref{svar-intro}, here the possibility of regime switching
exacerbates the identification problem. Indeed the counterpart of
\ref{enu:id} for general Markov-switching models is that, \emph{conditional
on the regime} $s_{t-1}=s$, the parameters of \eqref{msvar} are
identified up to an orthogonal matrix $Q(s)$. Since this matrix may
vary with $s\in\{1,\ldots,L\}$, the number of unidentified parameters,
and thus the number of restrictions needed to deliver (exact) identification,
scales proportionally with the number of states $L$. In practice,
this may necessitate replicating a common set of $p(p-1)/2$ restrictions
across \emph{all} $L$ regimes (see e.g.\  \citealp{RRWZ05}, Sec.\ II;
\citealp{SZ06AER}, Sec.\ III), yielding $Lp(p-1)/2$ restrictions
in total. Similarly, in their two-regime STVAR model, \citet[p.~4]{AG12AEJ}
impose a Cholesky ordering on the elements of $z_{t}$ in each regime.

The exogeneity of the regime (i.e.\ of $s_{t-1}$) moreover restricts
the kinds of nonlinearities that may be exhibited by the model's
impulse responses. Notably, since each regime is itself a linear SVAR,
the immediate effects of the shocks (i.e.\ the impact multipliers)
must be \emph{linear} in $\err_{t}$: which in particular rules out
the possibility of sign-dependent asymmetries. It also renders \eqref{msvar}
unable to accommodate occasionally binding constraints, such as the
zero lower bound (ZLB) constraint on short-term nominal interest rates,
because the model requires the regime (whether `constrained' or
`unconstrained') to be determined \emph{prior} to realising the
value of the potentially constrained variable -- whereas, as a matter
of logic, it ought to be the value of that variable which determines
whether the model is in fact in the constrained or unconstrained regime
(see \citealp{AMSV21}).

Recently, \citet{SM21} and \citet{AMSV21} introduced the first SVAR
models involving what we here refer to as \emph{endogenous} regime
switching, which are notably distinguished from the earlier literature
on the nonlinear SVARs of the form \eqref{msvar} in that they permit
the autoregressive `regime' to be determined \emph{jointly} with
the values of the endogenous variables. For example, the `censored
and kinked SVAR' (CKSVAR) of \citet{SM21} takes the form 
\[
\phi_{0}^{+}y_{t}^{+}+\phi_{0}^{-}y_{t}^{-}+\Phi_{0}^{x}x_{t}=c+\sum_{i=1}^{k}[\phi_{i}^{+}y_{t-i}^{+}+\phi_{i}^{-}y_{t-i}^{-}+\Phi_{i}^{x}x_{t-i}]+\err_{t}.
\]
where $y_{t}^{+}$ and $y_{t}^{-}$ denote the positive and negative
parts of $y_{t}$ (a scalar), and $x_{t}$ is $(p-1)$-dimensional.
In this model there are two contemporaneous regimes: one associated
with $y_{t}>0$ (the `unconstrained' regime, in the ZLB setting),
and the other with $y_{t}\leq0$ (the `constrained' regime), and
in every period the model is solved \emph{simultaneously} for the
current values of $y_{t}$ and $x_{t}$, and for the applicable regime
(as depends on the sign of $y_{t}$). Thus in situations where the
$\err_{t}=0$ would entail a solution of $y_{t}=0$ (or approximately
so), this allows the impact multipliers of $\err_{t}$ to be asymmetric,
being dependent on which regime they push the model into.

Building on these developments, this paper proposes a new class of
nonlinear SVAR models, which have the general form
\begin{align}
\fe_{0}(z_{t}) & =\sum_{i=1}^{k}\fe_{i}(z_{t-i})+\err_{t}\label{eq:SVAR-intro}
\end{align}
where each $\fe_{i}:\reals^{p}\setmap\reals^{p}$ is a continuous,
possibly nonlinear function, with $\fe_{0}$ being invertible; we
refer to these models as `endogenously nonlinear', in view of the
nonlinearities on the l.h.s. Because it is not tied to any particular
functional form, \eqref{SVAR-intro} also offers a great deal of flexibility
in its dynamics, comparable to that offered by \eqref{msvar}. This
framework readily encompasses the CKSVAR, which corresponds to a special
case in which each $\fe_{i}$ is piecewise linear. More general models
with endogenous switching between several regimes may be straightforwardly
encompassed within the framework \eqref{SVAR-intro}, by specifying
\begin{equation}
\fe_{0}(z)=\sum_{\ell=1}^{L}\indic\{z\in\set Z^{(\ell)}\}(\bar{\phi}_{0}^{(\ell)}+\Phi_{0}^{(\ell)}z)\label{eq:f0-regime}
\end{equation}
to be an invertible, (continuous) piecewise affine function, where
$\{\set Z^{(\ell)}\}_{\ell=1}^{L}$ is a convex partition of $\reals^{p}$,
and the current regime $\ell_{t}$ corresponds to the element of that
partition for which $z_{t}\in\set Z^{(\ell_{t})}$.

The principal contribution of this paper is to characterise observational
equivalence in the setting of the following, slightly more general
formulation of \eqref{SVAR-intro},
\begin{align}
\fe_{0}(z_{t}) & =\b{\fe}_{1}(\b z_{t-1})+\err_{t}, & \err_{t} & \distiid[0,I_{p}],\label{eq:SVAR-nonsep}
\end{align}
where $\b z_{t-1}=(z_{t-1}^{\trans},\ldots,z_{t-k}^{\trans})^{\trans}$,
and $\b{\fe}_{1}:\reals^{kp}\setmap\reals^{p}$ need not be separable
in the lags of $z_{t}$ (\secref{svarident}). Remarkably, despite
the far greater flexibility afforded by the parametrisation of \eqref{SVAR-nonsep}
relative to the linear SVAR \eqref{svar-intro}, the fundamental identification
result \ref{enu:id} carries over to \eqref{SVAR-nonsep} essentially
unchanged. Under weak conditions on the functions $(\fe_{0},\b{\fe}_{1})$
and the distribution of the shocks, we have (\thmref{shockid}):
\begin{quote}
\begin{enumerate}[label=(ID$^\prime$),ref=(ID$^\prime$)]
\item \label{enu:nl-id} Data on $\{z_{t}\}$ is sufficient to identify
the nonlinear SVAR parameters $(\fe_{0},\b{\fe}_{1})$, and the structural
shocks $\err_{t}$, up to, and only up to, an orthogonal matrix $Q$.
\end{enumerate}
\end{quote}
This is a \emph{nonparametric} identification result, in the sense
that we do not suppose that $(\fe_{0},\b{\fe}_{1})$ have any particular
(known) parametric form. While its proof draws upon the microeconometrics
literature on nonlinear SEMs (see in particular \citealp{Matz09Ecta,Matz15Ecta};
\citealp{BH18Ecta}; \citealp{CGHP21JPE}), it constitutes a genuinely
novel result within that setting. \ref{enu:nl-id} has the powerful
implication that most of the existing identification results for linear
SVARs apply directly to the endogenously nonlinear SVAR, since in
both cases exact identification is a matter of imposing $p(p-1)/2$
restrictions sufficient to pin down $Q$.

There follows a discussion of the $L$-regime endogenous regime-switching
SVAR, which arises when $\fe_{0}$ is specified to have the piecewise
affine form \eqref{f0-regime}, and of how to verify the conditions
of \thmref{shockid} in this case (\secref{piecewise-affine}). (Here
we also suppose, mostly to provide a practically convenient parametrisation,
that the SVAR has the time-separable form \eqref{SVAR-intro}, with
each $\{\fe_{i}\}_{i=1}^{k}$ also specified to have the same functional
form as \eqref{f0-regime}.) In this context, our results imply that
it is sufficient, for the purposes of exact identification, to impose
identifying restrictions in \emph{only one of those $L$ regimes},
or even to distribute these in some way across those regimes. To obtain
smooth transitions between adjacent regimes, we propose to convolve
$\fe_{0}$ with a smooth kernel. This has the considerable advantage
of preserving the invertibility of $\fe_{0}$, whereas this may fail
if one attempts to smooth $\fe_{0}$ by the usual device of replacing
each indicator function in \eqref{f0-regime} by a smooth, cdf-like
function (as is very commonly done to produce `smooth transition'
(S)VARs).

Our methodology is illustrated by estimating an endogenously regime-switching
SVAR (in the log vacancy--unemployment ratio and inflation), to investigate
the possibility of a nonlinear Phillips curve relationship (\secref{phillipscurve})
that was recently proposed by \citet{BenignoEggertsson2023} to explain
the recent post-pandemic inflation surge. In particular, our identification
results allow us to examine the evidence for nonlinearity in a manner
that is robust to alternative identification assumptions, thus shedding
light on the recent debate between \citet{BenignoEggertsson2023}
and \citet{BeaudryHouPortier2025}.

Finally, we provide an extension of our results to the augmented model
\begin{align*}
\fe_{0}(z_{t}) & =\b{\fe}_{1}(\b z_{t-1}^{(1)},\b z_{t-1}^{(2)},v_{t-1})+\sigma(\b z_{t-1}^{(2)},v_{t-1})\err_{t}, & \err_{t} & \distiid[0,I_{p}],
\end{align*}
where $(\b z_{t-1}^{(1)},\b z_{t-1}^{(2)})$ is some partitioning
of $\b z_{t-1}$, and $\{v_{t}\}$ is a strictly exogenous process,
in the sense of being independent of $(\b z_{0},\{\err_{t}\})$ (\secref{hetero}).
Here $\sigma(\cdot)$ is a diagonal matrix (with strictly positive
entries), which allows the conditional variances of the structural
shocks to depend on certain predetermined variables. In this setting,
we show that \ref{enu:nl-id} continues to provide a valid characterisation
of the identification of $(\fe_{0},\b{\fe}_{1})$, and that $Q$ may
moreover be subject to further restrictions, if there is sufficient
variability in the (diagonal) entries of $\sigma(\b z_{t-1}^{(2)},v_{t-1})$;
these correspond to exactly the restrictions familiar from the \emph{linear}
SVAR literature on `identification by heteroskedasticity'. Our main
result here (\thmref{svarhetero}) not only accommodates both: (i)
ARCH-type heteroskedasticity; and (ii) the possible dependence of
the r.h.s.\ of the SVAR on an exogenous process $\{v_{t}\}$; but
also (iii) permits $\b{\fe}_{1}$ to be discontinuous in some of its
arguments (specifically, $\b z_{t-1}^{(2)}$ and $v_{t-1}$).

\begin{notation*}
$e_{m,i}$ denotes the $i$th column of an $m\times m$ identity matrix;
when $m$ is clear from the context, we write this simply as $e_{i}$.
$\smlnorm{\cdot}$ denotes the Euclidean norm on $\reals^{m}$. Matrix
norms are always those induced by the corresponding vector norm. For
a function $g:\reals^{m}\setmap\reals^{n}$, $Dg(u_{0})=[(\partial g_{i}/\partial u_{j})(u_{0})]$
denotes the $(n\times m)$ Jacobian (matrix) of $g(u)$ at $u=u_{0}$.
A `density' is always a density with respect to Lebesgue measure,
unless otherwise stated.
\end{notation*}

\section{Observational equivalence and identification}

\label{sec:svarident}

\subsection{The linear SVAR: a brief review}

\label{subsec:linearsvar}

Our point of departure is the linear SVAR, in which the observed series
$\{z_{t}\}$ are regarded as being generated linearly from an underlying
$p$-dimensional sequence of \emph{structural shocks} $\{\err_{t}\}$,
each of which have an economic interpretation (as e.g.\ an aggregate
supply shock, a monetary policy shock, etc.) That is, for some $k\in\naturals$,
\begin{equation}
\Phi_{0}z_{t}=c+\sum_{i=1}^{k}\Phi_{i}z_{t-i}+\err_{t}\eqdef c+\b{\Phi}_{1}\b z_{t-1}+\err_{t}\label{eq:linSVAR}
\end{equation}
where $z_{t}$ and $\err_{t}$ are $\reals^{p}$-valued, and to permit
a more compact presentation we have defined $\b{\Phi}_{1}\defeq[\Phi_{1},\ldots,\Phi_{k}]$
and $\b z_{t-1}\defeq(z_{t-1}^{\trans},\ldots,z_{t-k}^{\trans})^{\trans}$.

Observational equivalence in this setting being well understood (see
e.g.\ \citealp{Hamilton94}, Ch.~11; \citealp{Lut07}, Ch.~9),
our purpose here is to briefly review this in a manner that facilitates
the comparison with our results for the endogenously nonlinear SVAR,
which are developed in \subsecref{nonlinearsvar} below. To simplify
the problem, we suppose that $\{\err_{t}\}_{t\in\integers}$ is i.i.d.,
with a (Lebesgue) density $\den\in\denspc$, normalised to have $\expect\err_{t}=0$
and $\expect\err_{t}\err_{t}^{\trans}=I_{p}$. By the Markov property
the joint density of $\{z_{t}\}_{t=1}^{T}$, conditional on $\b z_{0}$,
is simply the product of the conditional densities of $z_{t}\mid\b z_{t-1}$,
for $t\in\{1,\ldots,T\}$. Under our assumptions, this density is
time-invariant, and equals
\[
\varphi_{z_{t}\mid\b z_{t-1}}(\xi\mid\b{\xi}_{-1})=\den(\Phi_{0}\xi-\b{\Phi}_{1}\b{\xi}_{-1})\cdot\smlabs{\det\Phi_{0}},
\]
where $\xi\in\reals^{p}$, $\b{\xi}_{-1}\in\reals^{kp}$. Accordingly,
we say that two alternative parameterisations of the linear SVAR,
$(c,\Phi_{0},\b{\Phi}_{1},\den)$ and $(\tilde{c},\tilde{\Phi}_{0},\tilde{\b{\Phi}}_{1},\tilde{\den})$,
are \emph{observationally equivalent} if they imply identical conditional
densities $\varphi_{z_{t}\mid\b z_{t-1}}$; in which case they also
yield identical (conditional) likelihoods, for every possible realisation
of $\{z_{t}\}$.

We then have the following well-known result, that data on $\{z_{t}\}$
identifies the SVAR coefficients $(c,\Phi_{0},\b{\Phi}_{1})$ up to,
and only up to, an orthogonal transformation. Let $\orths(p)$ denote
the set of $p\times p$ orthogonal matrices.
\begin{thm}
\label{thm:linSVAR}Let $(\tilde{c},\tilde{\Phi}_{0},\tilde{\b{\Phi}}_{1})\in\reals^{p}\times\reals^{p\times p}\times\reals^{p\times kp}$.
Then there exists a $\tilde{\den}\in\denspc$ such that $(\tilde{c},\tilde{\Phi}_{0},\tilde{\b{\Phi}}_{1},\tilde{\den})$
is observationally equivalent to $(c,\Phi_{0},\b{\Phi}_{1},\den)$
in the model \eqref{linSVAR}, if and only if there exists a $Q\in\orths(p)$
such that
\begin{align*}
\tilde{c} & =Qc & \tilde{\Phi}_{0} & =Q\Phi_{0} & \tilde{\b{\Phi}}_{1} & =Q\b{\Phi}_{1}.
\end{align*}
\end{thm}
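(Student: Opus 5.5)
The plan is to prove the two directions separately, working throughout with the conditional density $\varphi_{z_{t}\mid\b z_{t-1}}(\xi\mid\b{\xi}_{-1})=\den(\Phi_{0}\xi-c-\b{\Phi}_{1}\b{\xi}_{-1})\smlabs{\det\Phi_{0}}$; the intercept is included here, as \eqref{linSVAR} requires. I take $\Phi_{0}$ to be invertible, as is implicit in the existence of a density for $z_{t}$; the same must then hold for $\tilde{\Phi}_{0}$, since an observationally equivalent parametrisation must also generate a density.

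\emph{Sufficiency.} Given $Q\in\orths(p)$, I would set $\tilde{\den}(u)\defeq\den(Q^{\trans}u)$. This is a density because $\smlabs{\det Q}=1$. It satisfies $\int u\tilde{\den}(u)\diff u=Q\expect\err_{t}=0$ and $\int uu^{\trans}\tilde{\den}(u)\diff u=QQ^{\trans}=I_{p}$, so $\tilde{\den}\in\denspc$ (the class $\denspc$ being closed under orthogonal rotations, which I take to be the case). Then
\[
\tilde{\den}(Q\Phi_{0}\xi-Qc-Q\b{\Phi}_{1}\b{\xi}_{-1})\smlabs{\det Q\Phi_{0}}=\den(\Phi_{0}\xi-c-\b{\Phi}_{1}\b{\xi}_{-1})\smlabs{\det\Phi_{0}},
\]
so the two conditional densities coincide.

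\emph{Necessity.} Suppose the conditional densities agree for almost every $(\xi,\b{\xi}_{-1})$. I would translate this into a statement about random vectors. Conditional on $\b z_{t-1}=\b{\xi}_{-1}$, under the first parametrisation $z_{t}=\Phi_{0}^{-1}(c+\b{\Phi}_{1}\b{\xi}_{-1}+\err)$ with $\err\distrib\den$. Under the second, $z_{t}=\tilde{\Phi}_{0}^{-1}(\tilde{c}+\tilde{\b{\Phi}}_{1}\b{\xi}_{-1}+\tilde{\err})$ with $\tilde{\err}\distrib\tilde{\den}$. Equality of the conditional laws implies equality of the conditional first two moments:
\[
\Phi_{0}^{-1}(c+\b{\Phi}_{1}\b{\xi}_{-1})=\tilde{\Phi}_{0}^{-1}(\tilde{c}+\tilde{\b{\Phi}}_{1}\b{\xi}_{-1}),\qquad\Phi_{0}^{-1}\Phi_{0}^{-\trans}=\tilde{\Phi}_{0}^{-1}\tilde{\Phi}_{0}^{-\trans}.
\]
Define $Q\defeq\tilde{\Phi}_{0}\Phi_{0}^{-1}$. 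The covariance identity rearranges to $QQ^{\trans}=I_{p}$, so $Q\in\orths(p)$, and $\tilde{\Phi}_{0}=Q\Phi_{0}$ by construction. The mean identity is affine in $\b{\xi}_{-1}$. Since the densities agree for almost every $\b{\xi}_{-1}$, it holds on a set of full measure, which is rich enough to separate an affine map: I would take differences to isolate the slope and evaluate at a point to isolate the intercept. Matching coefficients gives $\tilde{c}=Qc$ and $\tilde{\b{\Phi}}_{1}=Q\b{\Phi}_{1}$.

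\emph{Main obstacle.} The only delicate step is the passage from equality of conditional densities to equality of conditional means and covariances. This needs the moment conditions built into $\denspc$, namely zero mean and identity covariance for both $\den$ and $\tilde{\den}$. It also needs the conditioning values $\b{\xi}_{-1}$ to range over all of $\reals^{kp}$, or at least over a set not contained in a proper affine subspace, rather than only over the support of the stationary distribution. I would handle this by defining observational equivalence as equality of $\varphi_{z_{t}\mid\b z_{t-1}}(\cdot\mid\b{\xi}_{-1})$ for every $\b{\xi}_{-1}$, or for almost every one. Under that definition the conclusion follows.
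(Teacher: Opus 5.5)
Your proof is correct, but it takes a different route from the paper. The paper gives no direct proof of Theorem~\ref{thm:linSVAR}. It treats the result as well known and obtains it as a special case of Theorem~\ref{thm:shockid}, which is proved via the nonlinear SEM result, Theorem~\ref{thm:matzkin}. That proof works with the entire conditional density:
\begin{itemize}
\item it differentiates the log-density identity in the lagged variables;
\item it shows the log-Jacobian difference is constant;
\item it uses the nondegenerate local maximum of the shock density to pin down $\tilde{r}_{1}$;
\item it invokes a rigidity theorem to show that $\tilde{f}_{0}\circ f_{0}^{-1}$ is affine.
\end{itemize}
That machinery is needed because, with nonlinear $f_{0}$, the conditional moments of $z_{t}$ are not tractable functions of the parameters.

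You instead exploit linearity. The conditional law of $z_{t}$ given $\boldsymbol{z}_{t-1}$ is an affine image of $\varepsilon_{t}$, so matching conditional means and covariances is enough. These moments are fixed by the zero-mean, identity-covariance normalisation in $\mathscr{R}$. Your handling of the a.e.\ qualifiers is also sound: Fubini gives agreement for a.e.\ $\boldsymbol{\xi}_{-1}$, and affine maps that agree on a dense set agree everywhere.

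What your route buys is generality in the linear case. The corollary route only covers data-generating processes that satisfy the hypotheses of Theorem~\ref{thm:shockid}. These include that $\boldsymbol{\Phi}_{1}$ has full row rank $p$ (surjectivity of $\boldsymbol{z}\mapsto c+\boldsymbol{\Phi}_{1}\boldsymbol{z}$) and that $\varrho$ has a local maximum with negative definite Hessian, together with the smoothness and positivity conditions on densities. Your argument needs none of these. It even covers $\boldsymbol{\Phi}_{1}=0$, which is exactly the case that destroys identification in the nonlinear model. This matches the paper's own remark that, because of linearity, the theorem holds under weaker conditions than those used for Theorem~\ref{thm:shockid}. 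What the paper's route buys is the conceptual point that the linear characterisation is literally a specialisation of the nonlinear one.
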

\begin{rem}
\label{rem:linearid}\subremark{} Versions of this result, or equivalent
characterisations thereof, have long been utilised in the analysis
of linear SVARs, and linear simultaneous equations models (SEMs).
This particular characterisation leads naturally to the `orthogonal
reduced-form parametrisation' (\citealp{ARRW18Ecta}, Sec.\ 2.3)
of the SVAR, in terms of the (unidentified) $Q\in\orths(p)$ and the
(identified) reduced form parameters ($\Phi_{0}^{-1}\b{\Phi}_{1}$
and $\Phi_{0}^{\trans}\Phi_{0}$), which has proved fruitful for the
analysis of sign-restricted SVARs (\citealp{Faust98CR}; \citealp{Uhlig98CR,Uhlig05JME};
\citealp{ARRW18Ecta}), and the formulation of rank conditions for
global identification (\citealp{RRWZ10REStud}).

\subremark{}\label{subrem:wn} The preceding follows as a corollary
to \thmref{shockid} below, albeit that result is proved under stronger
regularity conditions on the allowable set of densities $\denspc$.
Because of the linearity of \eqref{linSVAR}, the same result also
holds under weaker conditions on the model than we have maintained
here. For example, we could require $\{\err_{t}\}$ merely to be stationary
white noise, since all that is really needed to identify the reduced
form parameters is the orthogonality of $\err_{t}$ from $\b z_{t-1}$.
On the other hand, the assumption that $\{\err_{t}\}$ is an i.i.d.\ process,
often with a \emph{known} (often Gaussian) distribution is common
in empirical work, particularly in the context of Bayesian SVARs,
and even in discussions of identification in these models (as in e.g.\ \citealp{RRWZ10REStud}).

\subremark{} Here we have maintained only that the individual elements
of $\err_{t}=(\err_{1t},\ldots,\err_{pt})^{\trans}$ are contemporaneously
\emph{orthogonal}, rather than being independent. We thereby exclude
the possibility, highlighted in a strand of the linear SVAR literature
(e.g.\ \citealp{LMS17JoE}; \citealp{GMR20REStud}), of exploiting
the additional restrictions available when the shocks are independent
\emph{and} non-Gaussian, to strengthen the above result to one in
which the SVAR coefficients are identified up to an unknown (signed)
permutation matrix.

\subremark{} Let $k_{0}$ denote the true lag order of the SVAR,
i.e.\ the greatest $i\in\naturals$ such that $\Phi_{i}\neq0$. We
have implicitly maintained that this is less than or equal to $k$,
which may therefore be interpreted as an upper bound on the true lag
order of the model. In this sense, \thmref{linSVAR} does not assume
knowledge of the true lag order $k_{0}$ of the SVAR, but merely of
some finite upper bound $k$ thereof.
\end{rem}

\subsection{The (endogenously) nonlinear SVAR}

\label{subsec:nonlinearsvar}

We now seek to extend \thmref{linSVAR} to the setting of the following
(endogenously) \emph{nonlinear} SVAR
\begin{equation}
\fe_{0}(z_{t})=\b{\fe}_{1}(\z_{t-1})+\err_{t}\label{eq:nlVAR}
\end{equation}
where $\fe_{0}:\reals^{p}\setmap\reals^{p}$ is invertible, and $\b{\fe}_{1}:\reals^{kp}\setmap\reals^{p}$.
(As a convenient location normalisation, we set $\fe_{0}(0)=0$.)
This model evidently nests \eqref{linSVAR}, by taking $\fe_{0}(z_{t})=\Phi_{0}z_{t}$
and $\b{\fe}_{1}(\b z_{t-1})=c+\sum_{i=1}^{k}\Phi_{i}z_{t-i}$. Another
important special case arises when $\b{\fe}_{1}(\b z_{t-1})=\sum_{i=1}^{k}\fe_{i}(z_{t-i})$
is additively time-separable, as considered in \citet{DM24}. But
while this separability facilitates an extension of the Granger--Johansen
representation theorem to these nonlinear SVARs, it is not necessary
for the results that follow. The only separability that we require
here is between $z_{t}$, $\b z_{t-1}$ and $\err_{t}$.

We develop the following running example throughout the rest of the
paper.
\begin{example}[nonlinear Phillips curve]
\label{exa:phillips} The Phillips curve is a key component of any
macroeconomic model. It provides a causal link between aggregate output
and prices, and is thus essential in modelling the monetary policy
transmission mechanism. Its name derives from the seminal contribution
of \citet{Phillips1958}, who proposed the following simple \emph{nonlinear}
relationship between (wage) inflation, $\pi_{t}^{w}$, and labour
market tightness as measured by the unemployment rate, $u_{t}$,
\begin{equation}
\pi_{t}^{w}=a+b\left(\frac{1}{u_{t}}\right)^{c}.\label{eq:Phillips}
\end{equation}
Several recent contributions have used the vacancy-to-unemployment
ratio, $\theta_{t}\defeq v_{t}/u_{t}$, as an alternative measure
of tightness, and price (instead of wage) inflation, $\pi_{t}$, see
e.g.\ \citet{BallLeighMishra2022}, \citet{BenignoEggertsson2023},
and \citet{BeaudryHouPortier2025}. These papers employ alternative
functional forms for \eqref{Phillips}, and introduce additional dynamics,
inflation expectations, and other shocks.\footnote{\citet{BallLeighMishra2022} use a third order polynomial in $\log\theta_{t}$,
\citet{BenignoEggertsson2023} a piecewise linear function in $\log\theta_{t}$
with a kink at $\theta_{t}=1$, and \citet{BeaudryHouPortier2025}
consider both these specifications.}

Here we consider a stylised version of the piecewise linear Phillips
curve proposed in \citet{BenignoEggertsson2023}, 
\begin{equation}
\pi_{t}-\pi=\begin{cases}
\kappa\log\theta_{t}+\eta_{\pi,t}, & \text{if }\theta_{t}\leq1\text{ (`normal')}\\
\kappa^{\mathrm{tight}}\log\theta_{t}+\eta_{\pi,t}, & \text{if }\theta_{t}>1\text{ (\text{`labour shortage')}}
\end{cases}\label{eq:NLPC}
\end{equation}
where $\pi$ denotes steady state or target inflation, and $\eta_{\pi,t}$
an exogenous shock. Despite differences in specification, the fundamental
identification problem in all such models remains the same. Insofar
as inflation and tightness may plausibly be determined \emph{simultaneously},
the r.h.s.\ of \eqref{NLPC} cannot (in general) be identified as
though it were a (nonlinear) regression. Simultaneous causation can
instead be addressed by incorporating both \eqref{NLPC}, and the
corresponding reverse (causal) model for the effect of inflation on
tightness, into an ($\reals^{2}$-valued) nonlinear function $\fe_{0}(z_{t})$,
where $z_{t}=(\log\theta_{t},\pi_{t})^{\trans}$, yielding a specification
for the l.h.s.\ of \eqref{nlVAR}. 
\end{example}

As noted in the introduction, we term \eqref{nlVAR} an \emph{endogenously}
\emph{nonlinear} SVAR, because of the possible nonlinearity on the
l.h.s.\ of the model, i.e.\ in the endogenous variables $z_{t}$.
Were the model instead required to be linear in the endogenous variables,
so that $\fe_{0}(z)=\Phi_{0}z$, identification of the model parameters
would be as straightforward as it is in the linear SVAR; and along
the lines of \remref{linearid}\ref{subrem:wn} above, the assumption
that $\{\err_{t}\}$ is independent across time could be weakened
to one of $\{\err_{t}\}$ being merely a martingale difference sequence
(with respect to the filtration generated by $\{z_{t}\}$). However,
in imposing linearity on the l.h.s., we would lose the possibilities
for endogenous regime switching, asymmetric impact multipliers, and
of handling occasionally binding constraints, which the general model
\eqref{nlVAR} affords. We accordingly want to permit $\fe_{0}$ to
be nonlinear: a consequence of which is that \emph{independence across
time} of $\{\err_{t}\}$ becomes necessary for the parameters of \eqref{nlVAR}
to be identified. (But note that there is \emph{no} requirement of
contemporaneous independence between the elements of $\err_{t}$.)

We thus continue to maintain that the structural shocks $\{\err_{t}\}$
are i.i.d.\ with $\expect\err_{t}=0$ and $\expect\err_{t}\err_{t}^{\trans}=I_{p}$,
and a (Lebesgue) density $\den\in\denspc$. The parameter space for
the model \eqref{nlVAR} then consists of collections $\fespc_{0}\ni\fe_{0}$
and $\bigfspc_{1}\ni\b{\fe}_{1}$ of functions $\reals^{p}\setmap\reals^{p}$
and $\reals^{kp}\setmap\reals^{p}$, and a collection of densities
$\denspc\ni\den$ supported on $\reals^{p}$, which under our regularity
conditions, together determine the conditional density
\begin{equation}
\varphi_{z_{t}\mid\b z_{t-1}}(\xi\mid\b{\xi}_{-1})=\den[\fe_{0}(\xi)-\b{\fe}_{1}(\b{\xi}_{-1})]\cdot\smlabs{\det D\fe_{0}(\xi)}.\label{eq:nlVARdens}
\end{equation}
We continue to regard two alternative parametrisations of the model
as being observationally equivalent if they yield the same conditional
density. For convenience, we shall suppose throughout that $\b z_{0}$
is continuously distributed, with a density that is a.e.\ strictly
positive on $\reals^{kp}$. Our assumptions below then ensure that
this is also true for every successive $\b z_{t}$, and $\varphi_{z_{t}\mid\b z_{t-1}}(\xi\mid\b{\xi}_{-1})$
is thus well defined for almost every $\xi\in\reals^{p}$ and $\b{\xi}_{-1}\in\reals^{kp}$,
for all $t\geq1$.

Our regularity conditions on the model parameter space, which are
sufficient to ensure that the conditional density \eqref{nlVARdens}
exists (and is unique up to the usual a.e.\ equivalence), are as
follows.

\needspace{4\baselineskip}

\assumpname{PS}
\begin{assumption}
\label{ass:svarpar}$\fespc_{0}$, $\bigfspc_{1}$ and $\denspc$
collect every function such that:
\begin{enumerate}[label=\ass{\arabic*.}, ref=\ass{\arabic*}]
\item \label{enu:shockid:smooth}$\tilde{f}_{0}\in\fespc_{0}$ and $\tilde{\b{\fe}}_{1}\in\bigfspc_{1}$
are locally Lipschitz (continuous);
\item \label{enu:shockid:bij}$\tilde{\fe}_{0}\in\fespc_{0}$ is a bijection
$\reals^{p}\setmap\reals^{p}$, $\tilde{\fe}_{0}(0)=0$, and $\det D\tilde{\fe}_{0}(z)\neq0$
for almost every $z\in\reals^{p}$;
\item \label{enu:shockid:dens}$\tilde{\den}\in\denspc$ is continuously
differentiable, with $\tilde{\den}(\err)>0$ for all $\err\in\reals^{p}$,
and
\begin{align*}
\int_{\reals^{p}}\tilde{\den}(\err)\diff\err & =1, & \int_{\reals^{p}}\err\tilde{\den}(\err)\diff\err & =0, & \int_{\reals^{p}}\err\err^{\trans}\tilde{\den}(\err)\diff\err & =I_{p}.
\end{align*}
\end{enumerate}
\end{assumption}
\begin{rem}
\subremark{} Local Lipschitzness implies that $\tilde{\fe}_{0}$
and $\tilde{\b{\fe}}_{1}$ are differentiable almost everywhere (a.e.).
The r.h.s.\ of \eqref{nlVARdens} is therefore defined at least almost
everywhere, which is sufficient to pin down the conditional density
$\varphi_{z_{t}\mid\b z_{t-1}}$, since the latter is itself only
uniquely defined up to an a.e.\ equivalence. (See \appref{nlsem}
for further details.) Our smoothness conditions and support conditions
on the density $\tilde{\den}$ (which accord with those of \citealp{Matz09Ecta})
are maintained only for convenience, and could very likely also be
relaxed in this same direction.

\subremark{} Since the nonlinear SVAR \eqref{nlVAR} is a (dynamic)
nonlinear SEM, our work relates closely to the literature on identification
in such models: particularly \citet{Matz09Ecta,Matz15Ecta} and \citet{BH18Ecta}.
Here we have deliberately relaxed the assumption that the functions
$\tilde{\fe}_{0}$ and $\tilde{\b{\fe}}_{1}$ are (at least once)
continuously differentiable, which is standard in that literature,
to allow our results to accommodate models that are continuous but
merely piecewise differentiable, such as the piecewise affine SVARs
introduced in \secref{piecewise-affine} below.

\subremark{} We naturally require $\tilde{\fe}_{0}$ to be invertible,
which ensures that the model always yields a solution for the endogenous
variables $z_{t}$, irrespective of the values of the predetermined
variables $\b z_{t-1}$ and the structural shocks $\err_{t}$. Requiring
$\det D\tilde{\fe}_{0}(z)\neq0$ a.e.\ merely excludes certain `irregular'
cases (our assumptions also imply that this quantity must have the
same sign a.e.).
\end{rem}

Regarding the parameters $(\fe_{0},\bigf_{1},\den)$ that generated
$\{z_{t}\}$ in \eqref{nlVAR}, as distinct from the \emph{entirety}
of the model parameter space, we also maintain the following.

\assumpname{DGP}
\begin{assumption}
\label{ass:svardgp}$(\fe_{0},\bigf_{1},\den)$ are such that:
\begin{enumerate}[label=\ass{\arabic*.}, ref=\ass{\arabic*}]
\item \label{enu:shockid:surj}$\bigf_{1}:\reals^{kp}\setmap\reals^{p}$
is surjective, with $\rank D\b{\fe}_{1}(\b z)=p$ for almost every
$\b z\in\reals^{kp}$;
\item \label{enu:invlip}$\fe_{0}^{-1}$ is locally Lipschitz; and
\item \label{enu:shockid:densrank}$\den$ has a \emph{local} maximum at
some $\err^{\ast}\in\reals^{p}$, and is twice continuously differentiable
in a neighbourhood of $\err^{\ast}$, with negative definite Hessian
there.
\end{enumerate}
\end{assumption}
\begin{rem}
\subremark{} We interpret \assref{svardgp}\ass{.}\ref{enu:shockid:surj}
as requiring that there be sufficient dependence of the r.h.s.\ of
the model (i.e.\ on the conditional mean of $\fe_{0}(z_{t})$) on
the predetermined variables $\b z_{t-1}$, in both a `global' and
`local' sense. (Note that this is only a requirement on the $\b{\fe}_{1}$
that actually generated the data, which need \emph{not} be satisfied
by all members of $\bigfspc_{1}$). For a simple illustration of why
some such condition cannot be avoided, consider an extreme case in
which $\b{\fe}_{1}(\b z)=0$ for all $\b z\in\reals^{kp}$, so that
the r.h.s.\ of \eqref{nlVAR} does not depend on $\b z_{t-1}$ at
all. Then because $z_{t}=\fe_{0}^{-1}(\err_{t})$ will be i.i.d.\ and
independent of $\b z_{t-1}$, so too will be 
\[
\tilde{\err}_{t}\defeq\tilde{\fe}_{0}(z_{t})=\tilde{\fe}_{0}[\fe_{0}^{-1}(\err_{t})]
\]
for \emph{every} $\tilde{\fe}_{0}\in\fespc_{0}$. Beyond requiring
$\tilde{\fe}_{0}$ to be scale- and location-normalised such that
$\expect\tilde{\err}_{t}=0$ and $\expect\tilde{\err}_{t}\tilde{\err}_{t}^{\trans}=I_{p}$,
the model would therefore yield no meaningful identifying restrictions
on $\tilde{\fe}_{0}$.

\subremark{} \assref{svardgp}\ass{.}\ref{enu:invlip} is a weak
regularity condition on the inverse of $\fe_{0}$, which would e.g.\ be
automatically satisfied if $\fe_{0}$ were continuously differentiable
with $\det D\fe_{0}(z)\neq0$ for all $z\in\reals^{p}$.

\subremark{} \assref{svardgp}\ass{.}\ref{enu:shockid:densrank}
would clearly be satisfied if $\err_{t}$ were Gaussian; but note
that only a well-behaved \emph{local} maximum is required for this
condition to hold. The main purpose of this assumption is to allow
us to deduce that $u=u^{\ast}$ from merely the equality $\fe_{U}(u)=f_{U}(u^{\ast})$,
and further regulate the behaviour of $\fe_{U}$ in the vicinity of
$u^{\ast}$. Though their model and proofs differ significantly from
ours -- in particular, because their counterpart of our $\b{\fe}_{1}$
has the property that each component depends on a variable (an `instrument')
that is special to that component -- it is noteworthy that a similar
assumption is introduced by \citet{BH18Ecta} as their Condition~M
(see also their Corollary~2).
\end{rem}

Remarkably, despite the far greater flexibility afforded by the nonlinear
parametrisation of \eqref{nlVAR}, under the foregoing conditions
we obtain the following, effectively identical characterisation of
observational equivalence to that of the linear SVAR \eqref{linSVAR},
the proof of which appears in \appref{nlsem}.
\begin{thm}
\label{thm:shockid}Suppose \ref{ass:svarpar} and \ref{ass:svardgp}
hold, and let $\tilde{\fe}_{0}\in\fespc_{0}$ and $\tilde{\b{\fe}_{1}}\in\bigfspc_{1}$.
Then there exists a $\tilde{\den}\in\denspc$ such that $(\tilde{\fe}_{0},\tilde{\bigf}_{1},\tilde{\den})$
is observationally equivalent to $(\fe_{0},\bigf_{1},\den)$, if and
only if there exists a $Q\in\orths(p)$ such that
\begin{align}
\tilde{\fe}_{0}(z) & =Q\fe_{0}(z)\sep\forall z\in\reals^{p}, & \tilde{\bigf}_{1}(\z) & =Q\bigf_{1}(\z)\sep\forall\z\in\reals^{kp}.\label{eq:orthogtr}
\end{align}
\end{thm}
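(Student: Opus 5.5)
The ``if'' direction is immediate. Given $Q\in\orths(p)$ satisfying \eqref{orthogtr}, set $\tilde{\den}(\err)\defeq\den(Q^{\trans}\err)$. This is again in $\denspc$: it is $C^{1}$ and positive, with mean zero and variance $QQ^{\trans}=I_{p}$. Since $\smlabs{\det D\tilde{\fe}_{0}}=\smlabs{\det Q}\smlabs{\det D\fe_{0}}=\smlabs{\det D\fe_{0}}$, the densities in \eqref{nlVARdens} coincide. Also $\tilde{\fe}_{0}$ is a locally Lipschitz bijection with $\tilde{\fe}_{0}(0)=0$, so it lies in $\fespc_{0}$, and similarly $\tilde{\b{\fe}}_{1}\in\bigfspc_{1}$.

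For the ``only if'' direction, the plan is to reduce to a transformation of the shocks. Define $g\defeq\tilde{\fe}_{0}\compose\fe_{0}^{-1}:\reals^{p}\setmap\reals^{p}$. This is a bijection that is locally Lipschitz, by \assref{svardgp}\ass{.}\ref{enu:invlip}, with $g(0)=0$. Write $u=\fe_{0}(\xi)$ and $v=\b{\fe}_{1}(\b{\xi}_{-1})$, and set $h(v)$ to be the corresponding value of $\tilde{\b{\fe}}_{1}(\b{\xi}_{-1})$. Observational equivalence gives, for a.e.\ $(\xi,\b{\xi}_{-1})$,
\[
\tilde{\den}\bigl(g(u)-\tilde{\b{\fe}}_{1}(\b{\xi}_{-1})\bigr)\smlabs{\det Dg(u)}=\den\bigl(u-\b{\fe}_{1}(\b{\xi}_{-1})\bigr).
\]
Because $\b{\fe}_{1}$ is surjective with full-rank Jacobian a.e.\ (\assref{svardgp}\ass{.}\ref{enu:shockid:surj}), the set of values $v$ that are reached is rich enough to conclude the following. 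If $\b{\xi}_{-1}$ and $\b{\xi}_{-1}'$ give the same $v$, then $\tilde{\b{\fe}}_{1}$ must agree at them. Otherwise the right-hand side would be identical while the left-hand sides are shifted versions of the same function of $u$, and integrating out would contradict the mean-zero normalisation. So $\tilde{\b{\fe}}_{1}=m\compose\b{\fe}_{1}$ for some function $m$, and by continuity the displayed identity holds for all $u,v$:
\[
\tilde{\den}\bigl(g(u)-m(v)\bigr)\smlabs{\det Dg(u)}=\den(u-v).
\]
Equivalently, if $\err\distrib\den$, then $g(\err+v)-m(v)\distrib\tilde{\den}$ for every $v$.

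The key step is to show that $g$ is affine. I would use \assref{svardgp}\ass{.}\ref{enu:shockid:densrank} here. Fix $v$, and let $u^{\ast}(v)=\err^{\ast}+v$ be the point where $\den(\cdot-v)$ has its strict local maximum, with negative definite Hessian. Take logs and differentiate in $u$ at points of differentiability, in the manner of \citet{Matz09Ecta} and \citet{BH18Ecta}:
\[
\grad\log\tilde{\den}\bigl(g(u)-m(v)\bigr)^{\trans}Dg(u)+\grad_{u}\log\smlabs{\det Dg(u)}=\grad\log\den(u-v).
\]
Differencing this equation across two values $v,v'$ at the same $u$ removes the Jacobian term, which does not depend on $v$. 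Using the local maximum at $\err^{\ast}$, one then argues that $Dg$ must be constant. I would first try to show that $u\elmap\smlabs{\det Dg(u)}$ is constant. The idea is that for each $v$, the map $g(\cdot+v)-m(v)$ pushes $\den$ forward to the same $\tilde{\den}$, so the composite maps $g(\cdot+v)-m(v)-g(\cdot)+m(0)$ are measure-preserving for $\den$. Continuity in $v$, together with the nondegenerate local maximum forcing the mode to map to the mode, then pins these composites down to the identity. This gives $g(u+v)-g(u)=m(v)-m(0)$ for all $u,v$. Hence $g$ is additive and continuous, so linear: $g(u)=Au$ and $m(v)=Av+c$.

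This is where I expect the main obstacle to lie: passing from equality of pushforward densities to the functional equation using only a local maximum and mere Lipschitz regularity.

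Once linearity is established, the rest follows from the moment normalisation. The shock $\tilde{\err}_{t}=A\err_{t}+(Av-m(v))$ has mean zero, which forces $c=0$. Its variance is $AA^{\trans}=I_{p}$, so $A=Q\in\orths(p)$. Finally $\tilde{\fe}_{0}=Q\fe_{0}$, and $\tilde{\b{\fe}}_{1}=Q\b{\fe}_{1}$ holds everywhere by continuity, which is \eqref{orthogtr}.
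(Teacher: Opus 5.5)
\textbf{There is a genuine gap: the step showing $g$ is affine is asserted, not proved.} Your reduction is sound. With $g=\tilde{\fe}_{0}\circ\fe_{0}^{-1}$, pass from a.e.\ $\b{\xi}_{-1}$ to all $\b{\xi}_{-1}$ by continuity, and take means so that $m(v)=\expect g(\err+v)$. One then does obtain that $T_{v}(\err)\defeq g(\err+v)-m(v)$ has density $\tilde{\den}$ for every $v$ when $\err$ has density $\den$. Your closing moment argument is also fine once $g$ is known to be linear. The trouble is the step that carries the whole theorem, namely that $g$ is affine, and the mechanism you sketch cannot deliver it, for three reasons.

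First, $T_{v}-T_{0}$ is a difference, not a composite, and is not $\den$-preserving (when $g(u)=Qu$ it vanishes identically). The $\den$-preserving map is $S_{v}\defeq T_{0}^{-1}\circ T_{v}$.

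Second, ``the mode maps to the mode'' needs $\smlabs{\det Dg}$ to be constant. Otherwise $\tilde{\den}(T_{v}(\err))\smlabs{\det Dg(\err+v)}=\den(\err)$ does not carry local maxima to local maxima. That constancy cannot come from differentiating $\log\smlabs{\det Dg(u)}$, which for merely Lipschitz $g$ is just a measurable function.

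Third, and decisively, even granting both, every property you invoke is shared by any continuous family of rotations $S_{v}=R(v)$ with $R(0)=I_{p}$ when $\den$ is standard Gaussian. Those properties are: $S_{v}$ preserves $\den$, fixes $\err^{\ast}$, is continuous in $v$, and $S_{0}=\mathrm{id}$. So nothing in your argument forces $S_{v}=\mathrm{id}$, i.e.\ $g(u+v)-g(u)=m(v)-m(0)$. Establishing that is essentially the whole theorem.

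The paper fills this in three moves; it works in $z_{t}$-coordinates, but the translation is direct.

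(a) It shows the log-Jacobian ratio is constant without differentiating it. Full rank is transferred from $D\b{\fe}_{1}$ to $D\tilde{\b{\fe}}_{1}$ via local invertibility of the score map $u\mapsto D\log\den(u)^{\trans}$ near $\err^{\ast}$. Hence for a.e.\ $u$ there is a $\b{\xi}_{-1}$ with $u-\b{\fe}_{1}(\b{\xi}_{-1})=\err^{\ast}$ and $\grad\log\tilde{\den}\bigl(g(u)-\tilde{\b{\fe}}_{1}(\b{\xi}_{-1})\bigr)=0$. So the locally Lipschitz map $u\mapsto\log\den(u-v)-\log\tilde{\den}(g(u)-m(v))$, which does not depend on $v$, has zero derivative a.e.\ and is therefore constant. (Within your framework, continuity and injectivity of $m$ plus invariance of domain would equally make $g(u)-m(u-\err^{\ast})$ a local maximum of $\tilde{\den}$.)

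(b) With $\tilde{\den}(g(u)-m(v))=c\,\den(u-v)$ now holding pointwise, the strict local maximum and connectedness give $m(v)=g(\err^{\ast}+v)-\tilde{u}^{\ast}$, where $\tilde{u}^{\ast}\defeq g(\err^{\ast})-m(0)$. Hence $\tilde{\den}(\tilde{u}^{\ast}+e)=c\,\den(\err^{\ast}+g^{-1}(g(w)+e)-w)$ for all $e$ and $w$.

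(c) A second-order expansion of the right-hand side at $e=\lambda d$, $\lambda\goesto0$, shows that $Dg(w)H^{-1}Dg(w)^{\trans}$, with $H\defeq-D^{2}\den(\err^{\ast})$, is the same for a.e.\ $w$. After normalising by Cholesky factors $A,B$, the map $w\mapsto B^{\trans}g(Aw)$ has Jacobian in $\sorths(p)$ a.e. A Liouville-type rigidity theorem for Lipschitz maps (Friesecke--James--M\"uller; John) then makes it affine.

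The second-order information at the mode, combined with rigidity, is the idea your proposal lacks: the mode alone pins down only location, not $Dg$.
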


\begin{rem}
\subremark{} Here we are asking whether for given candidate functions
$(\tilde{\fe}_{0},\tilde{\b{\fe}}_{1})$, it is possible to find a
distribution $\tilde{\den}\in\denspc$ for the structural shocks such
that
\[
\den[\fe_{0}(\xi)-\b{\fe}_{1}(\b{\xi}_{-1})]\cdot\smlabs{\det D\fe_{0}(\xi)}=\tilde{\den}[\tilde{\fe}_{0}(\xi)-\tilde{\b{\fe}}_{1}(\b{\xi}_{-1})]\cdot\smlabs{\det D\tilde{\fe}_{0}(\xi)}
\]
holds for almost every $\xi\in\reals^{p}$ and $\b{\xi}_{-1}\in\reals^{kp}$.
The $\tilde{\den}$ delivering this equivalence will, for $Q$ as
in \eqref{orthogtr}, be given by the density of 
\[
\tilde{\err}_{t}=\tilde{\fe}_{0}(z_{t})-\tilde{\b{\fe}}_{1}(\z_{t-1})=Q[\fe_{0}(z_{t})-\b{\fe}_{1}(\z_{t-1})]=Q\err_{t},
\]
which under our assumptions will also lie in $\denspc$. This implies
that the introduction of further (e.g.\ parametric) assumptions on
the set $\denspc$ of allowable densities would not yield any further
tightening of our characterisation of observational equivalence, provided
that $\denspc$ remains closed under orthogonal transformations of
the variables: as would e.g.\ be the case even if $\denspc$ were
restricted to the set of Gaussian densities on $\reals^{p}$ (with
mean zero and identity covariance).

\subremark{} The foregoing is a nonparametric identification result,
in the sense that neither $(\fe_{0},\b{\fe}_{1})$, nor the distribution
$\den$ of the shocks, are assumed to have any particular (known)
parametric form. In practice, however, we would expect the model \eqref{nlVAR}
to be formulated parametrically, if only because the limited length
of the time series available, for most macroeconomic applications,
render genuine nonparametric estimation infeasible. In the abstract
setting of \thmref{shockid}, these parametric functional form and/or
distributional assumptions can be understood as restrictions on the
sets $\fespc_{0}$, $\bigfspc_{1}$ and $\denspc$. The conclusion
of the theorem continues to hold in such cases, provided that $\denspc$
is not so (unusually) constrained that it fails to satisfy the invariance
condition noted in the previous remark. See \secref{piecewise-affine}
below for the discussion of a class of parametric models (for $\fe_{0}$
and $\b{\fe}_{1}$) for which the conditions required by the theorem
may be verified straightforwardly.

\subremark{} As noted above, a consequence of the Markov property
of the SVAR is that the notion of observational equivalence appropriate
to our setting refers only to the distribution $z_{t}\mid\b z_{t-1}$
of the endogenous variables conditional on the exogenous variables;
it therefore coincides exactly with that employed by \citet{Matz09Ecta}
in the context of a (non-dynamic) nonlinear SEM: see her (3.1), in
particular. This allows the proof of \thmref{shockid} to be approached
just as if we were analysing identification in a nonlinear SEM, a
connection that we draw out more fully in \appref{nlsem}. Relative
to the results in the existing SEM literature, we obtain a much tighter
characterisation of observational equivalence because of the separability
between $z_{t}$ and $\b z_{t-1}$.

\subremark{} Should \eqref{orthogtr} fail to hold, then there will
be at least some realisations of $\{z_{t}\}$ for which the likelihoods
of $(\tilde{\fe}_{0},\tilde{\bigf}_{1},\tilde{\den})$ and $(\fe_{0},\bigf_{1},\den)$
will be distinct, and so the data will to this extent be informative
about these two alternative parametrisations of the model. However,
we would not\emph{ }claim, on the basis of this theorem \emph{alone},
that the parameters of the model are consistently estimable up to
an orthogonal transformation. While it seems reasonable to suppose
that consistent nonparametric estimation of the model would be possible
(under suitable regularity conditions) when $\{z_{t}\}$ is stationary
and ergodic, the familiar connection between identification and consistent
estimation is attenuated when $\{z_{t}\}$ possesses stochastic (or
indeed, deterministic) trends, because of the non-recurrence of those
trends in higher dimensions (see \citealp[Sec.~6]{Bing01Hdbk}; \citealp[p.~62]{GP13JoE}).
Consistent estimation of the model parameters (up to $Q$) would in
such cases likely require further restrictions on $(\fe_{0},\b{\fe}_{1})$,
such as those sufficient to ensure that $\{z_{t}\}$ is indeed stationary
and ergodic (for a discussion of such conditions in this context,
see \citealp{DMW23stat}, and the references cited therein). 

\end{rem}

\subsection{Orthogonal reduced-form parametrisation}

\label{subsec:orthogreducedform}

Analogously (though not identically) to the `orthogonal reduced-form
parametrisation' (\citealp{ARRW18Ecta}, Sec.\ 2.3) of the linear
SVAR, \thmref{shockid} suggests the following convenient reparametrisation
of the endogenously nonlinear SVAR. Let $z_{0}\in\reals^{p}$ be fixed,
and a point at which $\fe_{0}$ is (assumed to be) differentiable,
with full rank Jacobian $D\fe_{0}(z_{0})$. By the QR decomposition,
we have $D\fe_{0}(z_{0})=Q^{\trans}L$, where $L$ is lower triangular,
and $Q\in\orths(p)$; define $(\ga_{0},\b{\ga}_{1})\defeq(Q\fe_{0},Q\b{\fe}_{1})$.
Multiplying \eqref{nlVAR} through by $Q$, we may reformulate the
model as
\begin{equation}
\ga_{0}(z_{t})=\b{\ga}_{1}(z_{t-1})+Q\err_{t}\label{eq:svar-equiv}
\end{equation}
where now $\ga_{0}$ is restricted such that $D\ga_{0}(z_{0})$ is
lower triangular (which need hold \emph{only} at that chosen $z_{0}$),
and $Q\in\orths(p)$. 

This yields an equivalent parametrisation of the model, in which the
parameter spaces for $\b{\ga}_{1}\in\b{\fespc}_{1}$ and $\den\in\denspc$
remain as before, but now $\fespc_{0}$ is additionally restricted
(beyond \assref{svarpar}\ass{.}\ref{enu:shockid:smooth}\ass{--}\ref{enu:shockid:bij})
to functions $\ga_{0}:\reals^{p}\setmap\reals^{p}$ for which $D\ga_{0}(z_{0})$
is lower triangular (at the nominated $z_{0}\in\reals^{p}$); let
$\fespc_{0}^{(z_{0})}$ denote the resulting parameter space for $\ga_{0}$.
To exactly offset this restriction, we now have the additional parameter
$Q\in\orths(p)$, so that we may equivalently regard the nonlinear
SVAR as being parametrised by $(\ga_{0},\b{\ga}_{1},Q,\den)\in\fespc_{0}^{(z_{0})}\times\bigfspc_{1}\times\orths(p)\times\denspc$.
The import of \thmref{shockid} here is that the parameters $(\ga_{0},\b{\ga}_{1})\in\fespc_{0}^{(z_{0})}\times\bigfspc_{1}$
are \emph{exactly} identified by data on $\{z_{t}\}$, with the non-identified
part of the structural parameters being transferred entirely to $Q$.
The `nonlinear SVAR identification problem' can thus be framed precisely
as one of finding sufficient restrictions to pin down $Q$, from which
the structural parameters may then be recovered, via $(\fe_{0},\b{\fe}_{1})=(Q^{\trans}\ga_{0},Q^{\trans}\b{\ga}_{1})$.

The reparametrisation \eqref{svar-equiv} provides a convenient perspective
from which to import various approaches to identifying $Q$ from the
linear SVAR literature. For the most part, these apply directly to
the present setting, with little modification required. The following
example illustrates how it remains possible to identify impulse responses
via external instruments, without requiring any additional assumptions
relative to those needed to identify the linear VAR. 
\begin{example}[external instruments]
 Suppose that $w_{t}$ is a (scalar) `external instrument': an
observed (stationary) process that is assumed to be contemporaneously
correlated with the first structural shock, but not with any of the
others (see e.g.\ \citealp{SW18EJ}, p.\ 931). Defining
\begin{equation}
u_{t}\defeq\ga_{0}(z_{t})-\b{\ga}_{1}(z_{t-1})=Q\err_{t}\label{eq:redformerr}
\end{equation}
which by \thmref{shockid} is identified, we must have 
\[
\delta\defeq\expect u_{t}w_{t}=Q\expect\err_{t}w_{t}=Qe_{1}\alpha=q_{1}\alpha,
\]
where $q_{1}$ denotes the first column of $Q$, and $\alpha=\expect\err_{1t}w_{t}\neq0$.
Since $\delta=\expect u_{t}w_{t}$ is identified, so too is $q_{1}=\delta/\smlnorm{\delta}$,
and we can further recover $\err_{1t}=q_{1}^{\trans}u_{t}$.

Since the distribution of $u_{t}$ in \eqref{redformerr} is identified,
so too is the conditional distribution
\[
u_{t}\mid\{\err_{1t}=\bar{\err}_{1}\}\eqdist u_{t}\mid\{q_{1}^{\trans}u_{t}=\bar{\err}_{1}\}.
\]
For given values of $\b z_{t-1}=\bar{\b z}$ and $\bar{\err}_{1}$,
the distribution of the counterfactual quantity
\[
z_{t}(\bar{\b z},\bar{\err}_{1})\mid\{\b z_{t-1}=\bar{\b z},\err_{1t}=\bar{\err}_{1}\}\eqdist\ga_{0}^{-1}(\b{\ga}_{1}(\bar{\b z})+u_{t})\mid\{q_{1}^{\trans}u_{t}=\bar{\err}_{1}\}
\]
depends only on $\ga_{0}$, $\b{\ga}_{1}$ and the distribution of
$u_{t}\mid\{q_{1}^{\trans}u_{t}=\bar{\err}_{1}\}$, all of which are
identified. In this way, the impact multipliers of $\err_{1t}$ may
be recovered; the impulse responses at further horizons depend, by
the Markov property, additionally only on the conditional distribution
of $z_{t}\mid\b z_{t-1}$, which is trivially identified.
\end{example}

\section{Piecewise affine SVARs}

\label{sec:piecewise-affine}

\subsection{Endogenous regime switching}

Here we introduce a class of endogenously regime-switching models,
in which the conditions required for our results may be verified relatively
straightforwardly. Models of this form have been used recently to
study monetary policy under an occasionally binding constraint on
nominal interest rates: see \citet{SM21}, \citet{AMSV21}, \citet{ILMZ20},
and \citet{CCMM25}.

Suppose now that the l.h.s.\ of the nonlinear SVAR
\begin{equation}
\fe_{0}(z_{t})=\b{\fe}_{1}(\z_{t-1})+\err_{t}\label{eq:pwa0}
\end{equation}
is specified as
\begin{equation}
\fe_{0}(z)=\sum_{\ell=1}^{L}\indic\{z\in\set Z^{(\ell)}\}(\bar{\phi}_{0}^{(\ell)}+\Phi_{0}^{(\ell)}z),\label{eq:pwa}
\end{equation}
for $\{\set Z^{(\ell)}\}_{\ell=1}^{L}$ a collection of convex sets
that partition $\reals^{p}$, $\{\bar{\phi}_{0}^{(\ell)}\}_{\ell=1}^{L}\subset\reals^{p}$
and $\{\Phi_{0}^{(\ell)}\}_{\ell=1}^{L}\subset\reals^{p\times p}$.
When these parameters are such that $\fe_{0}$ is continuous, we shall
say that $\fe_{0}$ is a \emph{piecewise affine }function. (We do
not consider cases in which $\fe_{0}$ may be discontinuous, so continuity
should always be taken as implied.) The model may then be regarded
as consisting of $L$ `regimes' demarcated by the sets $\{\set Z^{(\ell)}\}_{\ell=1}^{L}$.
Which of those regimes is operative in period $t$, i.e.\ the value
of $\ell_{t}\in\{1,\ldots,L\}$ such that
\[
\fe_{0}(z_{t})=\bar{\phi}_{0}^{(\ell_{t})}+\Phi_{0}^{(\ell_{t})}z_{t}
\]
is determined \emph{jointly} with the value of $z_{t}$. For this
reason, we say that there is \emph{endogenous} switching between the
$L$ regimes, as distinct from the \emph{exogenous} regime switching
that would result if $\ell_{t}$ were determined prior to the realisation
of $z_{t}$. The situation here is thus markedly different from the
regime-switching SVARs considered in the previous literature, which
as noted in the introduction, can generally be written in the form
\[
\Phi_{0}(s_{t-1})z_{t}=c(s_{t-1})+\sum_{i=1}^{k}\Phi_{i}(s_{t-1})z_{t-i}+\err_{t},
\]
where $s_{t-1}$ is determined prior to $\err_{t}$ and $z_{t}$ (see
e.g.\ \citealp{AG12AEJ}; \citealp{CCCN15EJ}; \citealp{BP24EctJ}).

\saveexamplex{}

\exname{\ref*{exa:phillips}}
\begin{example}[nonlinear Phillips curve; ctd]
 The nonlinear Phillips curve of \citet{BenignoEggertsson2023},
in \eqref{NLPC} above, is piecewise affine (and continuous) with
a kink at $\log\theta_{t}=0$, which the authors refer to as the `Beveridge
threshold'. Their model thus delineates two distinct labour market
regimes: a `normal' regime ($\ell_{t}=1$), when the labour market
is slack, $\log\theta_{t}\leq0$, and a `labour shortage' regime ($\ell_{t}=2$)
in which $\log\theta_{t}>0$. The regime $\ell_{t}$ is entirely driven
by the \emph{contemporaneous} value of the endogenous variable $\log\theta_{t}$,
and so the regime-switching is genuinely endogenous. Their Phillips
curve \eqref{NLPC} can also be written as
\begin{equation}
\pi_{t}=\pi+\kappa^{(\ell_{t})}\log\theta_{t}+\eta_{t}.\label{eq:nlPC=000020compact}
\end{equation}
where $\kappa^{(1)}=\kappa$ and $\kappa^{(2)}=\kappa^{\mathrm{tight}}$.
Contrast this with an alternative specification in which the slope
of the Phillips curve is determined by \emph{past} values of $\log\theta_{t}$,
for example
\begin{equation}
\pi_{t}=\pi+\kappa^{(\ell{}_{t-1})}\log\theta_{t}+\eta_{t}.\label{eq:predet=000020PC}
\end{equation}
Conditional on the past (i.e.\ on time $t-1$), \eqref{predet=000020PC}
is linear in $z_{t}=(\log\theta_{t},\pi_{t})^{\trans}$, and so shocks
to $\log\theta_{t}$ will have the same proportional effect $\kappa^{(\ell{}_{t-1})}$
on $\pi_{t}$, irrespective of their sign; whereas in \eqref{nlPC=000020compact}
the impact of the shocks will vary additionally (and nonlinearly)
depending on the initial (i.e.\ pre-shock) proximity of tightness
to the Beveridge threshold.
\end{example}
\restoreexamplex{}

\subsection{Identification}

We would like primitive conditions that ensure $\fe_{0}$ in \eqref{pwa}
satisfies the requirements \assref{svarpar}\ass{.}\ref{enu:shockid:smooth}\ass{-}\ref{enu:shockid:bij}
and \ref{ass:svardgp}\ass{.}\ref{enu:invlip} of \thmref{shockid}:
namely, that both it and its inverse should be locally Lipschitz,
and that it should be (globally) invertible, with $\det D\fe_{0}(z)\neq0$
a.e. Two important special cases of \eqref{pwa}, for which these
conditions may be readily verified, are those of:
\begin{itemize}
\item a (continuous) \emph{piecewise linear }function, in which there exists
a basis $\{a_{i}\}_{i=1}^{p}$ for $\reals^{p}$ such that each $\set Z^{(\ell)}$
can be written as a union of cones of the form
\begin{equation}
\set C_{{\cal I}}\defeq\{z\in\reals^{p}\mid a_{i}^{\trans}z\geq0\sep\forall i\in{\cal I}\text{ and }a_{i}^{\trans}z<0\sep\forall i\notin{\cal I}\}\label{eq:pwlbasis}
\end{equation}
where ${\cal I}$ ranges over the subsets of $\{1,\ldots,p\}$, and
$\bar{\phi}_{0}^{(\ell)}=0$ for all $\ell\in\{1,\ldots,L\}$; and
\item a (continuous) \emph{threshold affine }function, in which there exists
an $a\in\reals^{p}\backslash\{0\}$ and thresholds $\{\tau_{\ell}\}_{\ell=0}^{L}$
with $\tau_{\ell}<\tau_{\ell+1}$, $\tau_{0}=-\infty$ and $\tau_{L}=+\infty$,
such that
\[
\set Z^{(\ell)}=\{z\in\reals^{p}\mid a^{\trans}z\in(\tau_{\ell-1},\tau_{\ell}]\},
\]
i.e.\ the sets $\{\set Z^{(\ell)}\}$ take the forms of `bands'
in $\reals^{p}$. (In typical examples, $a=e_{p,i}$, i.e.\ it picks
out one `threshold variable' from the elements of $z_{t}$.)
\end{itemize}

Because the boundaries between the regimes are then affine subspaces
(of $\reals^{p}$), ensuring the continuity of $\fe_{0}$ is a straightforward
matter of linearly restricting the elements of $\{\bar{\phi}_{0}^{(\ell)}\}_{\ell=1}^{L}$
and $\{\Phi_{0}^{(\ell)}\}_{\ell=1}^{L}$ such that the values prescribed
by adjacent regimes agree on those boundaries; see the next appearance
of \exaref{phillips} for an illustration. Regarding our remaining
requirements on $\fe_{0}$, for these it is necessary and sufficient
that
\begin{equation}
\sgn\det\Phi_{0}^{(\ell)}=\sgn\det\Phi_{0}^{(1)}\neq0\sep\forall\ell\in\{1,\ldots,L\}.\label{eq:detcond}
\end{equation}
See \propref{pwacondition} below; we note that equivalence of the
preceding with the invertibility of $\fe_{0}$ follows directly from
Theorems~1 and 4 of \citet{GLM80Ecta}, and that since $Df_{0}(z)=\sum_{\ell=1}^{L}\indic\{z\in\set Z^{(\ell)}\}\Phi_{0}^{(\ell)}$
a.e., the Jacobian is then clearly invertible a.e.

\saveexamplex{}

\exname{\ref*{exa:phillips}}
\begin{example}[nonlinear Phillips curve; ctd]
 The nonlinear Phillips curve \eqref{NLPC} is piecewise linear,
with $z_{t}=(\log\theta_{t},\pi_{t})^{\trans}$ and two regimes 
\begin{align*}
\mathscr{Z}^{(1)} & =\{z\in\reals^{2}\mid e_{1}^{\trans}z\leq0\} & \mathscr{Z}^{(2)} & =\{z\in\reals^{2}\mid e_{1}^{\trans}z>0\}
\end{align*}
which can each be written as unions of cones of the form \eqref{pwlbasis}
(e.g.\ by taking $a_{1}=-e_{1}$ and $a_{2}=e_{2}$). \eqref{NLPC}
specifies only the first component of the bivariate map $\fe_{0}(z_{t})$.
If the second component is also modelled as piecewise linear, with
regimes also determined by the sign of $\log\theta_{t}$ (thus linear
on each of the sets $\mathscr{Z}^{(1)}$ and $\mathscr{Z}^{(2)}$),
then $\fe_{0}$ admits the representation \eqref{pwa}. To ensure
continuity at the regime boundary where $\log\theta_{t}=0$, we need
the equality
\[
\bar{\phi}_{0}^{(1)}+\begin{bmatrix}\Phi_{0,1}^{(1)} & \Phi_{0,2}^{(1)}\end{bmatrix}\begin{bmatrix}0\\
\pi_{t}
\end{bmatrix}=\bar{\phi}_{0}^{(2)}+\begin{bmatrix}\Phi_{0,1}^{(2)} & \Phi_{0,2}^{(2)}\end{bmatrix}\begin{bmatrix}0\\
\pi_{t}
\end{bmatrix}
\]
to hold for all values of $\pi_{t}\in\reals$, where $\Phi_{0}^{(\ell)}=[\Phi_{0,1}^{(\ell)},\Phi_{0,2}^{(\ell)}]$.
This entails
\begin{align*}
\bar{\phi}_{0}^{(1)}-\bar{\phi}_{0}^{(2)} & =0 & \Phi_{0,2}^{(1)}-\Phi_{0,2}^{(2)} & =0,
\end{align*}
and we may also impose $\bar{\phi}_{0}^{(1)}=0$, for the location
normalisation $\fe_{0}(0)=0$. To put it another way, continuity requires
that only the coefficients on the regime-determining variable $\log\theta_{t}$
may change at the threshold, leading to the (non-redundant) specification
\begin{equation}
\Phi_{0}^{(\ell)}=[\Phi_{0,1}^{(\ell)},\Phi_{0,2}]\sep\ell\in\{1,2\}\label{eq:pc-cont}
\end{equation}
 in which the second column of the coefficient matrix is regime-invariant.
\end{example}
\restoreexamplex{}

The SVAR specification \eqref{pwa0}--\eqref{pwa} thus provides
a flexible but tractable means of introducing nonlinearity into an
SVAR model. This is especially the case if we also specify that the
r.h.s.\ should be additively time-separable, and of the same functional
form as the l.h.s., so that
\begin{equation}
\fe_{0}(z_{t})=\b{\fe}_{1}(\z_{t-1})+\err_{t}=c+\sum_{i=1}^{k}\fe_{i}(z_{t-i})+\err_{t}\label{eq:pwaVAR}
\end{equation}
where now, for \emph{every} $i\in\{0,\ldots,k\}$,
\begin{equation}
\fe_{i}(z)=\sum_{\ell=1}^{L}\indic\{z\in\set Z^{(\ell)}\}(\bar{\phi}_{i}^{(\ell)}+\Phi_{i}^{(\ell)}z),\label{eq:pwa-fi}
\end{equation}
is (continuous) piecewise affine. (Note that there is no need to additionally
index the regimes $\set Z^{(\ell)}$ by $i$ here, since if the partitions
$\{\set Z_{i}^{(\ell)}\}_{\ell=1}^{L_{i}}$ did vary across $i$,
we could always find a mutual refinement such that \eqref{pwa-fi}
held for all $i$.) We term this model a \emph{piecewise affine SVAR};
with \emph{piecewise linear} and \emph{threshold affine} \emph{SVARs}
corresponding to those cases where the $\fe_{i}$'s are either all
piecewise linear or all threshold affine functions, respectively. 

The conditions \assref{svarpar}\ass{.}\ref{enu:shockid:smooth}
and \assref{svardgp}\ass{.}\ref{enu:shockid:surj} imposed by \thmref{shockid}
on $\b{\fe}_{1}(\b z_{t-1})=\sum_{i=1}^{k}\fe_{i}(z_{t-i})$ are rather
less taxing than those imposed on $\fe_{0}$. Under the specification
\eqref{pwa-fi}, continuity is readily imposed, and then automatically
implies Lipschitz continuity. Moreover, $D\b f_{1}(\b z_{t-1})$ a.e.\ exists
and satisfies
\[
D\b{\fe}_{1}(\b z)=\begin{bmatrix}\Phi_{1}^{(\ell_{1})} & \Phi_{2}^{(\ell_{2})} & \cdots & \Phi_{k}^{(\ell_{k})}\end{bmatrix}
\]
for some $\ell_{i}\in\{1,\ldots,L\}$ depending on $\b z$, and so
it is easy to verify whether $\rank D\b{\fe}_{1}(\b z)=p$ a.e.\ (or,
since this holds generically, to test the null hypothesis of a deficient
rank). In practice, this may be analysed more straightforwardly on
the basis of the coefficients on the first lag or two of $z_{t}$,
which may themselves be sufficient to satisfy this condition.

Verifying the (global) surjectivity condition on $\b{\fe}_{1}$ is
a little more challenging, because of the apparent absence of a counterpart
to \eqref{detcond} for this case. In the special case of a model
with only one lag, surjectivity of $z_{t-1}\elmap\fe_{1}(z_{t-1})$
is equivalent to \eqref{detcond}. Though easy to check, this is far
more than is necessary for surjectivity when additional lags are present.
Alternatively, if some elements of $z_{t}$ enter $\fe_{i}$ linearly,
as will often be the case in practice (as in our next example), then
surjectivity holds so long as the coefficient vectors associated with
(at least) $p$ of these variables (drawn from across the $k$ lags
of $z_{t}$ appearing on the r.h.s.)\ form a rank $p$ matrix.

\begin{example}[occasionally binding constraint]
 \citet{SM21} proposed the censored and kinked structural VAR (CKSVAR),
to model the effects of the zero lower bound (ZLB) constraint on monetary
policy: see also \citet{AMSV21} and \citet{CCMM25}. In his setting,
$y_{t}$ is a scalar variable whose positive part $y_{t}^{+}\defeq\max\{y_{t},0\}$
coincides with the central bank's policy rate (constrained to be
non-negative), while its (latent) negative part $y_{t}^{-}\defeq\min\{y_{t},0\}$
is the `shadow rate', which summarises the stance of monetary policy
desired by the central bank when the ZLB binds, to be engineered via
`unconventional' policy, such as asset purchases. The remaining
variables in the model are collected in the $(p-1)$-dimensional vector
$x_{t}$, in his case the inflation and unemployment rates; we then
set $z_{t}=(y_{t},x_{t}^{\trans})^{\trans}$.

To allow for possibility that the ZLB might actually constrain monetary
policy, $y_{t}^{+}$ and $y_{t}^{-}$ are permitted to enter the model
with different coefficients (in possibly all $p$ equations), 
\begin{equation}
\phi_{0}^{+}y_{t}^{+}+\phi_{0}^{-}y_{t}^{-}+\Phi_{0}^{x}x_{t}=c+\sum_{i=1}^{k}[\phi_{i}^{+}y_{t-i}^{+}+\phi_{i}^{-}y_{t-i}^{-}+\Phi_{i}^{x}x_{t-i}]+u_{t}\label{eq:cksvar}
\end{equation}
where $\phi_{i}^{\pm}\in\reals^{p}$ and $\Phi_{i}^{x}\in\reals^{p\times(p-1)}$,
for $i\in\{0,\ldots,k\}$. This may be rendered as an instance of
a threshold affine SVAR by defining\begin{subequations}\label{eq:CKSVAR-sets}
\begin{align}
\set Z^{(1)} & \defeq\set Z^{-}=\{(y,x)\in\reals^{p}\mid y\leq\tau_{1}\} & \Phi_{i}^{(1)} & \defeq[\phi_{i}^{-},\Phi_{i}^{x}]\\
\set Z^{(2)} & \defeq\set Z^{+}=\{(y,x)\in\reals^{p}\mid y>\tau_{1}\} & \Phi_{i}^{(2)} & \defeq[\phi_{i}^{+},\Phi_{i}^{x}],
\end{align}
\end{subequations} with $\tau_{1}=0$, and then setting $\fe_{i}(z)=\sum_{\ell=1}^{2}\indic\{z\in\set Z^{(\ell)}\}\Phi_{i}^{(\ell)}z$.
(Because there are only two regimes, it may also be equivalently cast
as a piecewise linear SVAR.) Here continuity of each $\fe_{i}$ is
guaranteed by the fact that $\Phi_{i}^{(1)}$ and $\Phi_{i}^{(2)}$
only differ by their first column; or equivalently by the linear restrictions
$(\Phi_{i}^{(1)}-\Phi_{i}^{(2)})E_{-1}=0$, for $E_{-1}$ the final
$p-1$ columns of $I_{p}$.

In \citet{SM21}, identification of the parameters of this model are
complicated by the fact that $y_{t}$ is only observed when $y_{t}>0$;
it is in effect censored at zero. His results therefore do not fall
within the framework of \thmref{shockid}, which implicitly assumes
that $z_{t}$ and $\b z_{t-1}$ are observed on the entirety of their
supports. However, the model \eqref{cksvar}--\eqref{CKSVAR-sets}
may (of course) also be applied to settings in which $y_{t}$ is observed
on both sides of the threshold $\tau_{1}$, which may be treated as
an additional unknown parameter to be identified and estimated. From
the foregoing discussion, for $\fe_{0}$ to satisfy the conditions
of \thmref{shockid}, we would need only to verify that $\det\Phi_{0}^{(1)}$
and $\det\Phi_{0}^{(2)}$ are both nonzero, and have the same sign.
Regarding $\b{\fe}_{1}$: if $k\geq2$ then it is sufficient to check
(or rather, test) whether the $p\times k(p-1)$ matrix $[\Phi_{1}^{x},\ldots,\Phi_{k}^{x}]$,
formed from the coefficients on the lags of $x_{t}$, has rank $p$;
whereas if $k=1$, then we would need $\{\Phi_{1}^{(\ell)}\}$ to
satisfy the same determinantal condition as $\{\Phi_{0}^{(\ell)}\}$
(a condition also sufficient when $k\geq2$).
\end{example}

\subsection{Smooth transitions}

The piecewise affine SVAR \eqref{pwaVAR}--\eqref{pwa-fi} may be
extended to allow for `smooth transitions' between the $L$ regimes.
In the literature on smooth transition (vector) autoregressive models,
the conventional approach (e.g.\ \citealp[Sec.~3.3]{HT13}) is to
replace the indicator functions $\indic\{z\in\set Z^{(\ell)}\}$ in
\eqref{pwa-fi} by smooth maps $\pi^{(\ell)}(z)$, so that now
\[
\fe_{i}^{\mathrm{ST}}(z)=\sum_{\ell=1}^{L}\pi^{(\ell)}(z)(\bar{\phi}_{i}^{(\ell)}+\Phi_{i}^{(\ell)}z),
\]
where $\pi^{(\ell)}(z)\in[0,1]$ and $\sum_{\ell=1}^{L}\pi^{(\ell)}(z)=1$
for all $z\in\reals^{p}$, so that $\fe_{i}^{\mathrm{ST}}(z)$ is
always a smooth, convex combination of the affine functions $z\elmap\bar{\phi}_{i}^{(\ell)}+\Phi_{i}^{(\ell)}z$,
for $\ell\in\{1,\ldots,L\}$. However, the fact that the \emph{gradient}
of $\fe_{i}^{\mathrm{ST}}$ is \emph{not} a convex combination of
those underlying affine regimes makes it difficult to reduce the high-level
conditions of \thmref{shockid} to a set of verifiable conditions
on the underlying regime-specific coefficient matrices, in the manner
of \eqref{detcond}. Indeed, as the simple example in \figref{Plot-of-functions}
illustrates, it may well be the case that $\fe_{0}^{\mathrm{ST}}$
is not invertible, even though its unsmoothed counterpart $\fe_{0}$
is.

As an alternative specification that allows for smooth transitions
between regimes, but which also retains the simplicity -- in terms
of verifying the conditions for \thmref{shockid} -- enjoyed by piecewise
affine models, consider
\begin{equation}
\fe_{i,K}(z)\defeq\int_{\reals^{p}}\fe_{i}(z+u)K(u)\diff u\label{eq:smoothed}
\end{equation}
where $\fe_{i}$ is a (continuous) piecewise affine function as in
\eqref{pwa-fi} above, and $K$ is a smooth (kernel) density function
with mean zero, with $m\geq1$ continuous derivatives that satisfy
the integrability condition
\begin{equation}
\int_{\reals^{p}}\smlnorm u\smlabs{\partial_{u_{\alpha_{1}}}\cdots\partial_{u_{\alpha_{n}}}K(u)}\diff u<\infty,\label{eq:integderiv}
\end{equation}
where $\partial_{u_{i}}$ denotes the partial derivative with respect
to the $i$th element of $u\in\reals^{p}$, for $\alpha_{i}\in\{1,\ldots,p\}$
and $1\leq n\leq m$.

\begin{figure}
\includegraphics[width=1\textwidth]{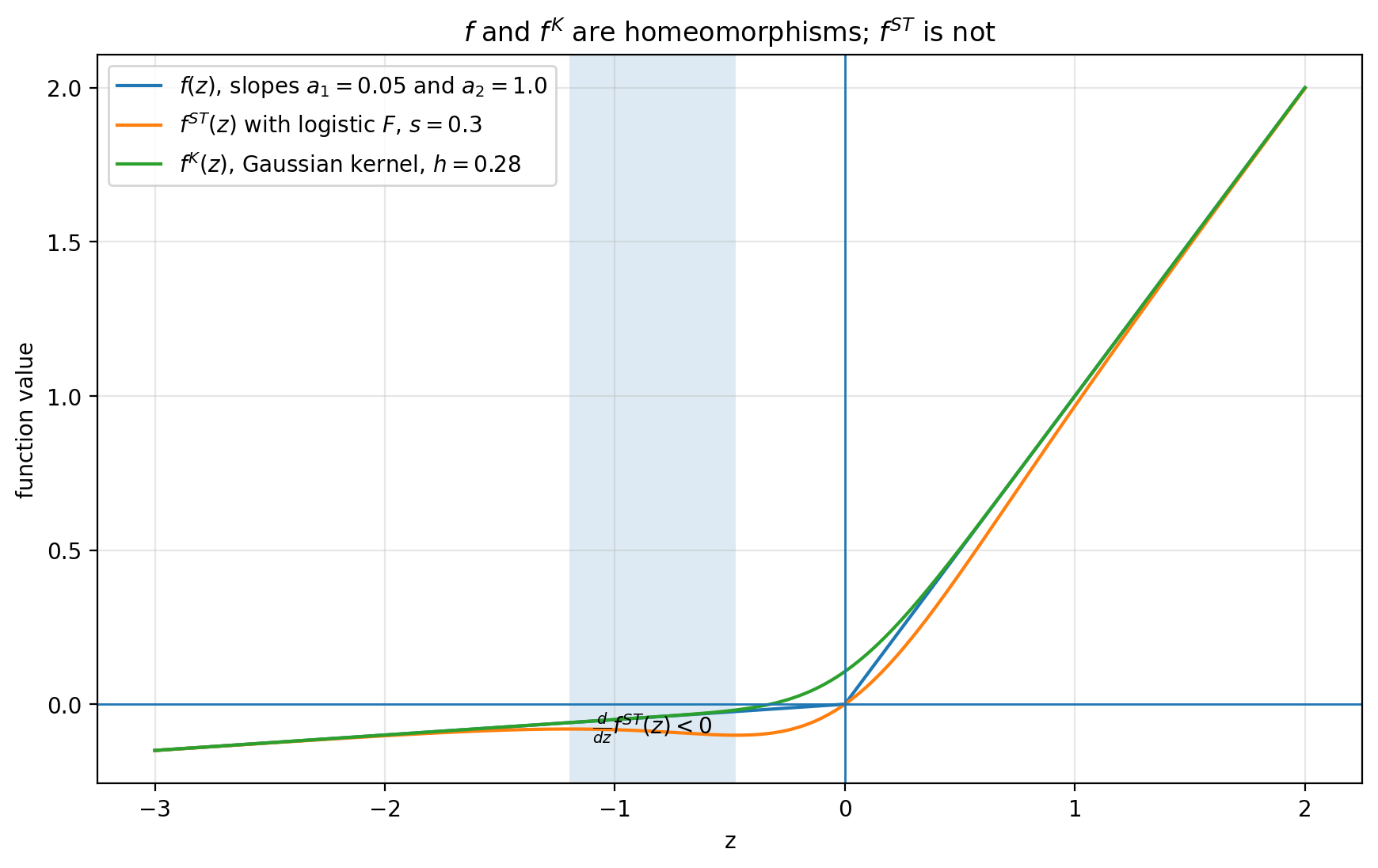}

{\footnotesize Plot of: $f(z)=a_{1}\min\{z,0\}+a_{2}\max\{z,0\}$;
$f^{\mathrm{ST}}\left(z\right)=[1-F(z)]a_{1}z+F(z)a_{2}z$, with $F(z)=(1+e^{z/s})^{-1}$;
and $f^{K}(z)=\int_{\reals}f(z+u)K(u)\diff u$, where $K(u)=h^{-1}\varphi(u/h)$
and $\varphi$ is the standard Gaussian pdf. }{\footnotesize\par}

\caption{Smooth transitions and invertibility}

\label{fig:Plot-of-functions}
\end{figure}

Our next result establishes that $\fe_{0,K}(z)$ is smooth (with as
many continuous derivatives as $K$ has), and moreover invertible
if the determinantal condition \eqref{detcond} is satisfied (its
proof appears in \appref{piecewiseaffine}). Recall that a function
is said to be bi-Lipschitz if both it and its inverse are Lipschitz
continuous.

\begin{prop}
\label{prop:pwacondition}Suppose that $\fe_{0}:\reals^{p}\setmap\reals^{p}$
is as in \eqref{pwa}, and is either a piecewise linear or threshold
affine function. Then:
\begin{enumerate}
\item \label{enu:orig}$\fe_{0}$ is invertible and bi-Lipschitz if and
only if \eqref{detcond} holds.
\end{enumerate}
Suppose that $K:\reals^{p}\setmap\reals$ is $m\geq1$ times continuously
differentiable and non-negative, satisfying $\int_{\reals^{p}}K(u)=1$,
$\int_{\reals^{p}}uK(u)\diff u=0$ and \eqref{integderiv}, and that
$\fe_{0,K}$ is formed by convolving $\fe_{0}$ with $K$, as in \eqref{smoothed}.
Then if \eqref{detcond} holds:
\begin{enumerate}[resume]
\item \label{enu:smooth}$\fe_{0,K}$ is invertible, bi-Lipschitz, and $m$
times continuously differentiable.
\end{enumerate}
\end{prop}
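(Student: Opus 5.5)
For part \ref{enu:orig}, I would lean on the known characterisation of invertibility of piecewise affine maps. By Theorems~1 and~4 of \citet{GLM80Ecta}, a continuous piecewise affine map of the piecewise linear type (cones generated by a common basis $\{a_i\}$) or the threshold affine type (parallel bands) is a bijection of $\reals^{p}$ if and only if the determinants of the regime matrices $\Phi_{0}^{(\ell)}$ all share a common nonzero sign, which is \eqref{detcond}. The piecewise linear case is covered directly by that result. In the threshold affine case I would check it by hand. Continuity across each hyperplane $a^{\trans}z=\tau_{\ell}$ forces $\Phi_{0}^{(\ell+1)}-\Phi_{0}^{(\ell)}=b_{\ell}a^{\trans}$, a rank-one update. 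Choosing coordinates with $a$ as a basis vector, $\fe_{0}$ then restricts on each slice $a^{\trans}z=s$ to an affine map, and everything reduces to a scalar monotonicity problem along $s$ after one linear change of variables. The determinants of rank-one updates are affine in the updated column, which makes this reduction work.

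For bi-Lipschitzness, $\fe_{0}$ is trivially Lipschitz with constant $\max_{\ell}\smlnorm{\Phi_{0}^{(\ell)}}$. The inverse is again continuous piecewise affine, with pieces $\fe_{0}(\set Z^{(\ell)})$ and matrices $(\Phi_{0}^{(\ell)})^{-1}$. A continuous piecewise affine map with finitely many convex pieces is globally Lipschitz: along any segment it is piecewise affine with finitely many breaks, so the constant is $\max_{\ell}\smlnorm{(\Phi_{0}^{(\ell)})^{-1}}$.

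For necessity, suppose some $\det\Phi_{0}^{(\ell)}=0$. Then $\fe_{0}$ is non-injective on the interior of $\set Z^{(\ell)}$, which is nonempty after discarding empty or degenerate pieces. If instead two determinants have opposite signs, then a local degree argument across a common face shows non-injectivity; this is the content of the GLM theorem.

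For part \ref{enu:smooth}, smoothness follows from the substitution $\fe_{0,K}(z)=\int\fe_{0}(v)K(v-z)\diff v$. I would differentiate under the integral, dominating the integrand by the linear growth $\smlnorm{\fe_{0}(v)}\le C(1+\smlnorm v)$ together with \eqref{integderiv}. This gives $m$ continuous derivatives.

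The key step is invertibility. Since $\fe_{0}$ is Lipschitz, it is differentiable a.e., and
\[
D\fe_{0,K}(z)=\int D\fe_{0}(z+u)K(u)\diff u=\sum_{\ell=1}^{L}w_{\ell}(z)\Phi_{0}^{(\ell)},
\]
with weights $w_{\ell}(z)=\int\indic\{z+u\in\set Z^{(\ell)}\}K(u)\diff u\ge0$ summing to one. So the Jacobian is a convex combination of the regime matrices. I would then show that, under \eqref{detcond} and the geometric structure, every such convex combination is nonsingular with the common determinant sign and has uniformly bounded inverse. This is the main obstacle, since convex combinations of same-sign-determinant matrices are not nonsingular in general.

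In the threshold case, all $\Phi_{0}^{(\ell)}$ differ only by rank-one terms $b a^{\trans}$ from $\Phi_{0}^{(1)}$. A convex combination is therefore $\Phi_{0}^{(1)}+(\sum_{\ell}\lambda_{\ell}c_{\ell})a^{\trans}$, and its determinant is affine in the $\lambda_{\ell}$, hence a convex combination of the $\det\Phi_{0}^{(\ell)}$. It is therefore bounded away from zero with the same sign. In the piecewise linear case, writing $A=[a_{1},\ldots,a_{p}]$, continuity forces the $i$th column of $\Phi_{0}^{(\ell)}A^{-\trans}$-type coordinates to depend only on the sign of $a_{i}^{\trans}z$. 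The determinant is then multi-affine in the per-coordinate choices, and since the weights factor coordinatewise, the determinant is again a convex combination of vertex determinants. I would try this first and check the factorisation carefully, since $K$ need not be a product kernel. A fallback is to expand the determinant multilinearly column by column.

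Given a uniform bound $\sup_{z}\smlnorm{D\fe_{0,K}(z)^{-1}}<\infty$, Hadamard's global inverse function theorem gives that $\fe_{0,K}$ is a $C^{1}$ diffeomorphism of $\reals^{p}$ with Lipschitz inverse. Lipschitzness of $\fe_{0,K}$ itself follows from the bounded Jacobian.
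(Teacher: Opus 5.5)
Your proof is correct. Part~(i) is essentially the paper's argument: both rest on Theorems~1 and~4 of \citet{GLM80Ecta}, and your segment arguments for the Lipschitz constants of $f_0$ and $f_0^{-1}$ replace the paper's citations. You should add one line justifying continuity of $f_0^{-1}$. It follows from invariance of domain, as in the paper, or directly from continuity of $f_0$ together with invertibility of each $\Phi_0^{(\ell)}$.

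For part~(ii) your route genuinely differs from the paper's. The paper first proves Lemma~\ref{lem:pwa-conhull}: $f_0(x'')-f_0(x')=\Phi(x''-x')$ for some $\Phi\in\operatorname{co}\{\Phi_0^{(\ell)}\}$. It obtains this by tracking regime crossings along the segment $[x',x'']$. Integrating against $K$ then gives $f_{0,K}(x'')-f_{0,K}(x')=M(x''-x')$ with $M$ in the hull. Nonsingularity of the hull plus compactness then yields injectivity and the lower Lipschitz bound at once, without using any smoothness of $K$.

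You instead work infinitesimally. You write $Df_{0,K}(z)$ as a convex combination of the $\Phi_0^{(\ell)}$, using a.e.\ differentiability of $f_0$ and dominated convergence, and then apply Hadamard's global inverse function theorem. This dispenses with Lemma~\ref{lem:pwa-conhull} entirely. It also delivers surjectivity explicitly, which the paper leaves implicit; the paper would need invariance of domain plus the properness implied by the lower Lipschitz bound. The cost is that invertibility now leans on $m\geq1$ and on a heavier global theorem. The two routes are closely linked: integrating your Jacobian identity along segments recovers the paper's representation for $f_{0,K}$.

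The common core, nonsingularity of every hull element, is handled identically in the threshold case. In the piecewise linear case your worry about non-product kernels is unnecessary. In transformed coordinates the $i$th column of each regime matrix depends only on the sign of the $i$th coordinate, so any convex combination depends on the weights only through the marginals $\mu_i\in[0,1]$. Expanding the determinant column by column then gives $\sum_{s\in\{+,-\}^p}\prod_i\mu_i^{(s_i)}\det[\psi_1^{s_1},\ldots,\psi_p^{s_p}]$, where $\mu_i^{(+)}=\mu_i$ and $\mu_i^{(-)}=1-\mu_i$. The product weights therefore arise automatically, whatever $K$ is. So your ``fallback'' is exactly the argument, and it is what the paper's induction over rank-one column updates implements.
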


\section{Application: a nonlinear Phillips curve?}

\label{sec:phillipscurve}

\subsection{Formulation as an endogenous regime-switching SVAR}

The inflation surge that followed the COVID-19 pandemic reignited
academic interest in the possibility of nonlinearity in the transmission
of supply shocks to inflation. However, views on the relevance of
nonlinearity are divided. On the one hand, \citet{BallLeighMishra2022}
and \citet{BenignoEggertsson2023} find evidence of significant nonlinearity
in their formulations of the Phillips curve, and argue that nonlinearity
is needed to account for the recent inflation surge. On the other
hand, \citet{BeaudryHouPortier2025} caution that the evidence on
nonlinearity is not robust to functional form assumptions, especially
as pertains to the treatment of expectations. Reconsidering this debate,
through the lens of an endogenous regime-switching SVAR, provides
an illustrative application of the methodology developed in this paper.

Our identification result can be useful in this debate because it
highlights the following: since all observationally equivalent structures
are identified up to a (linear) orthogonal transformation, then if
one finds no (statistically significant) evidence of nonlinearity
under one specific identification scheme, then this will remain true
\emph{irrespective} of how the model is identified. Indeed, one can
see from the orthogonal reduced-form parametrisation developed in
\subsecref{orthogreducedform} above, that the structural parameters
$\fe_{0}$ (and $\b{\fe}_{1}$) will be nonlinear if and only if their
normalised (and exactly identified) counterparts $\ga_{0}$ (and $\b{\ga}_{1}$)
are also nonlinear, as will be the case for $Q\fe_{0}$ (and $Q\b{\fe}_{1}$)
for \emph{any} $Q\in\orths(p)$. Thus the presence of nonlinearity
can be tested for in a way that is robust to the identifying scheme
employed. To be clear, this is a consequence of modelling the joint
determination of $z_{t}=(\log\theta_{t},\pi_{t})^{\trans}$ in its
entirety; the argument does not carry over to the methodology employed
in the aforementioned papers, because these provide only a single-equation
analysis of the Phillips curve, and so their findings are potentially
contingent on the assumptions made in order to identify that equation.

Building on the development already given to this point in \exaref{phillips},
inspired by the recent work of \citet{BenignoEggertsson2023} we consider
the following endogenous regime-switching SVAR for $z_{t}=(\log\theta_{t},\pi_{t})^{\trans}$,
\begin{align}
\Phi_{0}^{(\ell_{t})}z_{t} & =c+\sum_{i=1}^{k}\Phi_{i}^{(\ell_{t-i})}z_{t-i}+\err_{t}, & \err_{t} & \distiid N[0,I_{2}]\label{eq:=000020model}
\end{align}
where $\theta_{t}=v_{t}/u_{t}$ is the vacancy--unemployment ratio,
$\pi_{t}$ is consumer price inflation, $\err_{t}$ are the structural
shocks, and
\begin{equation}
\ell_{t}:=\begin{cases}
1, & \text{if }z_{1t}\leq0\text{ (\textquoteleft normal\textquoteright)},\\
2, & \text{if }z_{1t}>0\text{ (\textquoteleft labour shortage\textquoteright),}
\end{cases}\label{aliknoint}
\end{equation}
where $z_{1t}=\log\theta_{t}$. This model thus has two regimes, determined
by the sign of $z_{1t}$. Following the arguments that led to \eqref{pc-cont}
above, to ensure continuity of the model in both $z_{t}$ and its
lags, we parametrise the regime-dependent coefficient matrices non-redundantly
as
\begin{equation}
\Phi_{i}^{(\ell)}=\begin{bmatrix}\Phi_{i,11}^{(\ell)} & \Phi_{i,12}^{(\ell)}\\
\Phi_{i,21}^{(\ell)} & \Phi_{i,22}^{(\ell)}
\end{bmatrix}=\begin{bmatrix}\Phi_{i,11}^{(\ell)} & \Phi_{i,12}\\
\Phi_{i,21}^{(\ell)} & \Phi_{i,22}
\end{bmatrix}\sep\ell\in\{1,2\}\label{eq:=000020continuity}
\end{equation}
so that only the coefficients of the regime-determining variable,
$z_{1t}$, are permitted to vary across the two regimes. The model
is then guaranteed to yield a solution for $z_{t}$, for every possible
value of the r.h.s.\ of \eqref{=000020model}, provided that $\det\Phi_{0}^{(1)}\cdot\det\Phi_{0}^{(2)}>0$.

To obtain a just-identified specification, by \thmref{shockid} it
suffices to impose $p(p-1)/2=1$ restrictions on the model parameters
(see also the discussion in \subsecref{orthogreducedform} above).
For some identifying schemes, this may involve imposing a restriction
on only one of the two regimes. However, the identifying assumption
in \citet{BenignoEggertsson2023} corresponds to the `recursive' or
`Cholesky' restriction under which (a shock to) inflation $\pi_{t}=z_{2t}$
has no contemporaneous effect on tightness $\log\theta_{t}=z_{1t}$,
and thus that the matrix $\Phi_{0}^{(\ell)}$ is lower triangular
for $\ell\in\{1,2\}$. In view of \eqref{=000020continuity}, this
in fact constitutes only a \emph{single} restriction on the model
parameters, that $\Phi_{0,12}=0$, and so is exactly identifying rather
than over-identifying. The second equation of the nonlinear SVAR \eqref{=000020model}
can in this case be estimated by nonlinear regression (with $\pi_{t}$
as the dependent variable), as was done by \citet{BenignoEggertsson2023}.

\subsection{Testing for linearity in the Phillips curve}

Let $\{\Gamma_{i}^{(\ell)}\}$ momentarily denote the SVAR parameters
corresponding to the recursive identification scheme of \citet{BenignoEggertsson2023}.
In light of \subsecref{orthogreducedform}, because of the lower-triangular
structure imposed on the Jacobian $\Phi_{0}^{(1)}$ of $\fe_{0}$
(at some nominated point $z_{0}$ in the `normal' regime), these
are the coefficients associated with the orthogonal reduced-form parametrisation
\eqref{svar-equiv} of the SVAR, when the $\ga_{j}$ are modelled
as piecewise linear. \thmref{shockid}, together with a sign-normalisation
of the shocks, then implies that \emph{all} observationally equivalent
models can be obtained by a \emph{common} rotation of the recursively
identified model, i.e.\  $\Phi_{i}^{(\ell)}=Q\Gamma_{i}^{(\ell)}$
for $\ell\in\{1,2\}$ and $i\in\{0,\ldots k\}$, where $Q\in\orths(p)$
with $\det Q>0$.

Because $Q$ is \emph{not} regime dependent, every observationally
equivalent parametrisation of the model obtained in this way will
exhibit regime dependence if, and only if, this is also true of the
parameters $\{\Gamma_{i}^{(\ell)}\}$ obtained under the \citet{BenignoEggertsson2023}
identification scheme. The presence of some regime dependence in $\{\Gamma_{i}^{(\ell)}\}$
is thus a necessary condition for the existence of a nonlinear Phillips
curve under \emph{any} identification scheme. Since the null hypothesis
of no regime dependence in $\{\Gamma_{i}^{(\ell)}\}$ is testable,
a failure to reject it would provide evidence, in favour of a linear
Phillips curve, that is robust to all possible identifying schemes.
(In this respect, our imposition of the \citealp{BenignoEggertsson2023},
restrictions merely provides a convenient way to normalise the system,
in the manner of \subsecref{orthogreducedform}).

Observe that the specification of \eqref{=000020model} allows for
nonlinearities in all lags of the SVAR. This permits the dynamic response
of inflation to labour market tightness shocks to be nonlinear, even
if the impact responses are linear, i.e., even if $\Phi_{0}^{(\ell)}$
is regime-invariant. We therefore consider two separate tests of linearity.
The first tests
\begin{equation}
H_{0}^{\mathrm{NS}}:\Phi_{0}^{\left(1\right)}=\Phi_{0}^{\left(2\right)}\qquad\text{v.}\qquad H_{1}^{\mathrm{NS}}:\Phi_{0}^{(1)}\neq\Phi_{0}^{(2)}.\label{eq:no-switching}
\end{equation}
The null hypothesis $H_{0}^{\mathrm{NS}}$ can be interpreted as saying
that there is no \emph{endogenous} regime switching, and implies that
the impact effect of labour tightness shocks on inflation does not
depend on the state of the labour market.

However, $H_{0}^{\mathrm{NS}}$ does not exclude the possibility that
$\Phi_{i}^{(1)}\neq\Phi_{i}^{(2)}$ for some $i\in\{1,\ldots,k\}$,
in which case the dynamic effects of tightness shocks may still be
regime dependent, at longer horizons. This motivates our second,
more restrictive hypothesis:
\begin{equation}
H_{0}^{\mathrm{lin}}:\Phi_{i}^{(1)}=\Phi_{i}^{(2)}\sep\forall i\in\{0,\ldots,k\}\qquad\text{v.}\qquad H_{1}^{\mathrm{lin}}:\Phi_{i}^{(1)}\neq\Phi_{i}^{(2)}\sep\text{for some }i,\label{eq:linearity}
\end{equation}
which under the null entails a linear SVAR. Failure to reject $H_{0}^{\mathrm{lin}}$
would suggest that a linear SVAR provides an adequate description
of the dynamic causal effects (modulo the usual invertibility caveats),
and thus that the Phillips curve is linear, in a very strong sense,
under any identification scheme.

\subsection{Results}

We use the data from the 2025 version of \citet{BenignoEggertsson2023},
available on the authors' websites. Specifically, inflation $\pi_{t}$
is the quarterly, annualised core CPI inflation (excluding food and
energy), constructed from monthly CPI data averaged to quarterly frequency
and sourced from the BLS via FRED. The vacancy-to-unemployment ratio
$\theta_{t}=v_{t}/u_{t}$ is the ratio of job vacancies to unemployed
workers, using the \citet{Barnichon2010} vacancy series (as updated
by the author), also averaged from a monthly to a quarterly frequency.
We estimate the piecewise linear SVAR \eqref{=000020model} with two
lags ($k=2$) over the sample periods 1960Q1-2024Q4 and 2008Q1-2024Q4,
to mirror the analysis of \citet{BenignoEggertsson2023}.

\subsubsection{Testing linearity}

\begin{table}
\centering %
\begin{tabular}{lccc}
\toprule 
Null Hypothesis  & Restrictions & \multicolumn{2}{c}{LR Statistic {[}$p$-value{]}}\tabularnewline
 &  & 1960Q1--2024Q4 & 2008Q1--2024Q4\tabularnewline
\midrule 
No Endogenous Switching \eqref{no-switching} & 2 & 21.7 {[}0.00{]} & 38.6 {[}0.00{]}\tabularnewline
Linear SVAR \eqref{linearity} & 6 & 38.0 {[}0.00{]} & 51.2 {[}0.00{]}\tabularnewline
\bottomrule
\end{tabular}\smallskip{}
\foreignlanguage{english}{ }%
\parbox[c]{0.9\textwidth}{%
{\small\textit{Notes:}}{\small{} The model is a bivariate SVAR in log
vacancy--unemployment ratio (log $\theta$) and wage inflation with
two lags and two regimes, determined by the sign of (log $\theta$).
Both tests are against the alternative of an unrestricted piecewise
linear SVAR. Asymptotic $p$-values based on the $\chi^{2}$ distribution
with degrees of freedom equal to the number of restrictions.}%
} 

\caption{Likelihood ratio tests of linearity hypotheses}

\label{tab:lr_tests_svar}
\end{table}

\tabref{lr_tests_svar} reports likelihood ratio (LR) tests of our
two linearity hypotheses: $H_{0}^{\mathrm{NS}}$ (no endogenous regime
switching) and $H_{0}^{\mathrm{lin}}$ (fully linear SVAR). The results
clearly reject the linearity hypothesis, both in its weak and strong
forms. The apparent deterioration in fit of the linear models is even
stronger in the shorter, more recent, sample. 

Even though failure to reject would have been conclusive evidence
against nonlinearity, these results are not enough to conclude that
the Phillips curve itself, being only one equation in our bivariate
system, is nonlinear. They imply that impulse responses to identified
structural inflation and tightness shocks will be significantly state-dependent
under any identification scheme, but it remains to be seen what this
state dependence looks like in the Phillips curve that emerges from
any specific identification scheme. We turn to this question next.

\subsubsection{Phillips curve slope}

\begin{figure}
\begin{centering}
\includegraphics[viewport=0bp 0bp 600bp 376bp,clip,width=0.9\textwidth]{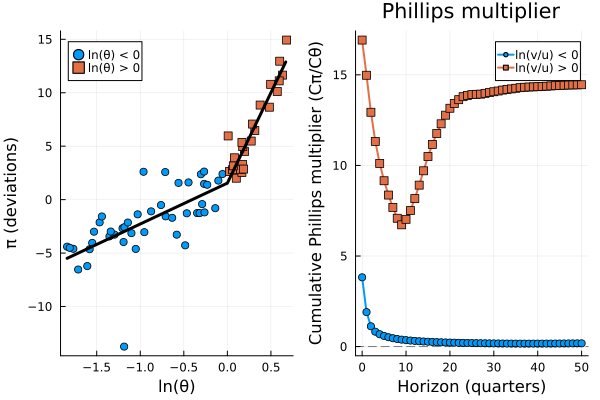}
\par\end{centering}
{\footnotesize\emph{Left panel:}}{\footnotesize{} scatter plot of inflation
deviations (inflation after removing all right-hand side contributions
except $\log\theta_{t}$) against $\log\theta_{t}$, sample 2008Q1--2024Q4.
Solid lines show the estimated regime-specific Phillips curves.}{\footnotesize\emph{
Right panel:}}{\footnotesize{} cumulative Phillips multiplier (ratio
of cumulative inflation IRF to cumulative tightness IRF) under each
regime, sample 2008Q1--2024Q4. IRFs are computed starting from 2009Q3
($\log\theta_{t}=-1.84$, loose labour market regime) and 2022Q2 ($\log\theta_{t}=0.68$,
labour shortage regime).}{\footnotesize\par}

\caption{Nonlinear Phillips curve and state-dependent multipliers}

{\footnotesize\label{fig:nonlinear_PC}}{\footnotesize\par}
\end{figure}

Further evidence on the nonlinearity of the Phillips curve is obtained
by computing estimates of its slope under both regimes. We do this
in two different ways. First, we produce a kinked Phillips curve plot
(equivalent to Figure 6(b) of \citealp{BenignoEggertsson2023}). This
is shown in the left panel of \figref{nonlinear_PC}. The scatterplot
shows inflation after removing the effects of all explanatory variables
from the supply equation in model \eqref{=000020model} except $\log\theta_{t}$.
The solid lines trace out the estimated Phillips curve in the $(\log\theta_{t},\pi_{t})$
space. In particular, the slope coefficient under each regime is computed
as $-\Phi_{0,21}^{\left(\ell\right)}/\Phi_{0,22}$, which is given
by the equation in the bottom row of \eqref{=000020model}, solved
for $z_{2,t}=\pi_{t}$, and using the fact that $\Phi_{0,22}$ is
regime-independent, as per \eqref{=000020continuity}. For the 2008Q1-{}-2024Q4
sample, the estimated slopes are $\hat{\beta}^{(1)}=3.82$ ($\log\theta_{t}\leq0$
regime) and $\hat{\beta}^{(2)}=16.92$ ($\log\theta_{t}>0$ regime).

The right panel of \figref{nonlinear_PC} shows a dynamic Phillips
curve multiplier under each regime, computed from the state-dependent
IRFs. We choose two starting points that are representative of the
two regimes: 2009Q3 ($\log\theta_{t}=-1.84$, the Great Recession
trough) for the $\log\theta_{t}\leq0$ regime, and 2022Q2 ($\log\theta_{t}=0.68$,
the post-COVID peak) for the $\log\theta_{t}>0$ regime. The multiplier
is the ratio of the cumulative inflation IRF (at horizon $h$) to
the cumulative tightness IRF following a market tightness shock which
raises $\log\theta_{t}$ by 1 unit over the next $h$ periods: 
\[
\text{Slope}_{h}^{PC}=\frac{\sum_{s=0}^{h}\frac{\partial\pi_{t+s}}{\partial\err_{\theta,t}}}{\sum_{s=0}^{h}\frac{\partial\log\theta_{t+s}}{\partial\err_{\theta,t}}}.
\]

Both approaches show a substantially steeper Phillips curve in the
tight labour market regime $\log\theta_{t}>0$ compared to the loose
labour market regime $\log\theta_{t}\leq0$. The results are qualitatively
and quantitatively consistent with \citet{BenignoEggertsson2023},
which is not surprising given that we used the same identifying assumption
as them.

\section{Extensions}

\label{sec:hetero}

The appearance of a nonlinear transformation on the l.h.s.\ of the
(endogenously) nonlinear SVAR 
\begin{equation}
\fe_{0}(z_{t})=\b{\fe}_{1}(\z_{t-1})+\err_{t}\label{eq:nlsvaragain}
\end{equation}
entails that the model automatically accommodates certain forms of
regime-dependent heteroskedasticity. This can be readily seen, for
example, when $\fe_{0}$ has the piecewise linear form
\[
\fe_{0}(z_{t})=\sum_{\ell=1}^{L}\indic\{z_{t}\in\set Z^{(\ell)}\}\Phi_{0}^{(\ell)}z_{t}.
\]
In this case, whenever the r.h.s.\ of the model is such that $z_{t}\in\set Z^{(\ell_{t})}$,
the model behaves \emph{locally} like a linear SVAR, with reduced
form
\[
z_{t}=(\Phi_{0}^{(\ell_{t})})^{-1}\b{\fe}_{1}(\z_{t-1})+(\Phi_{0}^{(\ell_{t})})^{-1}\err_{t},
\]
for all $\err_{t}$ such that $z_{t}$ continues to lie in $\set Z^{(\ell_{t})}$.
(Note that unlike in a model with \emph{exogenous} regimes, $\ell_{t}$
depends on $\err_{t}$, and so the preceding does not hold for all
$\err_{t}$.)

Nonetheless, in some situations it may be desirable to augment the
model to allow for ARCH-type conditional heteroskedasticity, in which
the variances of the structural shocks depend on certain (observed)
predetermined variables. To that end, consider the following extension
of \eqref{nlsvaragain}, to
\begin{equation}
\fe_{0}(z_{t})=\b{\fe}_{1}(\z_{t-1}^{(1)},\z_{t-1}^{(2)},v_{t-1})+\sigma(\z_{t-1}^{(2)},v_{t-1})\err_{t},\label{eq:nlsemhet}
\end{equation}
where $\b z_{t-1}^{(1)}$ and $\b z_{t-1}^{(2)}$ partition (into
vectors of dimension $d_{(1)}+d_{(2)}=kp$) the elements of $\b z_{t-1}$,
while $\{v_{t}\}$ is strictly exogenous in the sense of being independent
of $(\b z_{0},\{\err_{t}\})$, and takes values in the (possibly discrete)
set $\mathcal{V}\subset\reals^{d_{v}}$. (Rather than requiring $\{v_{t}\}$
to be stationary, we suppose that there is a measure $\mu_{v}$ on
$\mathcal{V}$ to which the distribution of $v_{t}$ is equivalent,
for every $t\geq0$.)

The skedastic function, $\sigma(\cdot)$, allows the volatilities
of the structural shocks
\[
w_{t}\defeq\sigma(\z_{t-1}^{(2)},v_{t-1})\err_{t}
\]
to depend on $(\z_{t-1}^{(2)},v_{t-1})$; we require $\sigma(\cdot)$
to be a diagonal matrix (with strictly positive entries), so that
the structural shocks $w_{t}$ remain mutually uncorrelated (cf.\ Section~14.2
of \citealp{KL17book}). By introducing $\{v_{t}\}$, we also extend
the model so as to permit the r.h.s.\ to depend on processes that
are exogenous to the SVAR (such as deterministic processes). We continue
to maintain that $\{\err_{t}\}$ is i.i.d.\ with mean zero and variance
$I_{G}$, and moreover that $\err_{t+1}$ is independent of $(\b z_{0},\{\err_{s},v_{s}\}_{s\leq t})$,
for all $t\geq0$.

Under the assumptions given below, the augmented model \eqref{nlsemhet}
yields the following (time-invariant) density for $z_{t}$ conditional
on $(\b z_{t-1},v_{t-1})$,
\[
\varphi_{z_{t}\mid\b z_{t-1},v_{t-1}}(\xi\mid\b{\xi}_{-1},\upsilon)=\den\{\sigma(\b{\xi}_{-1}^{(2)},\upsilon)^{-1}[\fe_{0}(\xi)-\b{\fe}_{1}(\b{\xi}_{-1}^{(1)},\b{\xi}_{-1}^{(2)},\upsilon)]\}\cdot\smlabs{\det D\fe_{0}(\xi)},
\]
where $\b{\xi}_{-1}\in\reals^{kp}$ is partitioned into $(\b{\xi}_{-1}^{(1)},\b{\xi}_{-1}^{(2)})$
conformably with that of $\b z_{t-1}$ into $(\b z_{t-1}^{(1)},\b z_{t-1}^{(2)})$.
Since the likelihood for $\{z_{t}\}_{t=1}^{n}$ conditional on $(\b z_{0},\{v_{t}\}_{t=0}^{n-1})$
can be expressed entirely in terms of these conditional densities,
we continue to regard two alternative parametrisations as being observationally
equivalent if they yield the same $\varphi_{z_{t}\mid\b z_{t-1},v_{t-1}}$
(up to the usual a.e.\ equivalences), similarly to \secref{svarident}
above. 

The parameters $(\fe_{0},\b{\fe}_{1},\sigma)$ of the model \eqref{nlsemhet}
are, in a quite trivial sense, indistinguishable from $(\Lambda\fe_{0},\Lambda\b{\fe}_{1},\Lambda\sigma)$,
if $\Lambda$ is a diagonal matrix with strictly positive entries.
Such a rescaling has no effect on the (scale-normalised) impulse responses
implied by the model parameters, and is merely a consequence of the
lack of a scale normalisation in \eqref{nlsemhet} -- something that
was previously delivered, in the context of \eqref{nlsvaragain},
by the requirement that $\expect\err_{t}\err_{t}^{\trans}=I_{p}$.
Letting $\set S\ni\sigma$ denote the parameter space for the skedastic
function, we may fix the overall scale of the model by requiring every
$\tilde{\sigma}\in\set S$ to satisfy
\begin{equation}
\tilde{\sigma}(\b z^{(2)\ast},v^{\ast})=I_{p},\label{eq:scalenorm}
\end{equation}
at some (user specified) value of $(\b z^{(2)\ast},v^{\ast})\in\reals^{d_{(2)}}\times\mathcal{V}$.
(To prevent this from being satisfied simply by a modification of
$\tilde{\sigma}$ on a null set, we further maintain that $\tilde{\sigma}$
is continuous at $(\b z^{(2)\ast},v^{\ast})$, and that $\mu_{v}$
has strictly positive measure in every neighbourhood of $v^{\ast}$.)

Here we shall also relax the requirement that $\b f_{1}$ be continuous
in \emph{all} of its arguments: in fact we only require continuity
of $\b z^{(1)}\elmap\b{\fe}_{1}(\z^{(1)},\z^{(2)},v)$, at the cost
of a strengthening of the surjectivity condition given in \assref{svardgp}\ass{.}\ref{enu:shockid:surj}
above. This reflects the crucial role that the variables $\z_{t-1}^{(1)}$,
which are excluded from the skedastic function, now play in delivering
the identification of the model parameters.

\assumpname{EXT}
\begin{assumption}
\label{ass:ext}\assref{svarpar} and \assref{svardgp} hold with
only the following modifications, which apply for every $(\b z^{(2)},v)\in\reals^{d_{(2)}}\times\mathcal{V}$:
\begin{enumerate}[leftmargin=1.5cm]
\item[\ass{PS.1$^\prime$}] for every $\tilde{\fe}_{0}\in\fespc_{0}$ and $\tilde{\b{\fe}}\in\bigfspc_{1}$:
$\tilde{\fe}_{0}$ and $\b z^{(1)}\elmap\tilde{\b{\fe}}_{1}(\b z^{(1)},\b z^{(2)},v)$
are locally Lipschitz;
\item[\ass{DGP.1$^\prime$}] $\b z^{(1)}\elmap\b{\fe}_{1}(\b z^{(1)},\b z^{(2)},v)$ is surjective
(onto $\reals^{p}$), with $\rank D_{\b z^{(1)}}\b{\fe}(\b z^{(1)},\b z^{(2)},v)=p$
for almost every $\b z^{(1)}\in\reals^{d_{(1)}}$.
\end{enumerate}
Moreover, for every $\tilde{\sigma}\in\set S$: $\tilde{\sigma}(\b z^{(2)},v)$
is a $(p\times p)$ diagonal matrix with strictly positive entries,
for every $(\b z^{(2)},v)\in\reals^{d_{(2)}}\times\mathcal{V}$; and
the scale normalisation \eqref{scalenorm} holds.
\end{assumption}

We may thus state the main result of this section, which extends \thmref{shockid}
above by allowing for: (i) ARCH-type heteroskedasticity; (ii) dependence
of the r.h.s.\ of the model on an exogenous process $\{v_{t}\}$,
and (iii) $\b{\fe}_{1}$ to be discontinuous in some arguments.
\begin{thm}
\label{thm:svarhetero}Suppose that \assref{ext} holds. Then there
exists a $\tilde{\sigma}\in\set S$ and a $\tilde{\den}\in\denspc$
such that $(\tilde{\fe}_{0},\tilde{\b{\fe}}_{1},\tilde{\sigma},\tilde{\den})$
is observationally equivalent to $(\fe_{0},\b{\fe}_{1},\sigma,\den)$,
if and only if there exists a $Q\in\orths(p)$ such that, for almost
every $\b z^{(2)}\in\reals^{d_{(2)}}$ and $\mu_{v}$-almost every
$v\in\mathcal{V}$:
\begin{align*}
\tilde{\fe}_{0}(z) & =Q\fe_{0}(z)\sep\forall z\in\reals^{p}, & \tilde{\bigf}_{1}(\z^{(1)},\z^{(2)},v) & =Q\bigf_{1}(\z^{(1)},\z^{(2)},v)\sep\forall\z^{(1)}\in\reals^{d_{(1)}},
\end{align*}
and
\begin{equation}
Q\sigma^{2}(\z^{(2)},v)Q^{\trans}\label{eq:ske-id}
\end{equation}
is a diagonal matrix; in which case $\tilde{\sigma}^{2}(\z^{(2)},v)=Q\sigma^{2}(\z^{(2)},v)Q^{\trans}$.
\end{thm}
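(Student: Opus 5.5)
The plan is to reduce \thmref{svarhetero} to the argument behind \thmref{shockid}, applied \emph{conditionally} on each fixed value of $(\b z^{(2)},v)$. The ``if'' direction is immediate. Given $Q$ with the stated properties, set $\tilde{\sigma}\defeq(Q\sigma^{2}Q^{\trans})^{1/2}$, which is diagonal with positive entries. Define $\tilde{\den}$ as the density of $\tilde{\err}_{t}\defeq\tilde{\sigma}^{-1}Q\sigma\err_{t}$. The matrix $R\defeq\tilde{\sigma}^{-1}Q\sigma$ satisfies $RR^{\trans}=I_{p}$, so it is orthogonal. One complication is that $R$ depends on $(\b z^{(2)},v)$. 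I would therefore require that, on the relevant a.e.\ set, $Q\sigma^{2}Q^{\trans}$ diagonal forces $R$ to be a signed permutation commuting appropriately, or else handle this by noting that $\tilde{\den}$ may be taken conditional. Substituting into the conditional density shows equivalence, and the scale normalisation \eqref{scalenorm} holds because $\sigma(\b z^{(2)\ast},v^{\ast})=I_{p}$ forces $QQ^{\trans}=I_{p}$.

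For the ``only if'' direction, fix $(\b z^{(2)},v)$ outside a null set and view the map $\b z^{(1)}\elmap\varphi(\xi\mid\b z^{(1)},\b z^{(2)},v)$ as a nonlinear SEM. In this SEM the ``instruments'' are $\b z^{(1)}$, the r.h.s.\ is $\b z^{(1)}\elmap\sigma^{-1}\b{\fe}_{1}(\b z^{(1)},\b z^{(2)},v)$, and the l.h.s.\ is $\xi\elmap\sigma^{-1}\fe_{0}(\xi)$, with $\sigma$ evaluated at the fixed $(\b z^{(2)},v)$. Under \ass{DGP.1$^\prime$} and \ass{PS.1$^\prime$} these satisfy exactly the hypotheses used in \appref{nlsem} for \thmref{shockid}: surjectivity and a.e.\ full rank in $\b z^{(1)}$, local Lipschitzness of $\fe_{0}$ and of its inverse, and the well-behaved local maximum of $\den$. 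Note that the proof of \thmref{shockid} must not use any normalisation beyond $\tilde{\den}\in\denspc$ and $\fe_{0}(0)=0$. Running it conditionally yields an orthogonal $R(\b z^{(2)},v)$ such that
\[
\tilde{\sigma}^{-1}\tilde{\fe}_{0}=R\,\sigma^{-1}\fe_{0},\qquad\tilde{\sigma}^{-1}\tilde{\b{\fe}}_{1}(\cdot,\b z^{(2)},v)=R\,\sigma^{-1}\b{\fe}_{1}(\cdot,\b z^{(2)},v).
\]

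The key step is to remove the dependence on $(\b z^{(2)},v)$. From the first identity, $\tilde{\fe}_{0}=M\fe_{0}$ with $M\defeq\tilde{\sigma}R\sigma^{-1}$ an invertible matrix. Because $\fe_{0}$ is a bijection onto $\reals^{p}$, its image spans $\reals^{p}$, so $M$ is uniquely determined by the pair $(\fe_{0},\tilde{\fe}_{0})$ alone. Hence $M$ does not vary with $(\b z^{(2)},v)$. Evaluating at $(\b z^{(2)\ast},v^{\ast})$, which is legitimate by the continuity of $\tilde{\sigma}$ and $\sigma$ there together with the positivity of $\mu_{v}$ near $v^{\ast}$, gives $M=R(\b z^{(2)\ast},v^{\ast})\eqdef Q\in\orths(p)$. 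The second identity then gives $\tilde{\b{\fe}}_{1}=Q\b{\fe}_{1}$. Finally, $\tilde{\sigma}R=Q\sigma$ implies $\tilde{\sigma}^{2}=\tilde{\sigma}RR^{\trans}\tilde{\sigma}=Q\sigma^{2}Q^{\trans}$, so $Q\sigma^{2}Q^{\trans}$ is diagonal.

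I expect the main obstacle to be the conditional application of \thmref{shockid}'s proof. Two points need care. First, the density $\tilde{\den}$ is common across $(\b z^{(2)},v)$ but is identified only up to rotation conditionally, so I must check that the appendix argument genuinely works fibrewise. Second, the a.e.\ qualifiers in $(\b z^{(2)},v)$ must be handled, by using Fubini to obtain equality of the conditional densities for a.e.\ fibre. I would also revisit the ``if'' direction to confirm that the common $\tilde{\den}$ exists, i.e.\ that $R$ preserves $\den$ on each fibre. I would try first to show that diagonality of $Q\sigma^{2}Q^{\trans}$ forces $R=Q$ wherever needed.
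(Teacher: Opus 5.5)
Your ``only if'' direction is correct and follows the paper's proof essentially step for step. The paper also applies the SEM result (\thmref{matzkin}) on each fibre, obtaining an orthogonal $P(z)$ with $\tilde\sigma(z)^{-1}[\tilde\fe_0+\tilde{\b\fe}_1]=P(z)\sigma(z)^{-1}[\fe_0+\b\fe_1]$. It then uses the location normalisation and the surjectivity of $\fe_0$ to show that $\tilde\sigma(z)P(z)\sigma(z)^{-1}\eqdef Q$ is constant, and deduces $\tilde\sigma^2=Q\sigma^2Q^{\trans}$. Orthogonality of $Q$ comes from the scale normalisation, via a sequence $z_n\to(\b z^{(2)\ast},v^\ast)$ inside the full-measure set. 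You should phrase your ``evaluation'' at the normalisation point the same way, because $R$ need not be defined at that point: $R(z_n)=\tilde\sigma(z_n)^{-1}M\sigma(z_n)\to M$, and $\orths(p)$ is closed.

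The gap is in the ``if'' direction. You leave it unresolved, and neither of your proposed remedies works.
\begin{enumerate}
\item Taking $\tilde\den$ ``conditional'' is not allowed. Observational equivalence requires a single $\tilde\den\in\denspc$, because the shocks are i.i.d.\ and independent of $(\b z_{t-1},v_{t-1})$.
\item You cannot hope to show that $Q$ is a signed permutation. If $\sigma=\lambda I_p$, then every $Q\in\orths(p)$ satisfies the hypotheses.
\end{enumerate}
The missing one-line observation is this. $Q\sigma Q^{\trans}$ is symmetric positive definite and its square is $Q\sigma^{2}Q^{\trans}$. By uniqueness of positive definite square roots, $\tilde\sigma=(Q\sigma^{2}Q^{\trans})^{1/2}=Q\sigma Q^{\trans}$. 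Hence $R=\tilde\sigma^{-1}Q\sigma=Q\sigma^{-1}Q^{\trans}Q\sigma=Q$, identically in $(\b z^{(2)},v)$. So $\tilde\err_t=Q\err_t$ has the single density $\den(Q^{\trans}\cdot)\in\denspc$. This is exactly the paper's construction $\tilde s(z)=Qs(z)Q^{\trans}$.

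Diagonality of $Q\sigma^2Q^{\trans}$ is needed only to make $\tilde\sigma$ diagonal, not to make $R$ constant. Since diagonality holds only a.e., you can define $\tilde\sigma$ everywhere as the diagonal part of $Q\sigma Q^{\trans}$. This is positive, agrees with $Q\sigma Q^{\trans}$ a.e., and is continuous at the normalisation point with value $I_p$, so $\tilde\sigma\in\set S$.
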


Since the skedastic function must be a diagonal matrix, \eqref{ske-id}
may provide further restrictions on $Q$; the extent of these will
depend on the properties of the actual skedastic function $\sigma$.
On the one hand, suppose that $\sigma(\z^{(2)},v)=\lambda(\z^{(2)},v)I_{p}$
is \emph{always} a rescaling of the identity matrix. Then \eqref{ske-id}
yields a diagonal matrix for \emph{every} $Q\in\orths(p)$, and no
further restrictions on $Q$ are implied. On the other hand, suppose
that $\sigma(\z^{(2)},v)$ varies in such a way that it is not always
proportional to the identity matrix, so that the variances of some
of the structural shocks may differ from each other, at least for
certain values of $(\z^{(2)},v)$. In particular, if there exists
a $(\z^{(2)\dagger},v^{\dagger})\in\reals^{d_{(2)}}\times\mathcal{V}$
such that all the (diagonal) entries of $\sigma(\z^{(2)\dagger},v^{\dagger})$
are distinct, then $Q$ must be a signed permutation matrix (as follows
from Theorem~2.5.4 in \citealp{HJ13book}; cf.\ Proposition~1 in
\citealp{LLM10JEDC}), in which case the structural impulse response
functions are identified, up to a signing and economic `labelling'
of the shocks. In this way, we here obtain exactly the same kinds
of restrictions that are familiar from the \emph{linear} SVAR literature
on `identification by heteroskedasticity' (see e.g.\ the discussion
in Sections~14.2--14.3 of \citealp{KL17book}).

{\singlespacing

\bibliographystyle{ecta}
\bibliography{cksvar}

}

\appendix

\section{Proofs of main identification results}

\label{app:nlsem}

\subsection{Reformulation of the problem}

\label{app:nonlinearSEM}

While the nonlinear SVAR of \secref{svarident} is a (dynamic) simultaneous
equations model (SEM), our notion of observational equivalence refers
only to the distribution of $z_{t}$ conditional on its lags. This
allows the proof of \thmref{shockid} to be approached in a manner
that entirely abstracts from the dynamics of the SVAR. To more clearly
connect our underlying identification results with those of the literature
on nonlinear simultaneous equations models (SEMs), in particular \citet{Matz09Ecta},
in this appendix we consider the nonlinear SEM
\begin{equation}
U=r(Y,X)=r_{0}(Y)+r_{1}(X),\label{eq:matzkin}
\end{equation}
where $U$ and $Y$ are random vectors taking values in $\reals^{G}$,
and $X$ is a random vector taking values in $\reals^{K}$, where
$K\geq G$. Let $f_{U}$ denote the density of $U$, location- and
scale-normalised so that $\expect U=0$ and $\expect UU^{\trans}=I_{G}$.
This is the same model as in (2.1) of \citet{Matz09Ecta}, but with
the additional restriction that $r$ is (additively) separable in
the endogenous and exogenous variables, $Y$ and $X$. Our results
on observational equivalence in this model are given as \thmref{matzkin}
below: on the basis of which the proof of \thmref{shockid} will simply
be a matter of translating between the notation of the SVAR in \secref{svarident},
and that of \eqref{matzkin} (see \appref{SVARidproof} below).

Under the regularity conditions given below, if we suppose that $X$
has Lebesgue density $f_{X}$ with support $\reals^{K}$, then the
model implies that the distribution of $Y$ conditional on $X$ has
a Lebesgue density that satisfies (see e.g.\ \citealp{EG15}, Thm.~3.9)
\[
f_{Y\mid X}(y\mid x)=f_{U}[r(y,x)]\cdot\det Dr_{0}(y)=f_{U}[r_{0}(y)+r_{1}(x)]\cdot\det Dr_{0}(y)
\]
a.e.\ $(y,x)\in\reals^{G+K}$; here the `a.e.'\ qualifier is a
consequence both of the usual non-uniqueness of the conditional density
(with respect to modifications on a null set), and more importantly
the fact the Jacobian $Dr_{0}(y)$ need only exist a.e. We will accordingly
say that two alternative parametrisations $(\tilde{r}_{0},\tilde{r}_{1},f_{\tilde{U}})$
and $(r_{0},r_{1},f_{U})$ are \emph{observationally equivalent} if
\begin{equation}
f_{U}[r(y,x)]\cdot\det Dr_{0}(y)=f_{U}[\tilde{r}(y,x)]\cdot\det D\tilde{r}_{0}(y)\label{eq:oe}
\end{equation}
a.e.\ $(y,x)\in\reals^{G+K}$, i.e.\ if they imply the same density
for $Y$ conditional on $X$. (This accords exactly with the definition
of observational equivalence given in \subsecref{nonlinearsvar},
transposed from the nonlinear SVAR to the nonlinear SEM.)

The model is parametrised by the functions $r_{0}:\reals^{G}\setmap\reals^{G}$,
$r_{1}:\reals^{K}\setmap\reals^{G}$, and the density $f_{U}$. Let
$\Gamma_{i}\ni r_{i}$, for $i\in\{0,1\}$, and $\Phi\ni f_{U}$ denote
the sets of functions and densities that together comprise the model
parameter space. We make only weak assumptions on the elements of
those parameter spaces, and some further assumptions on the parameters
$(r_{0},r_{1},f_{U})$ that actually generated the data; for a discussion
of these conditions, as they are mirrored in the nonlinear SVAR, see
\subsecref{nonlinearsvar}.

\assumpname{SEM}
\begin{assumption}
\label{ass:sem} $\Gamma_{0}$, $\Gamma_{1}$ and $\Phi$ collect
every function such that:
\begin{enumerate}[label=\ass{A\arabic*.}, ref=\ass{A\arabic*}]
\item \label{enu:sem:lipschitz}$\tilde{r}_{0}\in\Gamma_{0}$ and $\tilde{r}_{1}\in\Gamma_{1}$
are locally Lipschitz (continuous).
\item \label{enu:sem:bij}$\tilde{r}_{0}\in\Gamma_{0}$ is a bijection $\reals^{G}\setmap\reals^{G}$,
with $\det D\tilde{r}_{0}(y)>0$ for almost every $y\in\reals^{G}$.
\item \label{enu:sem:psdens}$f_{\tilde{U}}\in\Phi$ is continuously differentiable,
with $f_{\tilde{U}}(u)>0$ for all $u\in\reals^{G}$, and
\begin{align*}
\int_{\reals^{G}}f_{\tilde{U}}(u)\diff u & =1, & \int_{\reals^{G}}uf_{\tilde{U}}(u)\diff u & =0, & \int_{\reals^{G}}uu^{\trans}f_{\tilde{U}}(u)\diff u & =I_{G}.
\end{align*}
\end{enumerate}
$(r_{0},f_{1},f_{U})$ are such that:
\begin{enumerate}[label=\ass{B\arabic*.}, ref=\ass{B\arabic*}]
\item \label{enu:sem:surj}$r_{1}:\reals^{K}\setmap\reals^{G}$ is surjective,
with $\rank Dr_{1}(x)=G$ for almost every $x\in\reals^{K}$;
\item \label{enu:sem:r0inv}$r_{0}^{-1}$ is locally Lipschitz; and
\item \label{enu:sem:density}$f_{U}$ has a \emph{local} maximum at some
$u^{\ast}\in\reals^{G}$, and is twice continuously differentiable
in a neighbourhood of $u^{\ast}$, with negative definite Hessian
there.
\end{enumerate}
\end{assumption}

We can now state our main result on observational equivalence in the
model \eqref{matzkin}. Recall that $\orths(m)$ denotes the set of
$m\times m$ orthogonal matrices; further define $\sorths(m)$ to
be the subset of these matrices with positive determinant.

\begin{thm}
\label{thm:matzkin}Suppose that \assref{sem} holds. Let $\tilde{r}_{i}\in\Gamma_{i}$
for $i\in\{0,1\}$. Then there exists an $f_{\tilde{U}}\in\Phi$ such
that $(\tilde{r}_{0},\tilde{r}_{1},f_{\tilde{U}})$ is observationally
equivalent to $(r_{0},r_{1},f_{U})$, if and only if there exists
a $Q\in\sorths(G)$ such that
\begin{equation}
\tilde{r}_{0}(y)+\tilde{r}_{1}(x)=Q[r_{0}(y)+r_{1}(x)]\label{eq:matzeq}
\end{equation}
for all $(y,x)\in\reals^{G}\times\reals^{K}$.
\end{thm}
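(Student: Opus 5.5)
The plan has two directions. \emph{Sufficiency} is a direct construction; \emph{necessity} reduces observational equivalence to a functional equation whose only solutions are orthogonal affine maps.

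For sufficiency, suppose \eqref{matzeq} holds with $Q\in\sorths(G)$. Set $\tilde U\defeq QU$, so $f_{\tilde U}(u)=f_U(Q^{\trans}u)$. This density is in $\Phi$: it is $C^{1}$ and strictly positive, with mean $0$ and variance $QQ^{\trans}=I_{G}$. Separability forces $\tilde r_{0}(y)=Qr_{0}(y)+c$ and $\tilde r_{1}(x)=Qr_{1}(x)-c$ for some constant $c$. Hence $\det D\tilde r_{0}=\det Q\cdot\det Dr_{0}=\det Dr_{0}$, which is positive a.e. because $\det Q=1$; this is exactly where $\det Q>0$ is needed. Substituting into \eqref{oe} gives the identity a.e.

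For necessity, I would remove $Y$ by changing variables. Define $g\defeq\tilde r_{0}\compose r_{0}^{-1}$ and $h\defeq\tilde r_{1}$. By \ref{enu:sem:lipschitz}, \ref{enu:sem:bij} and \ref{enu:sem:r0inv}, $g$ is a locally Lipschitz bijection of $\reals^{G}$ with $\det Dg>0$ a.e. (chain rule for Lipschitz maps, using that $r_{0}^{-1}$ maps null sets to null sets). Put $u=r_{0}(y)$ and $t=r_{1}(x)$. Then \eqref{oe} becomes
\[
f_{U}(u+r_{1}(x))=f_{\tilde U}\bigl(g(u)+h(x)\bigr)\det Dg(u)\qquad\text{a.e. }(u,x).
\]
Equivalently, in probabilistic terms: for a.e.\ $x$, the variable $g(U-r_{1}(x))+h(x)$ has the fixed density $f_{\tilde U}$.

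The first step is to show that $h$ depends on $x$ only through $r_{1}(x)$. Take $x,x'$ with $r_{1}(x)=r_{1}(x')$. The display gives $f_{\tilde U}(v+h(x))=f_{\tilde U}(v+h(x'))$ for a.e.\ $v$, by bijectivity of $g$. Since $f_{\tilde U}$ is a density with mean zero, translation invariance of this kind forces $h(x)=h(x')$. Continuity and the full-rank, surjectivity condition \ref{enu:sem:surj} then upgrade this from a.e.\ pairs to all pairs. The result is a locally Lipschitz $k:\reals^{G}\setmap\reals^{G}$ with $h=k\compose r_{1}$; the rank condition is what transfers the local Lipschitz property through $r_{1}$. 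Writing $w=u+t$, the identity becomes
\[
f_{U}(w)=f_{\tilde U}\bigl(g(w-t)+k(t)\bigr)\det Dg(w-t)\qquad\text{for a.e. }(w,t)\in\reals^{2G}.
\]

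The core step, and the one I expect to be the main obstacle, is to show that $\psi(w,t)\defeq g(w-t)+k(t)$ does not depend on $t$. Once that is established, $g$ satisfies a Cauchy/Jensen-type equation $g(w-t)-g(w)=k(0)-k(t)$. Since $g$ is continuous, this makes $g$ affine, $g(u)=Au+b$, and then $k(t)=At+\text{const}$. I would first exploit \ref{enu:sem:density} at the local maximiser $u^{\ast}$.
\begin{enumerate}
\item Fix $w$ near $u^{\ast}$ and regard the right-hand side as a family in $t$ whose value is the fixed number $f_{U}(w)$.
\item Use the strict local maximum with negative definite Hessian, together with the analogous behaviour of $f_{\tilde U}$ along the image, to deduce from equality of the values $f_{U}(w)=f_{U}(u^{\ast})$ that $w=u^{\ast}$.
\item Conclude that $\psi(u^{\ast},t)$ equals a single point $\tilde u^{\ast}$ for a.e.\ $t$, which yields $k(t)=\tilde u^{\ast}-g(u^{\ast}-t)$.
\end{enumerate}
Substituting this back, and varying $w$ in a neighbourhood, should give $\det Dg(w-t)$ independent of $t$ and hence the affine form. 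Because $g$ is only Lipschitz, I would carry out any differentiation after mollifying in $t$, or work with difference quotients, rather than taking second derivatives of $g$.

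Finally, with $g(u)=Au+b$ and $\tilde U=AU+\text{const}$, the normalisations $\expect\tilde U=0$ and $\expect\tilde U\tilde U^{\trans}=I_{G}$ force the constant to vanish and $AA^{\trans}=I_{G}$. Since $\det Dg>0$, we get $A\in\sorths(G)$. Setting $Q\defeq A$ gives \eqref{matzeq}.
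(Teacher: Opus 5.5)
Your sufficiency argument, the change of variables to $g=\tilde r_{0}\compose r_{0}^{-1}$, and the reduction $h=k\compose r_{1}$ are sound. The core step, however, has a genuine gap.

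In your identity $f_{U}(w)=f_{\tilde U}(\psi(w,t))\det Dg(w-t)$, the Jacobian factor varies with $t$ a priori, and nothing is assumed about the shape of $f_{\tilde U}$. So there is no equality of density values to which the strict local maximum of $f_{U}$ can be applied. Your step~2 (``the analogous behaviour of $f_{\tilde U}$ along the image'') presupposes what must be proved. The missing idea is to show \emph{first} that $\det Dg$ (equivalently $\log\det D\tilde r_{0}-\log\det Dr_{0}$) is a.e.\ constant.

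The paper does this by differentiating the log identity in $x$ at fixed $u$, i.e.\ before your substitution $w=u+t$. This kills the Jacobian term and needs no second derivatives of $g$:
\[
D(\log f_{U})(u+r_{1}(x))\,Dr_{1}(x)=D(\log f_{\tilde U})(g(u)+h(x))\,Dh(x).
\]
The argument then runs as follows.
\begin{enumerate}
\item Since $u\elmap D(\log f_{U})(u)^{\trans}$ is locally invertible near $u^{\ast}$ and vanishes there, you can choose $u$ with $D(\log f_{U})(u+r_{1}(x))^{\trans}=\lambda e_{i}$. This shows every row of $Dr_{1}(x)$ lies in the row space of $Dh(x)$, forcing $\rank Dh(x)=G$.
\item Hence the $\tilde U$-score vanishes wherever the $U$-score does.
\item Therefore the log-density difference, which equals the Jacobian term and is locally Lipschitz, has zero derivative a.e.\ and is constant.
\item Only then does one get $f_{U}(\theta(x))=f_{U}(u^{\ast})$ for all $x$, with $\theta$ continuous and $\theta(0)=u^{\ast}$. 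Strict maximality plus continuity then yields your step~3.
\end{enumerate}
Moving the Jacobian onto $t$ hides exactly this mechanism, and mollifying in $t$ does not replace it.

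Second, even granting step~3 and constancy of $\det Dg$, ``hence the affine form'' does not follow. A constant Jacobian determinant does not imply affinity; consider shears $(u_{1},u_{2})\elmap(u_{1},u_{2}+\phi(u_{1}))$. For $G\geq2$, the relation $f_{\tilde U}(\psi(w,t))=c\,f_{U}(w)$ only confines $\psi(w,t)$, for $w$ near $u^{\ast}$, to a $(G-1)$-dimensional level set of $f_{\tilde U}$ near $\tilde u^{\ast}$. Continuity in $t$ therefore cannot make $\psi(w,t)$ independent of $t$ for $w\neq u^{\ast}$. Your Cauchy-equation route thus never gets started, although it would work for $G=1$.

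The paper instead uses the second-order expansion of $f_{U}$ at the mode to show that $Dg(w)H^{-1}Dg(w)^{\trans}$ is a.e.\ constant. It then normalises so that the Jacobian lies in $\sorths(G)$ a.e., and invokes a Liouville-type rigidity theorem for Lipschitz maps (\lemref{lipschitz}\ref{enu:rigidity}). Nothing in your plan plays this role.

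A smaller point: the rank condition \ref{enu:sem:surj} does not by itself make $k$ locally Lipschitz. For example, take $r_{1}(x)=x^{3}$ and $h(x)=x$. Any regularity of $k$ must come from the equation itself.
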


Only the sum of $r_{0}(y)+r_{1}(x)$ is identified, because in view
of \eqref{matzkin} we cannot distinguish between $(r_{0},r_{1})$
and $(r_{0}-\delta,r_{1}+\delta)$ for any $\delta\in\reals^{G}$.
This indeterminacy can of course be resolved by imposing a location
normalisation on either of these functions, e.g.\ by requiring $\tilde{r}_{0}(0)=0$
for all $\tilde{r}_{0}\in\Gamma_{0}$.

\subsection{Preliminaries}

For ease of reference, the following lemma collects some useful (and
well known) results regarding the properties of locally Lipschitz
functions, that will be relied on in the proof. Note when we say that
a function $g:\reals^{k}\setmap\reals^{\ell}$ is differentiable at
$x_{0}\in\reals^{k}$, we mean that there exists a (Jacobian) matrix
$Dg(x_{0})\in\reals^{\ell\times k}$, such that
\[
g(x)-g(x_{0})=Dg(x_{0})(x-x_{0})+o(\smlnorm{x-x_{0}})
\]
as $x\goesto x_{0}$. When we refer to the `measure' of a subset
of Euclidean space, we always mean its Lebesgue measure, unless otherwise
stated.
\begin{lem}
\label{lem:lipschitz}Suppose that $g:\reals^{k}\setmap\reals^{\ell}$
is locally Lipschitz. Then
\begin{enumerate}
\item \label{enu:rademacher}$g$ is differentiable a.e.;
\item \label{enu:luzin}if $k\leq\ell$, and $N\subseteq\reals^{k}$ has
measure zero (in $\reals^{k}$), then $g(N)$ has measure zero (in
$\reals^{\ell}$);
\item \label{enu:constant}if $Dg(x)=B$ for almost every $x\in\reals^{k}$,
then $g(x)=a+Bx$ for all $x\in\reals^{k}$; and
\item \label{enu:rigidity}if $\ell=k\geq2$, and $Dg(x)\in\sorths(k)$
for almost every $x\in\reals^{k}$, then $g(x)=a+Qx$ for some $Q\in\sorths(k)$. 
\end{enumerate}
Suppose that $k=\ell$, $g$ is bijective, and $g^{-1}$ and $h:\reals^{k}\setmap\reals^{m}$
are locally Lipschitz. Then
\begin{enumerate}[resume]
\item \label{enu:chain}for almost every $x\in\reals^{k}$, $f\defeq h\compose g$
is differentiable at $x$, and
\[
Df(x)=Dh[g(x)]Dg(x).
\]
\end{enumerate}
\end{lem}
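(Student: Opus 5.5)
The lemma collects standard facts, and I would prove each part by citing or briefly deriving the corresponding classical result.

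\textbf{Parts \ref{enu:rademacher} and \ref{enu:luzin}.} Part \ref{enu:rademacher} is Rademacher's theorem. Local Lipschitzness on $\reals^{k}$ lets us apply the Lipschitz version on each ball of a countable cover, and a countable union of null sets is null.

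For part \ref{enu:luzin}, it again suffices to work on a ball $B$ on which $g$ is Lipschitz with constant $L$. Given $\epsilon>0$, cover $N\cap B$ by countably many cubes $C_{j}$ with $\sum_{j}\smlabs{C_{j}}<\epsilon$. Each $g(C_{j})$ lies in a ball of radius $L\sqrt{k}\,\mathrm{diam}(C_{j})$. Its $\ell$-dimensional measure is therefore at most $c\,\mathrm{diam}(C_{j})^{\ell}\leq c'\smlabs{C_{j}}^{\ell/k}$. Since $\ell\geq k$ and the cubes can be taken small, $\smlabs{C_{j}}^{\ell/k}\leq\smlabs{C_{j}}$, so the total measure is at most $c'\epsilon$.

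\textbf{Part \ref{enu:constant}.} Set $h(x)\defeq g(x)-Bx$. Then $h$ is locally Lipschitz with $Dh=0$ a.e. To show $h$ is constant, fix $x,y$ and use Fubini: for almost every small translate $x+w$, the segment from $x+w$ to $y+w$ meets the exceptional null set in a one-dimensional null set. Along such a segment, $t\mapsto h(x+w+t(y-x))$ is Lipschitz, hence absolutely continuous, with derivative zero a.e. (this also needs Fubini applied to the set of differentiability points). So $h(x+w)=h(y+w)$. Letting $w\goesto0$ and using continuity gives $h(x)=h(y)$. Alternatively, mollify: $h*\eta_{\delta}$ has zero gradient everywhere, so it is constant, and it converges to $h$ locally uniformly.

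\textbf{Part \ref{enu:rigidity}.} This is Liouville's rigidity theorem in its Sobolev form (Reshetnyak): a $W_{\mathrm{loc}}^{1,\infty}$ map with $Dg\in SO(k)$ a.e. is a rigid motion. I expect this to be the main obstacle if it has to be proved rather than cited, so I would give the short argument. Since $\det Dg=1$ a.e., the cofactor matrix of $Dg$ equals $Dg$ a.e. The Piola identity $\mathrm{div}\,\mathrm{cof}(Dg)=0$ holds distributionally for Lipschitz maps, so each component $g_{i}$ is weakly harmonic, hence smooth by Weyl's lemma. Then $Dg(x)\in SO(k)$ holds everywhere by continuity. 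Next, $\smlabs{Dg}^{2}=k$ is constant, and applying the Bochner identity to the harmonic components gives $\Delta\smlabs{Dg}^{2}=2\smlabs{D^{2}g}^{2}$, so $D^{2}g=0$. Hence $Dg$ is constant, and part \ref{enu:constant} finishes the proof. Where $k\geq2$ is actually needed is in the Piola/cofactor step; for $k=1$ this is vacuous anyway.

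\textbf{Part \ref{enu:chain}.} Let $N_{h}$ be the null set where $h$ fails to be differentiable, and let $N_{g}$ be the corresponding set for $g$. Because $g^{-1}$ is locally Lipschitz, part \ref{enu:luzin} (with $k=\ell$) applied to $g^{-1}$ shows that $g^{-1}(N_{h})$ is null. For $x\notin N_{g}\cup g^{-1}(N_{h})$, $g$ is differentiable at $x$ and $h$ is differentiable at $g(x)$. The classical chain rule then applies pointwise and gives $Df(x)=Dh[g(x)]Dg(x)$.
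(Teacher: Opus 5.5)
Your proposal is correct and follows essentially the same route as the paper: Rademacher for (i), the fact that Lipschitz maps send null sets to null sets for (ii) (you prove this by a direct cube-covering estimate, where the paper cites the Hausdorff-measure results of Evans--Gariepy), the fundamental theorem of calculus along segments for (iii), and Liouville rigidity for (iv) (you reproduce the cofactor/Weyl/Bochner argument, where the paper cites Friesecke--James--M\"uller and John). Part (v) is the same null-set argument as the paper's. Your averaging over translates in (iii) is in fact more careful than the paper's version: the paper passes directly from $Dg=B$ a.e.\ in $\reals^{k}$ to a vanishing derivative along a fixed segment, but that segment is itself a null set, so the a.e.\ hypothesis says nothing about it directly.
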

\begin{proof}
\ref{enu:rademacher}. This is Rademacher's theorem (e.g.\ Theorem~3.2
in \citealp{EG15}).

\ref{enu:luzin}. This follows by Lemma~2.2(i), Theorem~2.5 and
Theorem~2.8(i) in \citet{EG15}.

\ref{enu:constant}. Fix $x_{0}\in\reals^{k}$. Since the locally
Lipschitz function $f(x)\defeq g(x)-Bx$ has $Df(x)=0$ a.e., and
is absolutely continuous along the segment joining any point $x\in\reals^{k}$
to $x_{0}$, it must be constant along that segment, by the fundamental
theorem of calculus. Hence $f(x)=f(x_{0})\eqdef a$ for all $x$.

\ref{enu:rigidity}. This follows from Theorem~3.1 (and the discussion
on p.\ 1469) in \citet{FJM02CPAM} -- see also Theorem~IV in \citet{John61CPAM}
-- and part~\ref{enu:constant}.

\ref{enu:chain}. Let $G\subseteq\reals^{k}$ and $H\subseteq\reals^{k}$
collect the points at which $g$ and $h$ are respectively differentiable.
Then $\reals^{k}\backslash H$ has measure zero, and since $g^{-1}$
is surjective and locally Lipschitz, it follows from $\reals^{k}=g^{-1}(\reals^{k}\backslash H)\union g^{-1}(H)$
and part~\ref{enu:luzin} that $\reals^{k}\backslash g^{-1}(H)$
also has measure zero. Deduce that the complement of $X\defeq G\intsect g^{-1}(H)$
has measure zero, and that for every $x\in X$, $g$ is differentiable
at $x$, and $h$ is differentiable at $g(x)$. Thus the chain rule
yields the result.
\end{proof}

\subsection{Proof of \thmref{matzkin}}

It is clear that if \eqref{matzeq} holds, then 
\[
\tilde{U}\defeq\tilde{r}_{0}(Y)+\tilde{r}_{1}(X)=Q[r_{0}(Y)+r_{1}(X)]=QU
\]
will be independent of $X$, with a density $f_{\tilde{U}}$ that
satisfies \assref{sem}\ass{.}\ref{enu:sem:psdens}; hence observational
equivalence obtains in this case. It remains therefore to prove the
reverse implication.

To that end, we suppose that $(\tilde{r}_{0},\tilde{r}_{1},f_{\tilde{U}})$
is observationally equivalent to $(r_{0},r_{1},f_{U})$. Taking logs
in \eqref{oe}, as we may under \assref{sem}\ass{.}\ref{enu:sem:bij}\ass{--}\ref{enu:sem:psdens},
yields that 
\begin{equation}
\log f_{U}[r(y,x)]-\log f_{\tilde{U}}[\tilde{r}(y,x)]=\log\det D\tilde{r}_{0}(y)-\log\det Dr_{0}(y)\label{eq:oe-rewritten}
\end{equation}
a.e.\ $(y,x)\in\reals^{G+K}$. In view of \assref{sem}\ass{.}\ref{enu:sem:lipschitz}\ass{--}\ref{enu:sem:bij}
and \assref{sem}\ass{.}\ref{enu:sem:surj}, we may define a set
$\mathcal{A}\subset\reals^{G+K}$, whose complement has measure zero
(in $\reals^{G+K}$), such that for every $(y,x)\in\mathcal{A}$:
\begin{itemize}
\item \eqref{oe-rewritten} holds;
\item $r_{0}$ and $\tilde{r}_{0}$ are differentiable at $y$, with $\det Dr_{0}(y)>0$
and $\det D\tilde{r}_{0}(y)>0$; and
\item $r_{1}$ and $\tilde{r}_{1}$ are differentiable at $x$, with $\rank Dr_{1}(x)=G$.
\end{itemize}
By Tonelli's theorem, we may also define sets $\mathcal{Y}\subset\reals^{G}$
and $\mathcal{X}\subset\reals^{K}$, whose complements (in $\reals^{G}$
and $\reals^{K}$ respectively) have measure zero, such that:
\begin{itemize}
\item for every $y_{0}\in\mathcal{Y}$: $(y_{0},x)\in\mathcal{A}$ for almost
every $x\in\reals^{K}$; and
\item for every $x_{0}\in\mathcal{X}$: $(y,x_{0})\in\mathcal{A}$ for almost
every $y\in\reals^{G}$.
\end{itemize}
The proof now proceeds in five steps. (Had we made imposed the stronger
requirement that $\tilde{r}_{0}$ and $\tilde{r}_{1}$ be twice continuously
differentiable, then the claims proved in the first two steps would
follow more directly as corollaries to the results of \citet{Matz09Ecta},
particularly her Theorem~3.3; and indeed our arguments in those parts
of the proof largely follow hers, suitably modified to allow $\tilde{r}_{0}$
and $\tilde{r}_{1}$ to have points of non-differentiability.)

\subsubsection*{(i) Claim: $\protect\rank D\tilde{r}_{1}(x)=G$ for all $x\in\mathcal{X}$.}

Let $x_{0}\in\mathcal{X}$ be given. Differentiating both sides of
\eqref{oe-rewritten} with respect to $x$, we obtain
\begin{equation}
D(\log f_{U})[r(y,x_{0})]Dr_{1}(x_{0})=D(\log f_{\tilde{U}})[\tilde{r}(y,x_{0})]D\tilde{r}_{1}(x_{0})\label{eq:diffedoe}
\end{equation}
a.e.\ $y\in\reals^{G}$. By the continuity of both sides in $y$,
this holds for all $y\in\reals^{G}$. Recall that $\rank Dr_{1}(x_{0})=G$
by the definition of $\mathcal{X}$; we must show that this is transmitted
to $D\tilde{r}_{1}(x_{0})$.

Under \assref{sem}\ass{.}\ref{enu:sem:density}, it follows from
the inverse function theorem that the map 
\[
u\elmap D(\log f_{U})(u)^{\trans}
\]
is invertible in a neighbourhood of $u=u^{\ast}$, and equals zero
at $u^{\ast}$. Hence by \assref{sem}\ass{.}\ref{enu:sem:bij},
the composite map
\[
y\elmap s(y,x_{0})\defeq D(\log f_{U})[r(y,x_{0})]^{\trans}=D(\log f_{U})[r_{0}(y)+r_{1}(x_{0})]^{\trans}
\]
is also invertible for $y$ in a neighbourhood of
\begin{equation}
y^{\ast}(x_{0})\defeq r_{0}^{-1}[u^{\ast}-r_{1}(x_{0})],\label{eq:yast}
\end{equation}
with the property that
\[
s[y^{\ast}(x_{0}),x_{0}]=D(\log f_{U})(u^{\ast})^{\trans}=0.
\]

Hence there exist $\lambda>0$ and $\{y^{i}\}_{i=1}^{G}$ such that
\[
s(y^{i},x_{0})=\lambda e_{i}
\]
for all $i\in\{1,\ldots,G\}$, where $e_{i}$ denotes the $i$th column
of $I_{G}$. Evaluating \eqref{diffedoe} at each $y^{i}$, we obtain
that 
\[
Dr_{1}(x_{0})^{\trans}e_{i}\subset\spn D\tilde{r}_{1}(x_{0})^{\trans}
\]
for $i\in\{1,\ldots,G\}$, whence $\spn Dr_{1}(x_{0})^{\trans}\subset\spn D\tilde{r}_{1}(x_{0})^{\trans}$.
Since $Dr_{1}(x_{0})^{\trans}$ has rank $G$, it follows that so
too does $D\tilde{r}_{1}(x_{0})^{\trans}$.

\subsubsection*{(ii) Claim: $\log\det D\tilde{r}_{0}(y)-\log\det Dr_{0}(y)$ is constant
on $\mathcal{Y}$.}

Let $J:\reals^{G}\setmap\reals^{G}$ be defined such that
\[
J(y)=\log\det D\tilde{r}_{0}(y)-\log\det Dr_{0}(y)
\]
for all $y\in\mathcal{Y}$, so that it equals the r.h.s.\ of \eqref{oe-rewritten}
there; and set $J(y)=0$ otherwise.

Consider again the map $y^{\ast}:\reals^{K}\setmap\reals^{G}$, defined
in \eqref{yast} above, which is surjective and locally Lipschitz
in view of \assref{sem}\ass{.}\ref{enu:sem:lipschitz} and \assref{sem}\ass{.}\ref{enu:sem:surj}\ass{--}\ref{enu:sem:r0inv}.
Hence the complement of $y^{\ast}(\mathcal{X})$ in $\reals^{G}$
has measure zero, by \lemref{lipschitz}\ref{enu:luzin}. Now fix
$y_{0}\in\mathcal{Y}\intsect y^{\ast}(\mathcal{X})$, whose complement
also has measure zero. By definition of $\mathcal{Y}$, \eqref{oe-rewritten}
holds at $(y_{0},x)$, for almost every $x$. Moreover, since both
sides of \eqref{oe-rewritten} are continuous in $x$, it follows
that
\begin{equation}
\log f_{U}[r(y_{0},x)]-\log f_{\tilde{U}}[\tilde{r}(y_{0},x)]=J(y_{0})\label{eq:oe-J}
\end{equation}
holds for \emph{every} $x\in\reals^{K}$. Since $y_{0}\in\mathcal{Y}$,
the l.h.s.\ is differentiable with respect to $y$, whence so too
is the r.h.s.,\ with
\begin{equation}
D(\log f_{U})[r(y_{0},x)]Dr_{0}(y_{0})-D(\log f_{\tilde{U}})[\tilde{r}(y_{0},x)]D\tilde{r}_{0}(y_{0})=DJ(y_{0})\label{eq:diffcond}
\end{equation}

Since $y_{0}\in y^{\ast}(\mathcal{X})$, there exists an $x_{0}\in\mathcal{X}$
such that $r_{0}(y_{0},x_{0})=u^{\ast}$, and hence 
\[
D(\log f_{U})[r(y_{0},x_{0})]=D(\log f_{U})(u^{\ast})=0.
\]
Since \eqref{diffedoe} holds at $(y_{0},x_{0})$, with $\rank D\tilde{r}_{1}(x_{0})=\rank Dr_{1}(x_{0})=G$
by the preceding part of the proof, it follows that 
\[
D(\log f_{\tilde{U}})[\tilde{r}(y_{0},x_{0})]=0.
\]
Deduce from \eqref{diffcond} that $DJ(y_{0})=0$ for all $y_{0}\in\mathcal{Y}\intsect y^{\ast}(\mathcal{X})$.
It then follows from \eqref{oe-J} above that for all $x\in\reals^{K}$,
the Jacobian of
\[
y\elmap\log f_{U}[r(y,x)]-\log f_{\tilde{U}}[\tilde{r}(y,x)]
\]
is zero at $y_{0}\in\mathcal{Y}\intsect y^{\ast}(\mathcal{X})$, i.e.\ almost
everwhere. Since this map is locally Lipschitz, it is therefore equal
to some constant $C$, by \lemref{lipschitz}\ref{enu:constant}.
Hence
\[
J(y)=\log\det D\tilde{r}_{0}(y)-\log\det Dr_{0}(y)=C
\]
for all $y\in\mathcal{Y}$.

\subsubsection*{(iii) Claim: $\tilde{r}_{1}(x)=\tilde{u}^{\ast}-\tilde{m}_{0}[u^{\ast}-r_{1}(x)]$,
for $\tilde{m}_{0}\protect\defeq\tilde{r}_{0}\protect\compose r_{0}^{-1}$.}

Returning now to \eqref{oe-rewritten}, it follows from the preceding
part of the proof that
\[
\log f_{\tilde{U}}[\tilde{r}_{0}(y)+\tilde{r}_{1}(x)]=\log f_{U}[r_{0}(y)+r_{1}(x)]-C
\]
a.e.\ $(y,x)\in\reals^{G+K}$; and since both sides are continuous
in $(y,x)$, the preceding must hold for \emph{all} $(y,x)\in\reals^{G+K}$.
Setting $\tilde{u}=\tilde{r}_{0}(y)+\tilde{r}_{1}(x)$, and recalling
that $\tilde{r}_{0}$ is invertible (by \assref{sem}\ass{.}\ref{enu:sem:bij}),
this may be equivalently stated as
\begin{align*}
\log f_{\tilde{U}}(\tilde{u}) & =\log f_{U}[r_{0}\{\tilde{r}_{0}^{-1}[\tilde{u}-\tilde{r}_{1}(x)]\}+r_{1}(x)]-C\\
 & =\log f_{U}\{\tilde{m}_{0}^{-1}[\tilde{u}-\tilde{r}_{1}(x)]+r_{1}(x)\}-C
\end{align*}
for all $(\tilde{u},x)\in\reals^{G+K}$, where $\tilde{m}_{0}\defeq\tilde{r}_{0}\compose r_{0}^{-1}$
is invertible and locally Lipschitz, by \assref{sem}\ass{.}\ref{enu:sem:lipschitz}
and \assref{sem}\ass{.}\ref{enu:sem:r0inv}. Since the l.h.s.\ of
the preceding does not depend on $x$, the r.h.s.\ must be invariant
to $x$, and so we have in particular that
\begin{equation}
f_{U}\{\tilde{m}_{0}^{-1}[\tilde{u}-\tilde{r}_{1}(0)]+r_{1}(0)\}=f_{U}\{\tilde{m}_{0}^{-1}[\tilde{u}-\tilde{r}_{1}(x)]+r_{1}(x)\}\label{eq:x-invar}
\end{equation}
for all $(\tilde{u},x)\in\reals^{G+K}$.

By taking $\tilde{u}$ in the preceding to be equal to
\begin{equation}
\tilde{u}^{\ast}\defeq\tilde{m}_{0}[u^{\ast}-r_{1}(0)]+\tilde{r}_{1}(0),\label{eq:uasttilde}
\end{equation}
for $u^{\ast}$ as in \assref{sem}\ass{.}\ref{enu:sem:density},
we obtain that
\begin{equation}
f_{U}(u^{\ast})=f_{U}\{\tilde{m}_{0}^{-1}[\tilde{u}^{\ast}-\tilde{r}_{1}(0)]+r_{1}(0)\}=f_{U}\{\tilde{m}_{0}^{-1}[\tilde{u}^{\ast}-\tilde{r}_{1}(x)]+r_{1}(x)\}\label{eq:fuast}
\end{equation}
for all $x\in\reals^{K}$. Defining the continuous map $\theta:\reals^{K}\setmap\reals^{G}$
as
\[
\theta(x)\defeq\tilde{m}_{0}^{-1}[\tilde{u}^{\ast}-\tilde{r}_{1}(x)]+r_{1}(x),
\]
which by \eqref{uasttilde} has $\theta(0)=u^{\ast}$, we may thus
rewrite \eqref{fuast} as
\begin{equation}
f_{U}(u^{\ast})=f_{U}[\theta(0)]=f_{U}[\theta(x)]\label{eq:fhall}
\end{equation}
for all $x\in\reals^{K}$.

The preceding entails that $f_{U}[\theta(x)]$ does not in fact depend
on $x$; we need to show that this implies that $\theta(x)$ itself
is invariant to $x$. By a second-order Taylor expansion of $f_{U}$
around $u=u^{\ast}$, in view of \assref{sem}\ass{.}\ref{enu:sem:density},
there exist $\epsilon,\eta>0$ such that
\[
\smlabs{f_{U}(u)-f_{U}(u^{\ast})}\geq\eta\smlnorm{u-u^{\ast}}^{2}
\]
for all $\smlnorm{u-u^{\ast}}<\epsilon$. Since $x\elmap\theta(x)$
is continuous with $\theta(0)=u^{\ast}$, the equalities in \eqref{fhall}
can hold for all $x\in\reals^{K}$ only if 
\[
\tilde{m}_{0}^{-1}[\tilde{u}^{\ast}-\tilde{r}_{1}(x)]+r_{1}(x)=\theta(x)=\theta(0)=u^{\ast}
\]
for all $x\in\reals^{K}$. Thus
\begin{equation}
\tilde{r}_{1}(x)=\tilde{u}^{\ast}-\tilde{m}_{0}[u^{\ast}-r_{1}(x)]\label{eq:r1tilde}
\end{equation}
for all $x\in\reals^{K}$.

\subsubsection*{(iv) Claim: $\tilde{m}_{0}$ is affine.}

For $v\in\reals^{G}$, define
\[
\delta(v)\defeq f_{U}\{\tilde{m}_{0}^{-1}[(v+\tilde{u}^{\ast})-\tilde{r}_{1}(0)]+r_{1}(0)\}
\]
which in view of \eqref{fuast} satisfies 
\begin{equation}
\delta(0)=f_{U}\{\tilde{m}_{0}^{-1}[\tilde{u}^{\ast}-\tilde{r}_{1}(0)]+r_{1}(0)\}=f_{U}(u^{\ast}).\label{eq:delta0}
\end{equation}
Noting that \eqref{x-invar} above holds for all $(\tilde{u},x)\in\reals^{G+K}$,
it follows that by taking $\tilde{u}=v+\tilde{u}^{\ast}$ there, we
obtain
\[
\delta(v)=f_{U}\{\tilde{m}_{0}^{-1}[(v+\tilde{u}^{\ast})-\tilde{r}_{1}(0)]+r_{1}(0)\}=f_{U}\{\tilde{m}_{0}^{-1}[(v+\tilde{u}^{\ast})-\tilde{r}_{1}(x)]+r_{1}(x)\}
\]
for all $x\in\reals^{K}$. By the preceding part of the proof (namely,
\eqref{r1tilde}), 
\[
\tilde{m}_{0}^{-1}[(v+\tilde{u}^{\ast})-\tilde{r}_{1}(x)]=\tilde{m}_{0}^{-1}\{v+\tilde{m}_{0}[u^{\ast}-r_{1}(x)]\},
\]
and hence
\[
\delta(v)=f_{U}[\tilde{m}_{0}^{-1}\{v+\tilde{m}_{0}[u^{\ast}-r_{1}(x)]\}+r_{1}(x)],
\]
with the r.h.s.\ being invariant to $x\in\reals^{K}$. Since by \assref{sem}\ass{.}\ref{enu:sem:surj}
the image of $r_{1}$ is the whole of $\reals^{G}$, we may conclude
that
\[
\delta(v)=f_{U}\{\tilde{m}_{0}^{-1}[v+\tilde{m}_{0}(u^{\ast}-w)]+w\}
\]
depends only on $v$, for all $w\in\reals^{G}$; equivalently,
\begin{equation}
\delta(v)=f_{U}\{\tilde{m}_{0}^{-1}[v+\tilde{m}_{0}(w)]+u^{\ast}-w\}\label{eq:deltavnew}
\end{equation}
for all $w\in\reals^{G}$.

To establish that $\tilde{m}_{0}$ is affine, we shall now consider
the behaviour of $\delta(v)$ in a neighbourhood of $v=0$. We first
note that $\delta(0)=f_{U}(u^{\ast})$ by \eqref{delta0} above, and
that by \assref{sem}\ass{.}\ref{enu:sem:density} $f_{U}(u)$ admits
the following second-order Taylor expansion,
\begin{equation}
f_{U}(u)-f_{U}(u^{\ast})=-\tfrac{1}{2}(u-u^{\ast})^{\trans}H(u-u^{\ast})+o(\smlnorm{u-u^{\ast}}^{2})\label{eq:fUexp}
\end{equation}
as $u\goesto u^{\ast}$, where $H$ is positive definite. We note
that for $w\in\reals^{G}$, $\tilde{m}_{0}^{-1}$ is differentiable
at the value of $\tilde{m}_{0}(w)$ if $\tilde{m}_{0}=\tilde{r}_{0}\compose r_{0}^{-1}$
is itself differentiable at $w$ with $\det D\tilde{m}_{0}(w)\neq0$.
Since $\tilde{r}_{0}$ and $r_{0}^{-1}$ are locally Lipschitz, and
the latter is invertible (by \assref{sem}\ass{.}\ref{enu:sem:lipschitz}\ass{--}\ref{enu:sem:bij}
and \assref{sem}\ass{.}\ref{enu:sem:r0inv}) it follows by \lemref{lipschitz}\ref{enu:chain}
that $\tilde{m}_{0}$ is differentiable a.e., with
\begin{equation}
D\tilde{m}_{0}(w)=D\tilde{r}_{0}[r_{0}^{-1}(w)]Dr_{0}^{-1}(w)=D\tilde{r}_{0}[r_{0}^{-1}(w)][Dr_{0}(w)]^{-1}\label{eq:Dm}
\end{equation}
which has nonzero determinant a.e., in view of \assref{sem}\ass{.}\ref{enu:sem:bij}.
Thus there exists a set $\mathcal{B}\subset\reals^{G}$, whose complement
has measure zero, such that $\tilde{m}_{0}^{-1}$ is differentiable
at the value of $\tilde{m}_{0}(w)$, for every $w\in\mathcal{B}$.
Taking $w\in\mathcal{B}$, $\lambda>0$ and $d\in\reals^{G}\backslash\{0\}$,
and setting $v=\lambda d$, we obtain that
\begin{align}
 & \lambda^{-1}[\{\tilde{m}_{0}^{-1}[\lambda d+\tilde{m}_{0}(w)]+u^{\ast}-w\}-u^{\ast}]\nonumber \\
 & \qquad\qquad=\lambda^{-1}[\{\tilde{m}_{0}^{-1}[\lambda d+\tilde{m}_{0}(w)]+u^{\ast}-w\}-\{\tilde{m}_{0}^{-1}[\tilde{m}_{0}(w)]+u^{\ast}-w\}]\nonumber \\
 & \qquad\qquad\goesto(D\tilde{m}_{0}^{-1})[\tilde{m}_{0}(w)]d\nonumber \\
 & \qquad\qquad=[D\tilde{m}_{0}(w)]^{-1}d\label{eq:diffpart}
\end{align}
as $\lambda\goesto0$. Hence \eqref{delta0}, \eqref{deltavnew},
\eqref{fUexp} and \eqref{diffpart} yield
\begin{align*}
\lambda^{-2}[\delta(v)-\delta(0)] & =\lambda^{-2}[f_{U}\{\tilde{m}_{0}^{-1}[\lambda d+\tilde{m}_{0}(w)]+u^{\ast}-w\}-f_{U}(u^{\ast})]\\
 & \goesto-\tfrac{1}{2}d^{\trans}[D\tilde{m}_{0}(w)^{\trans}]^{-1}H[D\tilde{m}_{0}(w)]^{-1}d\\
 & =-\tfrac{1}{2}d^{\trans}\{[D\tilde{m}_{0}(w)]H^{-1}[D\tilde{m}_{0}(w)]^{\trans}\}^{-1}d
\end{align*}
as $\lambda\goesto0$, for all $w\in\mathcal{B}$ and $d\in\reals^{G}\backslash\{0\}$.

Since the l.h.s.\ of the preceding does not depend on $w$ or $d$
(for any value of $\lambda>0$), the limit on the r.h.s.\ cannot
either. Therefore, fixing a $w_{0}\in\mathcal{B}$ we obtain that
\[
[D\tilde{m}_{0}(w)]H^{-1}[D\tilde{m}_{0}(w)]^{\trans}=[D\tilde{m}_{0}(w_{0})]H^{-1}[D\tilde{m}_{0}(w_{0})]^{\trans}\eqdef S
\]
for all $w\in\mathcal{B}$. Taking $A$ and $B$ to be the (lower
triangular) Cholesky roots of the positive definite matrices $H^{-1}=AA^{\trans}$
and $S^{-1}=BB^{\trans}$ respectively, it follows that
\[
B^{\trans}[D\tilde{m}_{0}(w)]AA^{\trans}[D\tilde{m}_{0}(w)]^{\trans}B=B^{\trans}SB=I_{G}
\]
for all $w\in\mathcal{B}$, and hence the map
\[
\tilde{\ell}_{0}(w)\defeq B^{\trans}\tilde{m}_{0}(Aw)
\]
is a locally Lipschitz bijection $\reals^{G}\setmap\reals^{G}$ for
which 
\[
D\tilde{\ell}_{0}(w)=B^{\trans}D\tilde{m}_{0}(Aw)A,
\]
for all $w\in\mathcal{B}$, and hence
\begin{align*}
D\tilde{\ell}_{0}(w)D\tilde{\ell}_{0}(w)^{\trans} & =[B^{\trans}D\tilde{m}_{0}(Aw)A][B^{\trans}D\tilde{m}_{0}(Aw)A]^{\trans}\\
 & =B^{\trans}[D\tilde{m}_{0}(Aw)]AA^{\trans}[D\tilde{m}_{0}(Aw)]^{\trans}B=I_{G}
\end{align*}
for all $w\in\mathcal{B}$, whence also $D\tilde{\ell}_{0}(w)^{\trans}D\tilde{\ell}_{0}(w)=I_{G}$
for all $w\in\mathcal{B}$. Moreover, in view of \eqref{Dm}, \assref{sem}\ass{.}\ref{enu:sem:bij},
and the fact that the determinants of $A$ and $B$ must be strictly
positive, as triangular matrices with strictly positive diagonal entries,
we have
\begin{align*}
\det D\tilde{\ell}_{0}(w) & =(\det B)[\det D\tilde{m}_{0}(Aw)](\det A)>0
\end{align*}
for all $w\in\mathcal{B}$. Deduce $D\tilde{\ell}_{0}(w)\in\sorths(G)$
for all $w\in\mathcal{B}$.

It therefore follows by \lemref{lipschitz}\ref{enu:rigidity} that
there exists a $P\in\sorths(G)$ such that
\[
\tilde{\ell}_{0}(w)=a+Pw.
\]
Thus $\tilde{\ell}_{0}$ is affine, and hence so too is $\tilde{m}_{0}$.

\subsubsection*{(v) Conclusion.}

To conclude the proof, we recall that $\tilde{m}_{0}=\tilde{r}_{0}\compose r_{0}^{-1}$.
By the previous part of the proof, there exist $Q\in\reals^{G\times G}$
and $q\in\reals^{G}$ such that
\[
\tilde{r}_{0}[r_{0}^{-1}(w)]=\tilde{m}_{0}(w)=q+Qw
\]
for all $w\in\reals^{G}$, whence taking $y=r_{0}^{-1}(w)$ yields
\[
\tilde{r}_{0}(y)=q+Qr_{0}(y)
\]
for all $y\in\reals^{G}$.

It similarly follows from \eqref{r1tilde} above that
\[
\tilde{r}_{1}(x)=\tilde{u}^{\ast}-\tilde{m}_{0}[u^{\ast}-r_{1}(x)]=(\tilde{u}^{\ast}-q-Qu^{\ast})+Qr_{1}(x).
\]
Hence, defining $q_{0}\defeq\tilde{u}^{\ast}-Qu^{\ast}$, we obtain
\begin{align*}
\tilde{U} & \defeq\tilde{r}_{0}(Y)+\tilde{r}_{1}(X)=q_{0}+Q[r_{0}(Y)+r_{1}(X)]=q_{0}+QU
\end{align*}
whereupon for the distribution of $\tilde{U}$ to respect to scale
and location normalisation specified in \assref{sem}\ass{.}\ref{enu:sem:psdens},
we must have $q_{0}=0$, and that $Q$ is an orthogonal matrix. Since
\begin{equation}
\det Dr_{0}(y)=(\det Q)[\det D\tilde{r}_{0}(y)]\label{eq:detQpos}
\end{equation}
a.e., it follows from \assref{sem}\ass{.}\ref{enu:sem:bij} that
$\det Q>0$, and hence $Q\in\sorths(G)$.\hfill{}\qedsymbol{}

\subsection{Proof of \thmref{shockid}}

\label{app:SVARidproof}

This is essentially a matter of mapping the notation and assumptions
imposed on the nonlinear SVAR in \subsecref{nonlinearsvar}, into
their counterparts for the nonlinear SEM in \appref{nonlinearSEM},
and then applying \thmref{matzkin}. Making the identification
\begin{align}
(Y,X,U) & =(z_{t},\b z_{t-1},\err_{t}), & (r_{0},r_{1},f_{U}) & =(\fe_{0},\b{\fe}_{1},\den), & (\Gamma_{0},\Gamma_{1},\Phi) & =(\fespc_{0},\bigfspc_{1},\denspc),\label{eq:vartrans}
\end{align}
so that $G=p$ and $K=kp$, and noting that the nonlinear SVAR satisfies
\assref{svarpar} and \assref{svardgp}, it follows that the nonlinear
SEM satisfies \assref{sem}, with the only exceptions that: $\det D\tilde{r}_{0}(y)\neq0$
a.e., for each $\tilde{r}_{0}\in\Gamma_{0}$ rather than necessarily
being strictly positive a.e.; and that the location normalisation
$\tilde{r}_{0}(0)=0$ is now imposed.

However, since the sign of the determinant of the Jacobian of a locally
Lipschitz bijection $\reals^{G}\setmap\reals^{G}$ must be the same
a.e., it must be the case that for every $\tilde{r}_{0}\in\Gamma_{0}$,
either $\det D\tilde{r}_{0}(y)>0$ a.e., or $\det D\tilde{r}_{0}(y)<0$
a.e. Fixing a $Q_{0}\in\orths(p)$ with $\det Q_{0}=-1$, and suppose
e.g.\ that $r_{0}$ has $\det Dr_{0}(y)<0$ a.e. Then simply by multiplying
\eqref{matzkin} through by $Q_{0}$,
\[
Q_{0}U=(Q_{0}r_{0})(Y)+(Q_{0}r_{1})(X),
\]
we obtain a parametrisation $(Q_{0}r_{0},Q_{0}r_{1},f_{Q_{0}U})$
that is observationally equivalent to $(r_{0},r_{1},f_{U})$, but
where now $\det D[Q_{0}r_{0}](y)>0$ a.e. By similarly transforming
any candidate $(\tilde{r}_{0},\tilde{r}_{1},\tilde{f}_{U})$ for which
$\det D\tilde{r}_{0}(y)<0$ a.e., we can thus reduce the situation
to one in which both $\det Dr_{0}(y)>0$ a.e., and $\det D\tilde{r}_{0}(y)>0$
a.e., as is contemplated in \thmref{matzkin}. Because of the possibly
intervening transformation by $Q_{0}$, that result thus implies that
for a given $(\tilde{r}_{0},\tilde{r}_{1})\in\Gamma_{0}\times\Gamma_{1}$,
there exists an $f_{\tilde{U}}\in\Phi$ such that $(\tilde{r}_{0},\tilde{r}_{1},f_{\tilde{U}})$
is observationally equivalent to $(r_{0},r_{1},f_{U})$, if and only
if there exists a $Q\in\orths(G)$ -- which need not now be in $\sorths(G)$
-- such that 
\[
\tilde{r}_{0}(y)+\tilde{r}_{1}(x)=Q[r_{0}(y)+r_{1}(x)]
\]
for all $(y,x)\in\reals^{G}\times\reals^{K}$. Because of the location
normalisation $\tilde{r}_{0}(0)=0=r_{0}(0)$, this is equivalent to
\begin{align*}
\tilde{r}_{0}(y) & =Qr_{0}(y)\sep\forall y\in\reals^{G} & \tilde{r}_{1}(x) & =Qr_{1}(x)\sep\forall x\in\reals^{K}.
\end{align*}
Transposing this back to the notation of the SVAR, via \eqref{vartrans}
above, yields the result.\hfill\qedsymbol{}

\section{Proofs for piecewise affine functions}

\label{app:piecewiseaffine}

For the proof of \propref{pwacondition}, we shall need the following
auxiliary result, whose proof is given in \appref{conhull} below.
Let the convex hull of a collection of matrices $\{A_{i}\}_{i=1}^{k}$
be denoted $\ch\{A_{i}\}_{i=1}^{k}\defeq\{\sum_{i=1}^{k}\lambda_{i}A_{i}\mid\lambda_{i}\geq0\sep\sum_{i=1}^{k}\lambda_{i}=1\}$.
\begin{lem}
\label{lem:pwa-conhull}Suppose $f:\reals^{p}\setmap\reals^{p}$ is
a piecewise affine function. Then for every $x^{\prime},x^{\prime\prime}\in\reals^{p}$,
there exists a $\Phi\in\ch\{\Phi^{(\ell)}\}_{\ell=1}^{L}$ such that
\[
f(x^{\prime\prime})-f(x^{\prime})=\Phi(x^{\prime\prime}-x^{\prime}).
\]
\end{lem}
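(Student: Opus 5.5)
The plan is to restrict $f$ to the segment joining $x^{\prime}$ and $x^{\prime\prime}$ and integrate its derivative along that segment. Let $\gamma(s)\defeq x^{\prime}+s(x^{\prime\prime}-x^{\prime})$ for $s\in[0,1]$, and set $d\defeq x^{\prime\prime}-x^{\prime}$. Each $\set Z^{(\ell)}$ is convex, so $I_{\ell}\defeq\{s\in[0,1]\mid\gamma(s)\in\set Z^{(\ell)}\}$ is an interval (possibly empty or degenerate). Because the sets $\set Z^{(\ell)}$ partition $\reals^{p}$, the intervals $I_{\ell}$ partition $[0,1]$. Since there are only finitely many of them, $[0,1]$ splits into finitely many consecutive subintervals with endpoints $0=s_{0}<s_{1}<\cdots<s_{m}=1$, where on the interior of each $[s_{j-1},s_{j}]$ the path $\gamma$ lies in a single regime $\ell_{j}$.

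On each piece $(s_{j-1},s_{j})$ we have $f(\gamma(s))=\bar{\phi}^{(\ell_{j})}+\Phi^{(\ell_{j})}\gamma(s)$. Since $f$ is continuous, $f\compose\gamma$ is continuous on $[0,1]$, so the affine formula extends to the closed subinterval. This gives
\[
f(\gamma(s_{j}))-f(\gamma(s_{j-1}))=\Phi^{(\ell_{j})}(s_{j}-s_{j-1})d .
\]
Telescoping over $j$ then yields
\[
f(x^{\prime\prime})-f(x^{\prime})=\sum_{j=1}^{m}(s_{j}-s_{j-1})\Phi^{(\ell_{j})}d .
\]
The weights $s_{j}-s_{j-1}$ are nonnegative and sum to $1$. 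If we group the indices $j$ sharing the same regime and take $\lambda_{\ell}$ to be the total length assigned to regime $\ell$, we obtain $\Phi=\sum_{\ell}\lambda_{\ell}\Phi^{(\ell)}\in\ch\{\Phi^{(\ell)}\}_{\ell=1}^{L}$, as required. Regimes met only at isolated points, or not at all, simply receive $\lambda_{\ell}=0$.

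The one point that needs care is the decomposition step. We must check that the pieces really are finitely many nondegenerate intervals on which a single regime applies, so that continuity can be used to pass to the endpoints. Degenerate intervals (single points) contribute zero length, and the continuity of $f\compose\gamma$ makes their assigned regime irrelevant. Convexity of each $\set Z^{(\ell)}$ is exactly what ensures that each $I_{\ell}$ is a single interval, so at most $L$ pieces arise. An alternative route writes $f(x^{\prime\prime})-f(x^{\prime})=\int_{0}^{1}Df(\gamma(s))d\diff s$ using absolute continuity of the Lipschitz map $f\compose\gamma$. That version, however, would require $Df$ to exist along the segment, which can fail if the segment lies inside a boundary, and the elementary telescoping argument above avoids this issue.
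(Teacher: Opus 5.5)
Your proof is correct and follows essentially the same route as the paper's. Both restrict $f$ to the segment, use convexity of the regimes to get finitely many switch points, invoke continuity of $f$ at those points, and obtain a convex combination whose weights are the subinterval lengths. The only difference is cosmetic: the paper reaches the same formula by a summation by parts using the continuity identity at each switch point, whereas your direct telescoping over closed subintervals is slightly cleaner.
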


\subsection{Proof of \propref{pwacondition}}

To simplify the notation, throughout we drop the `$0$' subscript
on $\fe_{0}$ in the statement of the proposition, writing it simply
as $f$. Without loss of generality, we may suppose that \eqref{detcond}
holds with $\det\Phi^{(\ell)}>0$ for all $\ell\in\{1,\ldots,L\}$.

\paragraph{\ref{enu:orig}.}

By either Theorem\ 1 and 4 in \citet{GLM80Ecta}, which are applicable
in the piecewise linear and threshold affine cases respectively, $\fe:\reals^{p}\setmap\reals^{p}$
is invertible. Being continuous by assumption, it is therefore a homeomorphism,
by Theorem~4.3 in \citet{Deim85}. Since a piecewise affine function
is Lipschitz continuous (\citealp{Schol12}, Prop.\ 2.2.7), it remains
only to note that the inverse of an (invertible) piecewise affine
function is itself piecewise affine (\citealp{Schol12}, Prop.\ 2.3.1).

\paragraph{\ref{enu:smooth}.}

Fix $x^{\prime},x^{\prime\prime}\in\reals^{p}$. We have by \lemref{pwa-conhull}
that for every $u\in\reals^{p}$, there exist non-negative $\{\lambda_{\ell}(u)\}_{\ell=1}^{L}$
(which depend also on $x^{\prime},x^{\prime\prime}$) such that $\sum_{\ell=1}^{L}\lambda_{\ell}(u)=1$
and
\[
\fe(x^{\prime\prime}+u)-\fe(x^{\prime}+u)=\left[\sum_{\ell=1}^{L}\lambda_{\ell}(u)\Phi^{(\ell)}\right](x^{\prime\prime}-x^{\prime}).
\]
Hence
\begin{align}
\fe_{K}(x^{\prime\prime})-\fe_{K}(x^{\prime}) & =\int_{\reals^{p}}[\fe(x^{\prime\prime}+u)-\fe(x^{\prime}+u)]K(u)\diff u\nonumber \\
 & =\sum_{\ell=1}^{L}\left[\int_{\reals^{p}}\lambda_{\ell}(u)K(u)\diff u\right]\Phi^{(\ell)}(x^{\prime\prime}-x^{\prime})\eqdef\left[\sum_{\ell=1}^{L}\mu_{\ell}\Phi^{(\ell)}\right](x^{\prime\prime}-x^{\prime}).\label{eq:fKdiff}
\end{align}
where $\sum_{\ell=1}^{L}\mu_{\ell}=1$. Since the bracketed matrix
on the r.h.s.\ is an element of $\ch\{\Phi^{(\ell)}\}_{\ell=1}^{L}$,
it suffices to show that every matrix in that set is invertible.

We first note the following. Suppose $A$ and $B$ are square matrices,
with $\det A>0$ and $\det B>0$, and $B-A\eqdef uv^{\trans}$ having
rank 1. Then by Cauchy's formula for the determinant of a rank-1
perturbation,
\begin{equation}
\det B=\det(A+uv^{\trans})=(\det A)(1+u^{\trans}A^{-1}v),\label{eq:cauchyrank1}
\end{equation}
and so we must have that $u^{\trans}A^{-1}v>-1$. Therefore for every
$\lambda\in[0,1]$,
\begin{equation}
\det(\lambda A+(1-\lambda)B)=\det[A+(1-\lambda)uv^{\trans}]=(\det A)[1+(1-\lambda)u^{\trans}A^{-1}v]>0.\label{eq:cauchydet}
\end{equation}

Now suppose that $\fe$ is threshold affine. Since $f$ is continuous
at the thresholds,
\begin{equation}
\phi^{(\ell-1)}+\Phi^{(\ell-1)}x=\phi^{(\ell)}+\Phi^{(\ell)}x\label{eq:fcty}
\end{equation}
for all $x\in\reals^{p}$ such that $a^{\trans}x=\tau_{\ell-1}$.
Deduce that
\[
(\Phi^{(\ell-1)}-\Phi^{(\ell)})a_{\perp}=0
\]
where $a_{\perp}\in\reals^{p\times(p-1)}$has full column rank, and
$a^{\trans}a_{\perp}=0$. Hence there exists an $m^{(\ell)}\in\reals^{p}$
such that 

\[
\Phi^{(\ell)}-\Phi^{(\ell-1)}=m^{(\ell)}a^{\trans},
\]
and so
\begin{equation}
\Phi^{(\ell)}=\Phi^{(1)}+\sum_{i=2}^{\ell}m^{(i)}P_{a}\eqdef\Phi^{(1)}+n^{(\ell)}a^{\trans}\label{eq:Phiell}
\end{equation}
for every $\ell\in\{1,\ldots,L\}$. It follows from and \eqref{cauchyrank1}
above, and the fact that $\det\Phi^{(\ell)}>0$ for all $\ell\in\{1,\ldots,L\}$,
that $1+n^{(\ell)\trans}(\Phi^{(1)})^{-1}a>0$. Noting that
\[
\sum_{\ell=1}^{L}\mu_{\ell}\Phi^{(\ell)}=\Phi^{(1)}+\left[\sum_{\ell=1}^{L}\mu_{\ell}n^{(\ell)}\right]a^{\trans}
\]
it follows via another application of \eqref{cauchyrank1} that
\begin{align*}
\det\left(\sum_{\ell=1}^{L}\mu_{\ell}\Phi^{(\ell)}\right) & =(\det\Phi^{(1)})\cdot\left[1+\left(\sum_{\ell=1}^{L}\mu_{\ell}n^{(\ell)}\right)^{\trans}(\Phi^{(1)})^{-1}a\right]\\
 & =(\det\Phi^{(1)})\cdot\sum_{\ell=1}^{L}\mu_{\ell}[1+n^{(\ell)\trans}(\Phi^{(1)})^{-1}a]>0,
\end{align*}
as required.

Next suppose that $\fe$ is piecewise linear, and note that since
each $\set X^{(\ell)}$ is a union of cones of the form \eqref{pwlbasis},
we may without loss of generality write
\[
\fe(x)=\sum_{m=1}^{2^{p}}\indic\{x\in\set C_{{\cal I}_{m}}\}\tilde{\Phi}^{(m)}x
\]
where $\{{\cal I}_{m}\}_{m=1}^{2^{p}}$ partitions $2^{\{1,\ldots,p\}}$,
and each for each $m\in\{1,\ldots,2^{p}\}$, there is an $\ell\in\{1,\ldots,L\}$
such that $\tilde{\Phi}^{(m)}=\Phi^{(\ell)}$. Moreover, since $A=[a_{1},\ldots,a_{p}]$
is invertible, we may write
\begin{align*}
x\in\set C_{{\cal I}} & \iff a_{i}^{\trans}A^{-1}Ax\geq0\sep\forall i\in{\cal I}\text{ and }a_{i}^{\trans}A^{-1}Ax<0\sep\forall i\notin{\cal I}\\
 & \iff Ax\in\set D_{{\cal I}}
\end{align*}
where 
\[
\set D_{{\cal I}}\defeq\{x\in\reals^{p}\mid e_{i}^{\trans}x\geq0\sep\forall i\in{\cal I}\text{ and }e_{i}^{\trans}x<0\sep\forall i\notin{\cal I}\}.
\]
Hence
\[
g(Ax)\defeq\fe[A^{-1}(Ax)]=\fe(x)=\sum_{m=1}^{2^{p}}\indic\{Ax\in\set D_{{\cal I}_{m}}\}\tilde{\Phi}^{(m)}A^{-1}(Ax)
\]
and thus it suffices to prove the result with $f$ replaced by
\begin{align*}
g(y) & =\sum_{m=1}^{2^{p}}\indic\{y\in\set D_{{\cal I}_{m}}\}\Psi^{(m)}y=\sum_{i=1}^{p}[\psi_{i}^{+}\indic^{+}(y_{i})+\psi_{i}^{-}\indic^{-}(y_{i})]y_{i}
\end{align*}
where $\Psi^{(m)}\defeq\tilde{\Phi}^{(m)}A^{-1}$, and $\indic^{+}(s)\defeq\indic\{s\geq0\}$
and $\indic^{-}(s)\defeq\indic\{s<0\}$ for $s\in\reals$. The second
equality holds since $g$ is continuous, and so the coefficients on
$y_{i_{0}}=e_{i_{0}}^{\trans}y$ can only change at the point where
$y_{i_{0}}=0$; and therefore $\Psi^{(m)}e_{i_{0}}=\psi_{i_{0}}^{+}$
for all $m$ such that $\mathcal{I}_{m}\ni i_{0}$, while $\Psi^{(m)}e_{i_{0}}=\psi_{i_{0}}^{-}$
for all $m$ such that $\mathcal{I}_{m}\not\ni i_{0}$. By the requirement
\eqref{detcond}, the determinant of each $\Psi^{(m)}$ must have
the same sign (assumed without loss of generality to be positive).
Thus it suffices to show that for each $\lambda\defeq\{\lambda_{m}\}_{m=1}^{2^{p}}$
in the $(2^{p}-1)$-dimensional simplex,
\[
\Psi_{\lambda}\defeq\sum_{m=1}^{2^{p}}\lambda_{m}\Psi^{(m)}
\]
has $\det\Psi_{\lambda}>0$.

To that end, for each $i\in\{1,\ldots,p\}$, define $\mu_{i}\defeq\sum_{m=1}^{2^{p}}\lambda_{m}\indic\{\mathcal{I}_{m}\ni i\}$,
which sums the weights $\{\lambda_{m}\}$ over those $\set D_{{\cal I}_{m}}$
for which $y_{i}\geq0$. Thus the $i$th column of $\Psi_{\lambda}$
is equal to 
\[
\mu_{i}\psi_{i}^{+}+(1-\mu_{i})\psi_{i}^{-}\eqdef\bar{\psi}_{i}.
\]
For $q\in\{0,\ldots,p\}$, consider the $2^{p-q}$ matrices defined
by
\[
\Psi_{q}(s)=\begin{bmatrix}\bar{\psi}_{1}, & \ldots, & \bar{\psi}_{q}, & \psi_{q+1}(s_{1}), & \ldots, & \psi_{p}(s_{p-q})\end{bmatrix}
\]
where $s\in S^{p-q}\defeq\{-1,+1\}^{p-q}$, and 
\[
\psi_{i}(u)\defeq\psi_{i}^{-}\indic\{u=-1\}+\psi_{i}^{+}\indic\{u=+1\}
\]
We will show, via an induction, that: for each $q\in\{1,\ldots,p\}$,
$\det\Psi_{q}(s)>0$ for every $s\in S^{p-q}$. Since $\Psi_{\lambda}=\Psi_{p}$,
the result will then follow.

To that end, suppose that $q=1$. Then for every $s\in S^{p-1}$,
\begin{align*}
\Psi_{1}(s) & =\begin{bmatrix}\mu_{1}\psi_{1}^{+}+(1-\mu_{1})\psi_{1}^{-}, & \psi_{2}(s_{1}), & \ldots, & \psi_{p}(s_{p-1})\end{bmatrix}\\
 & =\mu_{1}\begin{bmatrix}\psi_{1}^{+}, & \psi_{2}(s_{1}), & \ldots, & \psi_{p}(s_{p-1})\end{bmatrix}+(1-\mu_{1})\begin{bmatrix}\psi_{1}^{-}, & \psi_{2}(s_{1}), & \ldots, & \psi_{p}(s_{p-1})\end{bmatrix}\\
 & =\mu_{1}\Psi_{0}[(+1,s^{\trans})^{\trans}]+(1-\mu_{1})\Psi_{0}[(-1,s^{\trans})^{\trans}]
\end{align*}
Since both $\Psi_{0}[(+1,s^{\trans})^{\trans}]$ and $\Psi_{0}[(-1,s^{\trans})^{\trans}]$
are elements of $\{\Psi^{(m)}\}_{m=1}^{2^{p}}$, they each have positive
determinant. Moreover, they differ only by a rank one matrix, and
so it follows by \eqref{cauchydet} that $\det\Psi_{1}(s)>0$ for
all $s\in S^{p-1}$. Thus the inductive hypothesis is true when $q=1$.

Now suppose the inductive hypothesis is true for all $q\in\{1,\ldots,q_{0}\}$,
where $q_{0}\leq p-1$. We must show it holds for $q=q_{0}+1$. Consider
\begin{align*}
\Psi_{q_{0}+1}(s) & =\begin{bmatrix}\bar{\psi}_{1}, & \ldots, & \bar{\psi}_{q_{0}}, & \mu_{q_{0}+1}\psi_{q_{0}+1}^{+}+(1-\mu_{q_{0}+1})\psi_{q_{0}+1}^{-}, & \psi_{q_{0}+1}(s_{1}), & \ldots, & \psi_{p}(s_{p-q_{0}})\end{bmatrix}\\
 & =\mu_{q_{0}+1}\Psi_{q_{0}}[(+1,s^{\trans})^{\trans}]+(1-\mu_{q_{0}+1})\Psi_{q_{0}}[(-1,s^{\trans})^{\trans}].
\end{align*}
By the inductive hypothesis, both $\Psi_{q_{0}}[(1,s^{\trans})^{\trans}]$
and $\Psi_{q_{0}}[(0,s^{\trans})^{\trans}]$ have strictly positive
determinant; and again they differ only be a rank one matrix. Hence
\eqref{cauchydet} implies that $\det\Psi_{q_{0}+1}(s)>0$ for all
$s\in S^{p-(q_{0}+1)}$, and so the inductive hypothesis is true for
$q=q_{0}+1$. Deduce that $\Psi_{\lambda}=\Psi_{p}$ has strictly
positive determinant, and is therefore invertible. Thus the smoothed
counterpart of $g$, and therefore of $f$ also, is invertible.

We have thus shown that $f_{K}$ is invertible in both the piecewise
linear and threshold affine cases, and that moreover $\sum_{\ell=1}^{L}\mu_{\ell}\Phi^{(\ell)}$
in \eqref{fKdiff} has strictly positive determinant. Clearly, $f_{K}$
is Lipschitz, since $\sum_{\ell=1}^{L}\mu_{\ell}\Phi^{(\ell)}$ is
bounded, for $(\mu_{1},\ldots,\mu_{L})$ an element of the $(L-1)$-dimensional
unit simplex $\Delta^{L-1}$. It follows moreover that it is bi-Lipschitz,
since the final term on the r.h.s.\ of
\[
\smlnorm{\fe_{K}(x^{\prime\prime})-\fe_{K}(x^{\prime})}\geq\smlnorm{x^{\prime\prime}-x^{\prime}}\inf_{\smlnorm v=1}\inf_{\{\mu_{\ell}\}\in\Delta^{L-1}}\norm{\left[\sum_{\ell=1}^{L}\mu_{\ell}\Phi^{(\ell)}\right]v},
\]
may not be zero for any (permitted) $v$ and $\{\mu_{\ell}\}$, since
that would otherwise contradict the invertibility of $\sum_{\ell=1}^{L}\mu_{\ell}\Phi^{(\ell)}$.
By continuity and compactness, the infimum must be achieved, and must
therefore be strictly positive. Finally, in view of the integrability
condition \eqref{integderiv}, $\fe_{K}$ must also have $m$ continuous
derivatives, by the dominated derivatives theorem.\hfill\qedsymbol{}

\subsection{Proof of \lemref{pwa-conhull}}

\label{app:conhull}

Let $\phi(x)\defeq\sum_{\ell=1}^{L}\indic\{x\in\set X^{(\ell)}\}\phi^{(\ell)}$
and $\Phi(x)\defeq\sum_{\ell=1}^{L}\indic\{x\in\set X^{(\ell)}\}\Phi^{(\ell)}$,
so that these are constant on each $\set X^{(\ell)}$, and $f(x)=\phi(x)+\Phi(x)x$.
Now let $x^{\prime},x^{\prime\prime}\in\reals^{p}$; with this notation,
\begin{align}
f(x^{\prime\prime})-f(x^{\prime}) & =[\phi(x^{\prime\prime})-\phi(x^{\prime})]+[\Phi(x^{\prime\prime})x^{\prime\prime}-\Phi(x^{\prime})x^{\prime}]\nonumber \\
 & =[\phi(x^{\prime\prime})-\phi(x^{\prime})]+\Phi(x^{\prime})(x^{\prime\prime}-x^{\prime})+[\Phi(x^{\prime\prime})-\Phi(x^{\prime})]x^{\prime\prime}.\label{eq:dfdecomp}
\end{align}
Define
\[
x(\delta)\defeq(1-\delta)x^{\prime}+\delta x^{\prime\prime}
\]
for $\delta\in\reals$. Since $f$ is continuous, so too is $\delta\elmap f[x(\delta)]$.
Because $\phi$ and $\Phi$ are piecewise constant, and $\{\set X^{(\ell)}\}_{\ell=1}^{L}$
is a convex partition of $\reals^{p}$, it follows that $\delta\elmap\phi[x(\delta)]$
and $\delta\elmap\Phi[x(\delta)]$ have $m\in\{0,\ldots,L-1\}$ points
of discontinuity for $\delta\in[0,1]$, located at some $\{\delta_{i}\}_{i=1}^{m}\subset[0,1]$
with $\delta_{i}<\delta_{i+1}$ for all $i$. If $m=0$, then the
result holds with $\Phi=\Phi^{(\ell^{\ast})}\in\ch\{\Phi^{(\ell)}\}_{\ell=1}^{L}$,
where $\ell^{\ast}$ is such that $x^{\prime}\in\set X^{(\ell^{\ast})}$.
We suppose therefore that $m\geq1$.

Set $x_{0}\defeq x^{\prime}$ and $x_{m}\defeq x^{\prime\prime}$;
and when $m\geq2$, let $\{x_{i}\}_{i=1}^{m-1}$ be chosen such that
$x_{i}=x(\delta)$ for some $\delta\in(\delta_{i},\delta_{i+1})$.
By the continuity of $\delta\elmap f[x(\delta)]$ at each $\delta=\delta_{i}$,
we must have
\begin{equation}
0=\lim_{\delta\dto\delta_{i}}f[x(\delta)]-\lim_{\delta\uto\delta_{i}}f[x(\delta)]=[\phi(x_{i})-\phi(x_{i-1})]+[\Phi(x_{i})-\Phi(x_{i-1})]x(\delta_{i}).\label{eq:ctyatdty}
\end{equation}
for $i\in\{1,\ldots,m\}$. Noting also that
\begin{equation}
x^{\prime\prime}-x(\delta_{i})=x^{\prime\prime}-[(1-\delta_{i})x^{\prime}+\delta_{i}x^{\prime\prime}]=(1-\delta_{i})(x^{\prime\prime}-x^{\prime}),\label{eq:xmxd}
\end{equation}
we may write the final term on the r.h.s.\ of \eqref{dfdecomp} as
\begin{align*}
[\Phi(x^{\prime\prime})-\Phi(x^{\prime})]x^{\prime\prime} & =[\Phi(x_{m})-\Phi(x_{0})]x^{\prime\prime}\\
 & =\sum_{i=1}^{m}[\Phi(x_{i})-\Phi(x_{i-1})]x^{\prime\prime}\\
 & =_{(1)}\sum_{i=1}^{m}[\Phi(x_{i})-\Phi(x_{i-1})][(1-\delta_{i})(x^{\prime\prime}-x^{\prime})+x(\delta_{i})]\\
 & =_{(2)}\sum_{i=1}^{m}(1-\delta_{i})[\Phi(x_{i})-\Phi(x_{i-1})](x^{\prime\prime}-x^{\prime})-\sum_{i=1}^{m}[\phi(x_{i})-\phi(x_{i-1})],
\end{align*}
where $=_{(1)}$ follows from \eqref{xmxd}, and $=_{(2)}$ from \eqref{ctyatdty}.
We note that
\[
\sum_{i=1}^{m}[\phi(x_{i})-\phi(x_{i-1})]=\phi(x_{m})-\phi(x_{0})
\]
and that setting $\delta_{0}\defeq0$ and $\delta_{m+1}=1$, we have
\begin{align*}
 & \sum_{i=1}^{m}(1-\delta_{i})[\Phi(x_{i})-\Phi(x_{i-1})]\\
 & \qquad\qquad\qquad=(1-\delta_{m})\Phi(x_{m})+\sum_{i=1}^{m-1}[(1-\delta_{i})-(1-\delta_{i+1})]\Phi(x_{i})-(1-\delta_{1})\Phi(x_{0})\\
 & \qquad\qquad\qquad=\sum_{i=0}^{m}[(1-\delta_{i})-(1-\delta_{i+1})]\Phi(x_{i})-\Phi(x_{0})\\
 & \qquad\qquad\qquad=\sum_{i=0}^{m}(\delta_{i+1}-\delta_{i})\Phi(x_{i})-\Phi(x_{0})
\end{align*}
and whence
\begin{align*}
[\Phi(x^{\prime\prime})-\Phi(x^{\prime})]x^{\prime\prime} & =\left[\sum_{i=0}^{m}(\delta_{i+1}-\delta_{i})\Phi(x_{i})\right](x^{\prime\prime}-x^{\prime})-\Phi(x_{0})(x^{\prime\prime}-x^{\prime})-[\phi(x_{m})-\phi(x_{0})]\\
 & =\left[\sum_{i=0}^{m}\lambda_{i}\Phi(x_{i})\right](x^{\prime\prime}-x^{\prime})-\Phi(x^{\prime})(x^{\prime\prime}-x^{\prime})-[\phi(x^{\prime\prime})-\phi(x^{\prime})]
\end{align*}
where $\lambda_{i}\defeq\delta_{i+1}-\delta_{i}\geq0$ and $\sum_{i=0}^{m}\lambda_{i}=\sum_{i=0}^{m}(\delta_{i+1}-\delta_{i})=\delta_{m+1}-\delta_{0}=1$.
It follows from \eqref{dfdecomp} that
\[
f(x^{\prime\prime})-f(x^{\prime})=\left[\sum_{i=0}^{m}\lambda_{i}\Phi(x_{i})\right](x^{\prime\prime}-x^{\prime}).
\]
Finally, noting that for each $i\in\{1,\ldots,m\}$, there exists
an $\ell_{i}\in\{1,\ldots,L\}$ such that $\Phi(x_{i})=\Phi^{(\ell_{i})}$,
we have $\sum_{i=0}^{m}\lambda_{i}\Phi(x_{i})\in\ch\{\Phi^{(\ell)}\}_{\ell=1}^{L}$
as required.\hfill\qedsymbol{}

\section{Proofs for the extended model}

\label{app:hetero}

\subsection{Identification in the augmented SEM}

Here we return to the setting of the nonlinear SEM from \appref{nonlinearSEM},
augmented to allow for conditional heteroskedasticity of the form
\begin{equation}
\vol(Z)U=r(Y,X,Z)=r_{0}(Y)+r_{1}(X,Z),\label{eq:hetSEM}
\end{equation}
where the skedastic function $\vol(z)$ is a diagonal matrix with
strictly positive entries, for every $z\in\reals^{L}$. We shall now
maintain that $U$ is independent of $(X,Z)$. In this formulation
of the model, the $X$ variables play a special role, in being excluded
from the skedastic function; and identification will now hinge on
there being sufficient dependence of the r.h.s.\ on $X$ \emph{given}
$Z=z$. (Note also that we will \emph{not} require $r_{1}(x,z)$ to
be continuous with respect to $z$.)

To allow $Z$ to be discrete, we shall suppose that it has has some
support $\mathcal{Z}\subset\reals^{L}$, and a distribution thereon
that is equivalent to some measure $\nu$. We shall suppose that \emph{conditional
on $\nu$-almost every $z\in\mathcal{Z}$}, $X$ has a (Lebesgue)
density $f_{X\mid Z}$ with support $\reals^{K}$ (i.e.\ $f_{X\mid Z}$
may depend on $z$, but its support does not). The model then implies,
for $\nu$-a.e.\ $z\in\mathcal{Z}$, that $Y$ has the following
density conditional on $(X,Z)$:
\begin{align*}
f_{Y\mid X,Z}(y\mid x,z) & =f_{U}[\vol(z)^{-1}r(y,x,z)]\cdot\det\vol(z)^{-1}Dr_{0}(y)\\
 & =f_{U}\{\vol(z)^{-1}[r_{0}(y)+r_{1}(x,z)]\}\cdot\det\vol(z)^{-1}Dr_{0}(y),
\end{align*}
a.e.\ $(y,x)\in\reals^{G+K}$. (So long as the distribution of $Z$
is equivalent to $\nu$, this holds irrespective of what the distribution
of $Z$ actually is, a fact that is useful when we come apply our
results to an SVAR in which the distribution of the predetermined
variables may not be stationary.) We will accordingly now say that
two alternative parametrisations $(r_{0},r_{1},\vol,f_{U})$ and $(\tilde{r}_{0},\tilde{r}_{1},\tilde{\vol},f_{\tilde{U}})$
are \emph{observationally equivalent} if for $\nu$-a.e.\ $z\in\mathcal{Z}$,
\begin{equation}
f_{U}[\vol(z)^{-1}r(y,x,z)]\cdot\det\vol(z)^{-1}Dr_{0}(y)=f_{\tilde{U}}[\tilde{\vol}(z)^{-1}\tilde{r}(y,x,z)]\cdot\det\tilde{\vol}(z)^{-1}D\tilde{r}_{0}(y)\label{eq:obseq}
\end{equation}
a.e.\ $(y,x)\in\reals^{G+K}$.

The model \eqref{hetSEM} is now parameterised by the functions $r_{0}:\reals^{G}\setmap\reals^{G}$,
$r_{1}:\reals^{K}\setmap\reals^{G}$, $\vol:\reals^{L}\setmap\reals^{G\times G}$
and the density $f_{U}$; the sets $\Gamma_{i}\ni r_{i}$, for $i\in\{0,1\}$,
$\Sigma\ni\vol$, and $\Phi\ni f_{U}$ define the parameter space.
Our assumptions here amount to only minor modifications of those maintained
in \appref{nonlinearSEM}. Note, in particular, that although we continue
to require $y\elmap r_{0}(y)$ and $x\elmap r_{1}(x,z)$ to be Lipschitz
continuous, we do not require continuity of either $z\elmap\vol(z)$
or $z\elmap r_{1}(x,z)$. To normalise the overall scale of \eqref{hetSEM},
we shall suppose that there is a (known) $z^{\ast}\in\mathcal{Z}$
such that for every $\tilde{s}\in\Sigma$,
\begin{equation}
\tilde{s}(z^{\ast})=I_{p},\label{eq:snorm}
\end{equation}
with $\tilde{s}$ continuous at $z^{\ast}$, and $\nu$ placing strictly
positive mass on every neighbourhood of $z^{\ast}$.

\assumpname{SEM$^{\prime}$}
\begin{assumption}
\label{ass:het} \assref{sem} holds, with only the following modifications
to parts\ \ref{enu:sem:lipschitz} and \ref{enu:sem:surj}: 
\begin{enumerate}[label=\ass{\Alph*1.}, ref=\ass{\Alph*1}]
\item \label{enu:ske:lipschitz}for every $\tilde{r}_{0}\in\Gamma_{0}$
and $\tilde{r}_{1}\in\Gamma_{1}$: $\tilde{r}_{0}$ and $x\elmap\tilde{r}_{1}(x,z)$
are locally Lipschitz, for every $z\in\mathcal{Z}$;
\item \label{enu:het:surj}$x\elmap r_{1}(x,z)$ is surjective, with $\rank D_{x}r_{1}(x,z)=\reals^{G}$
for almost every $x\in\reals^{K}$, for every $z\in\mathcal{Z}$.
\end{enumerate}
Moreover, for every $\tilde{\vol}\in\Sigma$: $\tilde{\vol}(z)$ is
a $(G\times G)$ diagonal matrix with strictly positive entries, for
every $z\in\mathcal{Z}$; and the scale normalisation \eqref{snorm}
holds.
\end{assumption}

We may now state our main result on observational equivalence in the
model \eqref{hetSEM}.
\begin{thm}
\label{thm:semhet}Suppose that \assref{het} holds. Let $\tilde{r}_{i}\in\Gamma_{i}$
for $i\in\{0,1\}$. Then there exist $(\tilde{\vol},f_{\tilde{U}})\in\Sigma\times\Phi$
such that $(\tilde{r}_{0},\tilde{r}_{1},\tilde{\vol},\tilde{f}_{U})$
is observationally equivalent to $(r_{0},r_{1},\vol,f_{U})$, if and
only if there exists a $Q\in\sorths(G)$ such that for $\nu$-a.e.\ $z\in\mathcal{Z}$:
\begin{equation}
\tilde{r}_{0}(y)+\tilde{r}_{1}(x,z)=Q[r_{0}(y)+r_{1}(x,z)]\label{eq:hetoe1}
\end{equation}
for all $(y,x)\in\reals^{G+K}$; and
\begin{equation}
Q\vol^{2}(z)Q^{\trans}\label{eq:hetoe2}
\end{equation}
is a diagonal matrix; in which case $\tilde{\vol}(z)=Q\vol(z)Q^{\trans}$.
\end{thm}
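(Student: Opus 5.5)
The plan is to reduce \thmref{semhet} to \thmref{matzkin} by working conditionally on $Z=z$, and then to tie the resulting $z$-dependent rotations together through the common function $r_{0}$.

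The ``if'' direction is immediate. If \eqref{hetoe1} holds and $Q\vol^{2}(z)Q^{\trans}$ is diagonal, set $\tilde{\vol}(z)\defeq(Q\vol^{2}(z)Q^{\trans})^{1/2}$ and $\tilde{U}\defeq\tilde{\vol}(Z)^{-1}Q\vol(Z)U$. Then $P(z)\defeq\tilde{\vol}(z)^{-1}Q\vol(z)$ satisfies $P(z)P(z)^{\trans}=I$, so $P(z)$ is orthogonal. If, in addition, $P(z)=P$ does not depend on $z$, then $\tilde{U}=PU$ has a density in $\Phi$ and is independent of $(X,Z)$. I expect one must check that the diagonality condition forces $P(z)$ to be constant $\nu$-a.e.; alternatively, the statement's conclusion $\tilde\vol=Q\vol Q^{\trans}$ should be read as $P=Q$, i.e.\ $Q$ commutes with the diagonal $\vol(z)$ up to the relabelling implied by diagonality. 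Either way, $\tilde{\vol}(z^{\ast})=I$ holds because $\vol(z^{\ast})=I$.

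For the ``only if'' direction, fix $z$ in a $\nu$-full set on which \eqref{obseq} holds a.e.\ in $(y,x)$. Define $r_{0}^{z}\defeq r_{0}$ and $r_{1}^{z}(x)\defeq r_{1}(x,z)$, and write $U=\vol(z)^{-1}[r_{0}(Y)+r_{1}(X,z)]$. Conditionally on $Z=z$, this is the model \eqref{matzkin}, except that the normalisation $\expect UU^{\trans}=I$ applies to $U$ rather than to $\vol(z)U$. I will therefore apply \thmref{matzkin} to the rescaled functions $\vol(z)^{-1}r_{0}$ and $\vol(z)^{-1}r_{1}(\cdot,z)$, and similarly to the tilde counterparts. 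These satisfy \assref{sem}, by \ref{enu:ske:lipschitz}, \ref{enu:het:surj}, and because $\vol(z)$ is diagonal positive (so determinant signs are preserved). Since the proof of \thmref{matzkin} is pointwise in the conditioning, it yields, for $\nu$-a.e.\ $z$, a matrix $P(z)\in\sorths(G)$ with
\[
\tilde{\vol}(z)^{-1}[\tilde{r}_{0}(y)+\tilde{r}_{1}(x,z)]=P(z)\vol(z)^{-1}[r_{0}(y)+r_{1}(x,z)].
\]
Defining $M(z)\defeq\tilde{\vol}(z)P(z)\vol(z)^{-1}$, differentiating in $y$ gives $D\tilde{r}_{0}(y)=M(z)Dr_{0}(y)$ a.e. Because the left side does not depend on $z$ and $Dr_{0}(y)$ is invertible a.e., $M(z)\equiv M$ is constant $\nu$-a.e. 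Evaluating near $z^{\ast}$, using the continuity of $\vol$ and $\tilde{\vol}$ at $z^{\ast}$ and positive $\nu$-mass of neighbourhoods, gives $M=\lim P(z)$, which is orthogonal; call it $Q\in\sorths(G)$. Then $\tilde{r}_{0}=Qr_{0}+c$, and by \eqref{hetoe1} the constant $c$ is absorbed into $\tilde r_1$, so \eqref{hetoe1} follows. Finally, $\tilde{\vol}(z)P(z)=Q\vol(z)$ gives $\tilde{\vol}^{2}(z)=Q\vol^{2}(z)Q^{\trans}$, which must be diagonal.

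The main obstacle is the step passing from a family of $z$-indexed conclusions to a single $Q$. This requires both that the $\nu$-null exceptional sets from \thmref{matzkin} can be handled uniformly, and that the limit argument at $z^{\ast}$ is legitimate when $P(z)$ may be discontinuous in $z$. I would handle the latter by noting that $M$ is already constant, so only the existence of some sequence $z_{n}\to z^{\ast}$ in the good set is needed; compactness of $\sorths(G)$ then gives a convergent subsequence of $P(z_{n})$.
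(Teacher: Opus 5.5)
The only step you leave unproved is in the ``if'' direction, and it closes in one line. You do need $P(z)=\tilde{\vol}(z)^{-1}Q\vol(z)$ to be constant. Otherwise the conditional law of $\tilde U=P(Z)U$ given $Z=z$ could vary with $z$, and no single $f_{\tilde U}$ would work.

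Constancy holds automatically. $Q\vol(z)Q^{\trans}$ is symmetric positive definite and its square is $Q\vol^{2}(z)Q^{\trans}$. By uniqueness of positive definite square roots, your $\tilde{\vol}(z)=(Q\vol^{2}(z)Q^{\trans})^{1/2}$ therefore equals $Q\vol(z)Q^{\trans}$, which is itself diagonal with positive entries. Hence $P(z)=Q\vol(z)^{-1}Q^{\trans}Q\vol(z)=Q$ for every $z$. So $\tilde U=QU$ is independent of $(X,Z)$ with a density in $\Phi$, and $\tilde{\vol}(z^{\ast})=QQ^{\trans}=I_{G}$. This is exactly the paper's argument, which defines $\tilde{\vol}(z)\defeq Q\vol(z)Q^{\trans}$ from the outset and computes $\tilde U=QU$. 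Both of your hedged resolutions are therefore right for this reason, and nothing further needs checking.

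Your ``only if'' direction is correct and follows the paper. You apply \thmref{matzkin} at each $z$ in the full-measure set $\mathcal{Z}_0$ to the rescaled functions, obtain $P(z)\in\sorths(G)$, show that $M(z)=\tilde{\vol}(z)P(z)\vol(z)^{-1}$ does not depend on $z$, and use the normalisation at $z^{\ast}$. There is no uniformity problem with exceptional sets, because \thmref{matzkin} gives the identity for all $(y,x)$ at each such $z$.

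The differences from the paper are minor:
\begin{itemize}
\item You get constancy of $M(z)$ by differentiating in $y$ at a single $y_0$ with $\det Dr_0(y_0)\neq0$. The paper instead normalises $\tilde r_0(0)=r_0(0)=0$, sets $y=0$, and uses surjectivity of $r_0$.
\item You get orthogonality from $P(z_n)=\tilde{\vol}(z_n)^{-1}M\vol(z_n)\to M$ and closedness of $\sorths(G)$. This converges explicitly, so no compactness is needed, and it also gives $\det M>0$. The paper passes to the limit in $\tilde{\vol}^{2}(z_n)=Q\vol^{2}(z_n)Q^{\trans}$ and takes the sign from the \eqref{detQpos} argument.
\end{itemize}
Once $M$ is constant, multiplying your displayed identity by $\tilde{\vol}(z)$ gives \eqref{hetoe1} directly. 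Integrating $D\tilde r_0=M\,Dr_0$ and ``absorbing'' a constant is unnecessary, and as written it appeals to \eqref{hetoe1} in order to prove it.
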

\begin{proof}
Suppose \eqref{matzeq} and \eqref{hetoe2} hold for some $Q\in\sorths(G)$.
Then setting $\tilde{\vol}(z)\defeq Q\vol(z)Q^{\trans}$ for $\nu$-a.e.\ $z\in\mathcal{Z}$,
we have that
\begin{align*}
\tilde{U} & \defeq\tilde{\vol}(Z)^{-1}[\tilde{r}_{0}(Y)+\tilde{r}_{1}(X,Z)]\\
 & =Q\vol(Z)^{-1}Q^{\trans}Q[r_{0}(Y)+r_{1}(X,Z)]=QU
\end{align*}
a.s., which will be independent of $(X,Z)$, with a density $f_{\tilde{U}}$
that satisfies \assref{sem}\ass{.}\ref{enu:sem:psdens}; hence observational
equivalence obtains in this case. It remains therefore to prove the
reverse implication.

Suppose therefore that $(\tilde{r}_{0},\tilde{r}_{1},\tilde{\vol},\tilde{f}_{U})$
and $(r_{0},r_{1},\vol,f_{U})$ are observationally equivalent. Then
there exists a $\mathcal{Z}_{0}\subset\mathcal{Z}$ such that $\nu(\mathcal{Z}_{0})=1$
and \eqref{obseq} holds for every $z\in\mathcal{Z}_{0}$. Fixing
a $z_{0}\in\mathcal{Z}_{0}$, and only allowing $(y,x)$ to vary,
it is evident that the notion of observational equivalence in \eqref{obseq},
i.e.
\[
f_{U}[\vol(z_{0})^{-1}r(y,x,z_{0})]\cdot\det\vol(z_{0})^{-1}Dr_{0}(y)=f_{\tilde{U}}[\tilde{\vol}(z_{0})^{-1}\tilde{r}(y,x,z_{0})]\cdot\det\tilde{\vol}(z_{0})^{-1}D\tilde{r}_{0}(y)
\]
a.e.\ $(y,x)\in\reals^{G+K}$, coincides with that of \eqref{oe}
for the model \eqref{matzkin}: the only difference being that in
\eqref{oe} the dependence on $z_{0}$ is suppressed from the notation.
By \thmref{matzkin}, the preceding equality implies that there exists
a $P(z_{0})\in\sorths(G)$ such that
\[
\tilde{\vol}(z_{0})^{-1}\tilde{r}(y,x,z_{0})=P(z_{0})\vol(z_{0})^{-1}r(y,x,z_{0})
\]
for all $(y,x)\in\reals^{G+K}$, where we have written $P(z_{0})$
because this matrix may depend on the $z_{0}$ that was fixed above.
Since the preceding argument holds for every $z\in\mathcal{Z}_{0}$,
we thus obtain a map $P:\mathcal{Z}_{0}\setmap\sorths(G)$ such that
\begin{equation}
\tilde{\vol}(z)^{-1}[\tilde{r}_{0}(y)+\tilde{r}_{1}(x,z)]=P(z)\vol(z)^{-1}[r_{0}(y)+r_{1}(x,z)]\label{eq:condzid}
\end{equation}
for all $(y,x,z)\in\reals^{G}\times\reals^{K}\times\mathcal{Z}_{0}$.

Note that since we can exchange arbitrary constants between $\tilde{r}_{0}$
and $\tilde{r}_{1}$ (and between $r_{0}$ and $r_{1}$), as per
\[
\tilde{r}_{0}(y)+\tilde{r}_{1}(x,z)=[\tilde{r}_{0}(y)-\tilde{r}_{0}(0)]+[\tilde{r}_{1}(x,z)+\tilde{r}_{0}(0)],
\]
without disturbing \eqref{hetoe1}, we may without loss of generality
suppose that $\tilde{r}_{0}(0)=r_{0}(0)=0$; we maintain this henceforth.
Rearranging \eqref{condzid} as
\[
\tilde{\vol}(z)^{-1}\tilde{r}_{0}(y)-P(z)\vol(z)^{-1}r_{0}(y)=P(z)\vol(z)^{-1}r_{1}(x,z)-\tilde{\vol}(z)^{-1}\tilde{r}_{1}(x,z)
\]
we see that both sides of the equality must be invariant to the values
of $y$ and $x$. Taking $y=0$, and using that $\tilde{r}_{0}(0)=r_{0}(0)=0$,
we thus obtain
\[
\tilde{\vol}(z)^{-1}\tilde{r}_{0}(y)-P(z)\vol(z)^{-1}r_{0}(y)=0=P(z)\vol(z)^{-1}r_{1}(x,z)-\tilde{\vol}(z)^{-1}\tilde{r}_{1}(x,z)
\]
for all $(y,x)\in\reals^{G+K}$ and $z\in\mathcal{Z}_{0}$. Deduce
from the first equality that
\begin{equation}
\tilde{r}_{0}(y)=\tilde{\vol}(z)P(z)\vol(z)^{-1}r_{0}(y)\label{eq:rtildertue}
\end{equation}
for all $(y,z)\in\reals^{G}\times\mathcal{Z}_{0}$. Since only the
r.h.s.\ depends on $z$, and $r_{0}(y)$ is surjective, it follows
-- e.g.\ by considering values $\{y^{i}\}_{i=1}^{G}$ such that
$r_{0}(y^{i})=e_{i}$, for $e_{i}$ the $i$th column of $I_{G}$
-- that $\tilde{\vol}(z)P(z)\vol(z)^{-1}$ cannot depend on $z$.
Hence, fixing a $z_{0}\in\mathcal{Z}_{0}$, we have that 
\begin{equation}
\tilde{\vol}(z)P(z)\vol(z)^{-1}=\tilde{\vol}(z_{0})P(z_{0})\vol(z_{0})^{-1}\eqdef Q\label{eq:sigPsig}
\end{equation}
for all $z\in\mathcal{Z}_{0}$.

It follows from \eqref{condzid} and \eqref{sigPsig} that
\[
\tilde{r}_{0}(y)+\tilde{r}_{1}(x,z)=Q[r_{0}(y)+r_{1}(x,z)]
\]
for all $(y,x,z)\in\reals^{G+K}\times\mathcal{Z}_{0}$. Further, rearranging
\eqref{sigPsig} yields
\[
P(z)=\tilde{\vol}(z)^{-1}Q\vol(z)
\]
for all $z\in\mathcal{Z}_{0}$ Since $P(z)\in\sorths(G)$, we have
\[
I_{G}=P(z)P(z)^{\trans}=\tilde{\vol}(z)^{-1}Q\vol^{2}(z)Q^{\trans}\tilde{\vol}(z)^{-1}
\]
and hence
\begin{equation}
\tilde{\vol}^{2}(z)=Q\vol^{2}(z)Q^{\trans},\label{eq:stildes}
\end{equation}
for all $z\in\mathcal{Z}_{0}$ , so that the r.h.s.\ is indeed a
diagonal matrix, as claimed.

Finally, we recall that the scale normalisation \eqref{snorm} entails
that $\tilde{\vol}^{2}(z^{\ast})=I_{p}=\vol^{2}(z^{\ast})$ for some
$z^{\ast}\in\mathcal{Z}$. If $z^{\ast}\in\mathcal{Z}_{0}$, then
we obtain immediately from \eqref{stildes} that
\[
I_{G}=\tilde{\vol}^{2}(z^{\ast})=Q\vol^{2}(z^{\ast})Q^{\trans}=QQ^{\trans},
\]
and hence $Q\in\orths(p)$. If $z^{\ast}\notin\mathcal{Z}$, then
our assumption that $\nu$ has strictly positive mass in every neighbourhood
of $z^{\ast}$ implies that there exists a sequence $\{z_{n}\}$ in
$\mathcal{Z}_{0}$ with $z_{n}\goesto z^{\ast}$. Hence, by \eqref{stildes},
and the maintained continuity of $\tilde{s}$ and $s$ at $z^{\ast}$,
\[
I_{G}=\tilde{s}^{2}(z^{\ast})=\lim_{n\goesto\infty}\tilde{s}^{2}(z_{n})=Q\left[\lim_{n\goesto\infty}\vol^{2}(z_{n})\right]Q^{\trans}=Qs^{2}(z^{\ast})Q^{\trans}=QQ^{\trans}
\]
so that again $Q\in\orths(p)$. That $\det Q>0$ follows by the same
arguments as which yielded \eqref{detQpos} in the proof of \thmref{matzkin}.
\end{proof}

\subsection{Proof of \thmref{svarhetero}}

The argument is analogous to that given in the proof of \thmref{shockid},
with \thmref{semhet} now playing the role of \thmref{matzkin}. We
now make the identification
\begin{align*}
(Y,X,Z,U) & =(z_{t},\b z_{t-1}^{(1)},(\b z_{t-1}^{(2)},v_{t-1}),\err_{t}), & (r_{0},r_{1},\vol,f_{U}) & =(\fe_{0},\b{\fe}_{1},\sigma,\den),
\end{align*}
and $(\Gamma_{0},\Gamma_{1},\Sigma,\Phi)=(\fespc_{0},\bigfspc_{1},\set S,\denspc)$.
Observe, in particular, that under our assumptions, $Z=(\b z_{t-1}^{(2)},v_{t-1})$
is supported on $\mathcal{Z}=\reals^{d_{(2)}}\times\mathcal{V}$,
with a distribution that (for every $t\geq1$) is equivalent to $\nu=\leb_{\reals^{d_{(2)}}}\otimes\mu_{v}$,
where $\leb_{\reals^{d_{(2)}}}$ denotes Lebesgue measure on $\reals^{d_{(2)}}$
(see the discussion following \eqref{nlVARdens} above, which also
applies here). Since $v_{t-1}$ is independent of $(\b z_{t-1}^{(1)},\b z_{t-1}^{(2)})$,
and the latter has a distribution that is equivalent to $\leb_{\reals^{kp}}$,
it follows that $X=\b z_{t-1}^{(1)}$ has, conditionally on $Z=(\b z_{t-1}^{(2)},v_{t-1})$,
a continuous distribution that is supported on the whole of $\reals^{K}=\reals^{d_{(1)}}$.

With these definitions, it is readily verified that the nonlinear
SEM satisfies \assref{het}, with the only exceptions that $\det D\tilde{r}_{0}(y)\neq0$
a.e., for each $\tilde{r}_{0}\in\Gamma_{0}$ rather than necessarily
being strictly positive a.e.; and that the location normalisation
$\tilde{r}_{0}(0)=0$ is now imposed. These can be handled by the
same arguments as were used in the proof of \thmref{shockid}, whereupon
an application of \thmref{semhet} yields the result.\hfill\qedsymbol{}

\end{document}